\newcommand*\rel@kern[1]{\kern#1\dimexpr\macc@kerna}
\newcommand*\widebar[1]{
  \begingroup
  \def\mathaccent##1##2{
    \rel@kern{0.8}
    \overline{\rel@kern{-0.8}\macc@nucleus\rel@kern{0.2}}
    \rel@kern{-0.2}
  }
  \macc@depth\@ne
  \let\math@bgroup\@empty \let\math@egroup\macc@set@skewchar
  \mathsurround\z@ \frozen@everymath{\mathgroup\macc@group\relax}
  \macc@set@skewchar\relax
  \let\mathaccentV\macc@nested@a
  \macc@nested@a\relax111{#1}
  \endgroup
}
\DeclareMathOperator{\tr}{tr}
\newcommand{\cM}{\mathscr{M}}
\newcommand{\cN}{\mathscr{N}}
\newcommand{\cH}{\mathscr{H}}
\newcommand{\cK}{\mathscr{K}}
\newcommand{\cG}{\mathscr{G}}
\newcommand{\cB}{\mathscr{B}}
\newcommand{\cS}{\mathscr{S}}
\newcommand{\cP}{\mathscr{P}}
\newcommand{\cE}{\mathscr{E}}
\newcommand{\id}{1}
\renewcommand{\vec}[1]{\mathbold{#1}}
\newcommand{\spat}[2]{\Delta(\vec{#1}\!/\!#2)}
\newcommand{\spatn}[2]{\Delta(#1\!/\!#2)}
\newcommand{\transp}{^{\scriptscriptstyle T}}
\newtheorem{lemma}{Lemma}
\newtheorem{proposition}[lemma]{Proposition}
\newtheorem{theorem}[lemma]{Theorem}
\newtheorem{corollary}[lemma]{Corollary}
\theoremstyle{plain}
\newtheorem{definition}[lemma]{Definition}
\newenvironment{finite}
{\vspace{0.2cm}\begin{mdframed}\footnotesize}
{\end{mdframed}\vspace{0.2cm}}
\def\id{{\rm id}}                            
\def\Rl{{\mathbb R}}\def\C{{\mathbb C}}     
\def\norm #1{\Vert #1\Vert}
\def\tr{{\rm Tr}}
\newcommand\Scp[2]{\ensuremath{\langle #1 \vert #2 \rangle}}
\newcommand*{\half}{\frac{1}{2}}
\newcommand*{\vphi}{\varphi}
\newcommand{\dom}{\mathcal{D}}
\begin{document}

\title{R\'enyi divergences\\ as weighted non-commutative vector valued $L_p$-spaces}

\author{Mario Berta}
\address{Department of Computing, Imperial College London, London, United Kingdom}
\email{m.berta@imperial.ac.uk}

\author{Volkher B. Scholz}
\address{Department of Physics, Ghent University, Ghent, Belgium}
\email{volkher.scholz@gmail.com}

\author{Marco Tomamichel}
\address{Centre for Quantum Software and Information, University of Technology Sydney, Sydney, Australia}
\email{marcotom.ch@gmail.com}


\begin{abstract}
  We show that Araki and Masuda's weighted non-commutative vector valued $L_p$-spaces [Araki \& Masuda, Publ. Res. Inst. Math. Sci., 18:339 (1982)] correspond to an algebraic generalization of the sandwiched R\'enyi divergences with parameter $\alpha = \frac{p}{2}$. Using complex interpolation theory, we prove various fundamental properties of these divergences in the setup of von Neumann algebras, including a data-processing inequality and monotonicity in $\alpha$. We thereby also give new proofs for the corresponding finite-dimensional properties. We discuss the limiting cases $\alpha\to \{\frac{1}{2},1,\infty\}$ leading to minus the logarithm of Uhlmann's fidelity, Umegaki's relative entropy, and the max-relative entropy, respectively. As a contribution that might be of independent interest, we derive a Riesz-Thorin theorem for Araki-Masuda $L_p$-spaces and an Araki-Lieb-Thirring inequality for states on von Neumann algebras. 
\end{abstract}

\maketitle


\section{Introduction}

Information-theoretic concepts are often developed under the assumption that the underlying physical systems are described by finite-dimensional Hilbert spaces. It is, however, of fundamental interest to understand which finite-dimensional concepts generalize to physical systems modeled by von Neumann algebras, e.g.\ certain quantum field theories. Moreover, the translation of finite-dimensional arguments to a more abstract theory often requires new proof ideas or at least a streamlining of the original arguments---thereby leading us to a better understanding of the finite-dimensional special case as well.

In this work we explore a generalization to the setup of von Neumann algebras of a family of divergences, the sandwiched R\'enyi divergences~\cite{lennert13,wilde13} (see also~\cite[Sec.~3.3]{jaksic10}), which have found operational meaning and applications in quantum information theory (see, e.g.,~\cite{wilde13,mosonyiogawa13,mosonyi14-2,tomamichelww14,mybook}). A connection between sandwiched R\'enyi divergences and weighted non-commutative $L_p$-spaces has been exploited already in the defining works~\cite{mytutorial12,lennert13,wilde13} and by Beigi~\cite{beigi13} for finite-dimensional systems. In generalizing sandwiched R\'enyi divergences to the algebraic setting it is thus immediate to look for a connection between those and non-commutative $L_p$ spaces defined on von Neumann algebras. This is of course complicated by the fact that when dealing with von Neumann algebras the existence of a trace is not guaranteed and states are not represented by density operators. Nevertheless, quite some progress has been achieved in the recent decades on defining and studying non-commutative $L_p$ spaces defined with respect to von Neumann algebras (see, e.g.,~\cite{pisier03} for an overview). The most well-known such theory are Haagerup's non-commutative $L_p$-spaces~\cite{haagerup79}, as well as their weighted versions (see, e.g.,~\cite{haagerup10}). These are based on suitably extending the algebra such that a trace becomes available again. Another generalization of the commutative theory to von Neumann algebras was given by Araki and Masuda~\cite{araki82}, who define $p$-norms on state vectors (corresponding to purifications in the finite-dimensional case) in the underlying Hilbert space $\cH$, not directly on the state space.

We find that the latter offers a natural generalization of the sandwiched R\'enyi divergences with parameter $\alpha = \frac{p}{2}$, and we hereafter call this generalization Araki-Masuda divergences. Extending their work, we show various properties of this divergence, including a data-processing inequality and monotonicity in $\alpha$. We will also discuss their relation with other divergences such as Umegaki's~\cite{umegaki62} relative entropy (which was generalized to von Neumann algebras by Araki~\cite{araki76}) and Petz' generalization of R\'enyi divergences~\cite{petz85,petz86} (see also~\cite{ohya93} for an overview). Finally, we conjecture that the Araki-Masuda divergences characterize the strong converse exponent of binary hypothesis testing on von Neumann algebras, complementing recent work by Jak\u{s}i\'c \emph{et al.}~\cite{jaksic10}. We hope that our work will lead to further generalizations of information theoretic results to von Neumann algebras such as in~\cite{furrer11}.

Shortly after our manuscript first appeared online, Jen\u{c}ov\'a~\cite{jencova16} proposed another extension of the sandwiched R\'enyi entropies to von Neumann algebras, using instead Kosaki's definition of non-commutative $L_p$-spaces~\cite{Kosaki_1984}. These are in turn based on the already mentioned Haagerup non-commutative $L_p$-spaces, and defined with respect to the state space of the von Neumann algebra in question and not for elements of the underlying Hilbert space. This basic difference implies that the two approaches are complementary to each other, and hence it depends on the exact problem studied which approach is more suitable. For example, Jen\u{c}ov\'a was able to prove the data-processing inequality for positive maps (and not only for completely positive maps as in our setting). However, since Jen\u{c}ov\'a's approach is restricted to the state space, her definition was restricted to values of $\alpha$ larger than one while our approach gives rise to a natural definition for all $\alpha\geq\frac{1}{2}$. In fact, in a follow-up paper Jen\u{c}ov\'a was able to show that her definition is equivalent to our definition for $\alpha>1$ and also obtained an expression for values $\alpha\in\left(\frac{1}{2},1\right)$ in her setting~\cite{jencova17}. From a broader perspective, Jen\u{c}ov\'a's approach is mathematically appealing as it is based on a well established theory. In contrast, Araki and Masuda's definition that we use is less well studied, but offers the advantage that it is defined directly in terms of objects from modular theory. In certain physical applications, such as algebraic quantum field theory, these objects have very explicit expressions (i.e.~as given by the Bisognano-Wichmann theorem~\cite{Bisognano_1975}) and may thus be more suitable for calculations. 


\subsubsection*{Outline} This paper is organized as follows. In Section~\ref{sec:algebraic} we introduce our notation and discuss a few algebraic concepts that we need in the following. In Section~\ref{sec:spaces} we slightly generalize the definition of Araki and Masuda's weighted non-commutative vector valued $L_p$-spaces and discuss some properties. In particular, we prove a Riesz-Thorin theorem for Araki-Masuda $L_p$-spaces. In Section~\ref{sec:divergences} we define the Araki-Masuda divergences, which correspond to an algebraic generalization of the sandwiched R\'enyi divergences. We then discuss various properties and show in particular an Araki-Lieb-Thirring inequality for states on von Neumann algebras. We conclude with Section~\ref{sec:strong_converse} where we discuss our conjecture for an operational interpretation.


\section{Algebraic setup}\label{sec:algebraic}

\subsection{Notation}

We tried to optimize our notation so that the manuscript is most accessible to the quantum information community, thereby sometimes disregarding established conventions in operator theory. We also comment, in framed boxes, on how to translate certain concepts into the language of finite-dimensional quantum theory (sometimes we comment on the commutative case as well).

In these notes, $\cM$ denotes a $W^*$-algebra, $(\cdot)^{\dag}$ is its involution, and $\id \in \cM$ its multiplicative identity. An element $x \in \cM$ is called positive if $x = a^{\dag} a$ for some $a \in \cM$, and the cone of positive elements of $\cM$ is denoted by $\cM_+$.
We denote by $\cP(\cM)$ the space of normal positive linear functionals on $\cM$, i.e.\ elements of the predual of $\cM$ that map $\cM_+$ onto the positive real axis. The subset $\cS(\cM) := \{ \rho \in \cP(\cM) : \rho(\id) = 1 \}$ contains normal states.

For two functionals $\rho, \sigma \in \cP(\cM)$ we write $\rho \ll \sigma$ and say that $\rho$ is supported on (the support of) $\sigma$ if and only if $\sigma(x) = 0 \implies \rho(x) = 0$ for all $x \in \cM$. Strictly stronger, we say that $\rho$ is dominated by $\sigma$ and write $\rho \lll \sigma$ if and only if there exists a positive constant $C$ such that $\rho(x) \leq C \sigma(x)$ for all $x \in \cM_+$.

Moreover, let $\pi : \cM \to \cB(\cH)$ be a normal $*$-representation of $\cM$ as bounded operators acting on a Hilbert space $\cH$ such that $\pi(\cM) \subset \cB(\cH)$ forms a von Neumann algebra. (We only consider normal $*$-representations in this work.) The Hilbert space $\cH$ has an inner product, $\langle \cdot | \cdot \rangle$, which is taken to be anti-linear in the first variable. We say that a vector $\vec{\omega} \in \cH$ implements a functional $\omega \in \cP(\cM)$ if and only if $\omega(a) = \langle \vec{\omega} | \pi(a) \vec{\omega} \rangle$ for all $a \in \cM$. We will consistently use the same Greek letter to denote functionals and vectors that are related in this way, i.e.\ every vector implicitly defines a corresponding functional with the same symbol. 
For any $\sigma \in \cP(\cM)$ we define its support projector $P_{\sigma} \in \cM$ as the minimal projector satisfying $\sigma(P_{\sigma}) = 1$. We define the set $\cH_{\sigma}^{*} := \{ \vec{\rho} \in \cH : \rho \lll \sigma \}$ and its closure, the subspace $\cH_{\sigma}$, which contains all vectors $\vec{\rho}$ implementing functionals with $\rho \ll \sigma$. 
The Gelfand-Naimark-Segal (GNS) construction provides a $*$-representation $\pi_{\sigma} : \cM \to \cB(\cG_{\sigma})$  as well as an implementation $\vec{\sigma} \in \cG_{\sigma}$ of $\sigma$ that is faithful in $\cG_{\sigma}$. Moreover, we may embed $\cG_{\sigma}$ as $\cH_{\sigma}$ into $\cH$ if the latter space allows for an implementation of $\sigma$. In this case we have $\pi_{\sigma}(x) = \pi(P_{\sigma} x P_{\sigma})$.

\begin{finite}
We will often draw on intuition from the finite-dimensional case. Let $\cM_n$ denote the algebra of $n \times n$ matrices, which we conveniently represent as acting on the first tensor factor of a Hilbert space $\mathbb{C}^n \otimes \mathbb{C}^n$, i.e.\ $\pi(x) = x \otimes \id_n$. The trace $\tr$ is implemented by a vector $\vec{\tau} = \sum_{i=1}^n \vec{e}_{i} \otimes \vec{e}_{i}$ for some orthonormal basis $\{ e_{i} \}_{i=1}^n$ of $\mathbb{C}^n$. We denote the transpose with regards to this basis by $(\cdot)\transp$ so that $(x \otimes \id_n) \vec{\tau} = (\id_n \otimes x\transp) \vec{\tau}$. For any $\omega \in \cP(\cM_n)$ we define first the positive element $D_{\omega} \in \cM_n$ via the relation $\omega(x) = \tr\, D_{\omega} x $ and then a vector $\vec{\omega} = (D_{\omega}^{\nicefrac12} \otimes \id) \vec{\tau}$ implementing $\omega$. The vector $\vec{\omega}$ is commonly called the (standard) purification of the density matrix $D_{\omega}$ when $\omega$ is a state. 
The commutant in this case comprises all matrices acting on the second tensor space.
\end{finite}


\subsection{Spatial derivative and relative modular operator} 

Let $\sigma \in \cP(\cM)$ and let $\pi$ be any *-representation of $\cM$ into $\cB(\cH)$. The GNS construction further provides a *-representation $\pi_{\sigma}$ into $\cB(\cH_{\sigma})$ and a vector $\vec{\sigma} \in \cH_{\sigma}$ implementing $\sigma$. Following~\cite[Ch.~4]{ohya93}, for every $\vec{\rho} \in\cH_{\sigma}^*$, we  define the map $R^{\sigma}(\vec{\rho}) : \cH_{\sigma} \to \cH$~by
\begin{align}\label{eq:defR}
  R^{\sigma}(\vec{\rho}) : \pi_{\sigma}(a)\vec{\sigma} \mapsto \pi(a) \vec{\rho}, \quad a \in \cM \,.
\end{align}
Note that this does not overly specify the map since $\pi_{\sigma}(a)\vec{\sigma} = 0$ implies $\pi(a) \vec{\rho} = 0$ for any $a \in \cM$ when $\rho \lll \sigma$.
This map is bounded, i.e.\ there exists a constant $C$ such that for all $a \in \cM$, 
\begin{align}\label{eq:itsbounded}
\| R^{\sigma}(\vec{\rho}) \pi_{\sigma}(a) \vec{\sigma} \|^2 = \| \pi(a) \vec{\rho} \|^2 = \rho(a^{\dag} a) \leq C\, \sigma(a^{\dag} a) = C \| \pi_{\sigma}(a) \vec{\sigma} \|^2 \,,
\end{align}
where the norm is the one induced by the scalar product on $\cH$.
Furthermore, it is easy to verify that $\pi(b) R^{\sigma}(\vec{\rho}) = R^{\sigma}(\vec{\rho}) \pi_{\sigma}(b)$ for any $b \in \cM$, and from this we can conclude that the operator $R^{\sigma}(\vec{\rho}) R^{\sigma}(\vec{\rho})^{\dag}$ lies in the commutant of $\cM$ in $\cB(\cH)$.\footnote{The commutant $\cM^\prime$ of a von Neumann algebra $\cM \subset \cB(\cH)$ is itself a von Neumann algebra and is given by all elements in $\cB(\cH)$ which commute with $\cM$, $\cM^\prime = \{X \in \cB(\cH)\,:\, [X,Y] = 0 \;\forall\,Y \in \cM \}$.} If there exists an element $\vec{\sigma} \in \cH$ which implements $\sigma$, it follows that the map $R^\sigma(\vec{\sigma})$ is a partial isometry.

For every $\vec{\omega} \in \cH$, the functional $\vec{\rho} \mapsto \langle \vec{\omega} | R^{\sigma}(\vec{\rho}) R^{\sigma}(\vec{\rho})^{\dag} \vec{\omega} \rangle$ constitutes a quadratic form to which we associate a positive self-adjoint operator $\spat{\omega}{\sigma}^\half$ on $\cH$. That is, the spatial derivative is defined on its domain $\dom\big(\spat{\omega}{\sigma}^\half\big) = \cH_{\sigma}^*$ as
\begin{align}
  \langle \vec{\rho} | \spat{\omega}{\sigma}  \vec{\rho} \rangle = \Scp{\spat{\omega}{\sigma}^\half  \vec{\rho}}{\spat{\omega}{\sigma}^\half  \vec{\rho}} = \langle \vec{\omega} | R^{\sigma}(\vec{\rho}) R^{\sigma}(\vec{\rho})^{\dag} \vec{\omega} \rangle, \quad \vec{\rho} \in \cH_{\sigma}^* \,,
\end{align}
and thus it is also defined on $\cH_{\sigma}$ (since $\cH_{\sigma}^*$ is dense therein). If $\sigma$ is faithful and $\cH = \cH_{\sigma}$ this is called the relative modular operator and its domain is dense in $\cH$. Functions of the spatial derivative are defined via the functional calculus on its domain. In particular, the operator $\spat{\omega}{\sigma}$ is defined as the square of $\spat{\omega}{\sigma}^\half$. We note however, that the domain of powers of $\spat{\omega}{\sigma}$ may be different than the one of $\spat{\omega}{\sigma}$. An example are solely imaginary powers, as the operator $\spat{\omega}{\sigma}^{it}$ is an isometry for $t \in \Rl$ and hence its domain is the whole space $\cH$.

\begin{finite}
  If $\rho \ll \sigma$ then the operator $R^{\sigma}(\vec{\rho})$ is embedded into $\mathbb{C}^n \otimes \mathbb{C}^n$ as $\id \otimes \big( D_{\rho}^{\nicefrac12} \big)\transp \big( D_{\sigma}^{-\nicefrac12} \big)\transp$, where the inverse is taken on the support of $D_{\sigma}$ in case $\sigma$ is not faithful. The spatial derivative is represented as $\spat{\omega}{\sigma} = D_{\omega} \otimes \big(D_{\sigma}^{-1}\big)\transp$. If $\sigma$ is faithful this is the relative modular operator.
\end{finite}

\begin{finite}
  Let us also consider the commutative case. Let $(X,\Sigma)$ be a measurable space and let $\rho$ be a probability measure and $\sigma$ be a positive measure on the $\sigma$-algebra $\Sigma$. Assume that both are absolutely continuous with respect to another positive measure $\tau$ (we can take, e.g., $\tau = \rho + \sigma$). Then $\rho$, $\sigma$ are naturally regarded as positive functionals on the commutative von Neumann algebra $\cM = L^\infty(X,\Sigma,\tau)$, the almost-everywhere bounded functions. Performing the GNS construction with respect to $\tau$ gives rise to the Hilbert space $L^2(X,\Sigma,\tau)$ on which $L^\infty(X,\Sigma,\tau)$ acts by point-wise multiplication. The Radon-Nikodyn derivatives $\frac{d\rho}{d\tau}$ and $\frac{d\sigma}{d\tau}$ of $\rho$ and $\sigma$ with respect to $\tau$ are positive elements in $L^1(X,\Sigma,\tau)$, and hence their square roots are elements of $L^2(X,\Sigma,\tau)$ which we denote with $\vec{\rho}$ and $\vec{\sigma}$. Considering the action of the map $R^\sigma(\vec{\rho})$ in~\eqref{eq:defR} we see that
    \begin{align}\label{eq:classical1}
      R^\sigma(\vec{\rho}) = \vec{\rho}\,\vec{\sigma}^{-1} = \left(\frac{d\rho}{d\tau}\right)^{\half}\,\left(\frac{d\sigma}{d\tau}\right)^{-\half}\,,
    \end{align}
    the inverse being defined on the support of $\sigma$ and put to $0$ otherwise. Correspondingly, we find for the spatial derivative $\spat{\omega}{\sigma}$ for $\vec{\omega} \in L^2(X,\Sigma,\tau)$ the form
    \begin{align}\label{eq:classical2}
      \spat{\omega}{\sigma} = \vec{\omega}^2 \, \vec{\sigma}^{-2}\,,
    \end{align}
    where again powers of elements in $L^2(X,\Sigma,\tau)$ are defined by the point-wise multiplication of functions.
\end{finite}

Our arguments are mostly based on complex interpolation theory applied to products of spatial derivatives. In particular, the monotonicity in $\alpha$, our extension of Araki-Lieb-Thirring inequality to von Neumann algebras as well as our version of the Riesz-Thorin theorem are applications of the following Lemma. Its proof is deferred to Appendix~\ref{app:thelemma}. We define the complex strip $S_1 = \{z \in \C\,:\,0\leq \Re(z) \leq 1\}$.

\begin{lemma}\label{lem:main_interpolation}
  Let $\cH$, $\cK$ be two Hilbert spaces, $\cM \subset \cB(\cH)$, $\cN \subset \cB(\cK)$ two von Neumann algebras and $V :\cH \to \cK$ a bounded mapping. Consider two affine holomorphic functions $g(z) = g_1 z + g_0$, $h(z) = h_1 z + h_0$, with $g_0,g_1,h_0,h_1 \in \Rl$ and vectors $\vec{\vphi}, \vec{\rho} \in \cH$, $\vec{\omega} \in \cK$ and  $\sigma \in \cP(\cM)$, $\tau \in \cP(\cN)$ such that for $x \in \{0,1\}$ the statement $\vec{\vphi} \in \dom(\spat{\rho}{\sigma}^{g(x)})$ holds. If the vector-valued function
  \begin{align}
    f: S_1 \to \cK,\quad z \mapsto \spat{\omega}{\tau}^{h(z)}\,V\,\spat{\rho}{\sigma}^{g(z)}\vec{\vphi}
  \end{align}
  satisfies $C_x := \sup_{t \in \mathbb{R}} \norm{f(x+it)} < \infty$ for $x \in \{0,1\}$ (i.e.\ it is uniformly bounded on the boundaries of $S_1$), then $f(z)$ is holomorphic in the interior of $S_1$ and satisfies
  \begin{align}
    \norm{f(\theta)} \leq C_0^{1-\theta}\,C_1^{\theta}\,, \qquad \textrm{for $0 \leq \theta \leq 1$.}
  \end{align}
\end{lemma}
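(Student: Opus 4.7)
The approach is a vector-valued Hadamard three-lines argument: test $f(z)$ against a dense family of vectors in $\cK$, reduce to scalar holomorphic functions on $S_1$, apply the classical three-lines theorem, and reassemble the vector via Riesz representation and Dunford's upgrade from weak to norm holomorphy.

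Let $D := \dom(\spat{\omega}{\tau}^{h(0)}) \cap \dom(\spat{\omega}{\tau}^{h(1)})$, which is dense in $\cK$ by the spectral theorem for $\spat{\omega}{\tau}$. For each $\vec{\psi} \in D$ define
\begin{align*}
  \phi_{\vec{\psi}}(z) := \Scp{\spat{\omega}{\tau}^{\overline{h(z)}} \vec{\psi}}{V\,\spat{\rho}{\sigma}^{g(z)} \vec{\vphi}}, \quad z \in S_1,
\end{align*}
which agrees with $\Scp{\vec{\psi}}{f(z)}$ wherever the latter is a priori defined, by self-adjointness of $\spat{\omega}{\tau}$ and the reality of the coefficients of $h$. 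The imaginary parts of $g(z)$ and $h(z)$ contribute isometric phases in the spectral calculus, so $\|\spat{\rho}{\sigma}^{g(z)} \vec{\vphi}\|$ depends only on $\Re z$; log-convexity of $\lambda \mapsto \lambda^{2\Re g(z)}$ then yields
\begin{align*}
  \|\spat{\rho}{\sigma}^{g(z)} \vec{\vphi}\|^2 \leq \|\spat{\rho}{\sigma}^{g(0)} \vec{\vphi}\|^2 + \|\spat{\rho}{\sigma}^{g(1)} \vec{\vphi}\|^2
\end{align*}
uniformly in $z \in S_1$, with an analogous bound for $\spat{\omega}{\tau}^{\overline{h(z)}} \vec{\psi}$. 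Both are finite by hypothesis, so $\phi_{\vec{\psi}}$ is uniformly bounded on $S_1$; dominated convergence against the envelope also gives norm continuity of $z\mapsto \spat{\rho}{\sigma}^{g(z)}\vec{\vphi}$ and of $z\mapsto \spat{\omega}{\tau}^{\overline{h(z)}}\vec{\psi}$, hence continuity of $\phi_{\vec{\psi}}$ on $S_1$.

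Holomorphy of $\phi_{\vec{\psi}}$ on the open strip follows by spectral truncation. With cutoffs $P^{(N)}_\sigma := E_\sigma([N^{-1},N])$ and $P^{(N)}_\tau := E_\tau([N^{-1},N])$, the truncated scalars
\begin{align*}
  \phi^{(N)}_{\vec{\psi}}(z) := \Scp{P^{(N)}_\tau \spat{\omega}{\tau}^{\overline{h(z)}} \vec{\psi}}{V\,P^{(N)}_\sigma \spat{\rho}{\sigma}^{g(z)} \vec{\vphi}}
\end{align*}
are entire in $z$, since truncated complex powers of a positive self-adjoint operator are bounded, entire operator families. Pointwise convergence $\phi^{(N)}_{\vec{\psi}} \to \phi_{\vec{\psi}}$ together with the uniform envelope above makes Vitali's theorem applicable, so $\phi_{\vec{\psi}}$ is holomorphic on the interior of $S_1$.

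Applying the classical scalar Hadamard three-lines theorem to the bounded, continuous $\phi_{\vec{\psi}}$ with boundary bound $|\phi_{\vec{\psi}}(x+it)| \leq \|\vec{\psi}\|\,C_x$ for $x \in \{0,1\}$ yields $|\phi_{\vec{\psi}}(\theta+it)| \leq \|\vec{\psi}\|\,C_0^{1-\theta}C_1^{\theta}$. Density of $D$ extends the antilinear functional $\vec{\psi}\mapsto \phi_{\vec{\psi}}(z)$ to a bounded functional on all of $\cK$ of norm at most $C_0^{1-\Re z}C_1^{\Re z}$; Riesz representation then identifies a unique vector $f(z) \in \cK$ of this norm that coincides with the original $f$ wherever defined. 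Weak holomorphy on the dense subspace $D$ together with local uniform boundedness upgrades to norm holomorphy on the interior by Dunford's theorem. I expect the main obstacle to be the domain tracking across the entire strip: verifying that the spectral log-convexity envelope survives the non-commutativity between $V$ and the modular-type operators, and ensuring that the Riesz-reassembled vector genuinely continues the original boundary $f$ to a holomorphic $\cK$-valued function rather than to a merely measurable one.
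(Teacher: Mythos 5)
Your proposal is correct and takes essentially the same route as the paper's own proof (Lemma~\ref{lem:main_interpolation_app}): pair $f$ with a dense family of test vectors, apply Hadamard's three-lines theorem to the resulting scalar functions, recover the vector and its norm bound by duality, and upgrade weak to norm holomorphy. The only differences are cosmetic---you obtain holomorphy of the scalar functions via spectral truncation and Vitali where the paper invokes Takesaki's lemma on complex powers of positive operators together with Hartogs' theorem, and your Riesz step is best phrased as showing $V\spat{\rho}{\sigma}^{g(z)}\vec{\vphi}\in\dom\big(\spat{\omega}{\tau}^{h(z)}\big)$ for all $z\in S_1$ (which works because your $D$ is a core for each $\spat{\omega}{\tau}^{\overline{h(z)}}$), so that $f(\theta)$ itself is genuinely defined in the interior, exactly the point the paper makes explicitly.
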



\section{Non-commutative Araki-Masuda $L_p$-norms}\label{sec:spaces}

\subsection{Definition}
 
We first recall the definition of weighted non-commutative vector valued $L_p$-spaces due to Araki and Masuda~\cite{araki82} and extend it to the case of a non-faithful reference state $\sigma$ by means of the spatial derivative. 

\begin{definition}\label{def:pnorms}
  Let $\sigma \in \cP(\cM)$ and let $\pi$ be a $*$-representation of $\cM$ in $\cB(\cH)$. We define the following norms on elements of $\cH$.
   For $2\leq p \leq \infty$ we define $\sigma$-weighted $p$-norm of $\vec{\rho} \in \cH$ by
  \begin{align}\label{eq:pnorm1}
    \| \vec{\rho}\| _{p,\sigma}\ := \sup_{\vec{\omega} \in \cH,\,\|\vec{\omega}\|=1} \left\|\spat{\omega}{\sigma}^{\frac12 -\frac{1}{p}} \vec{\rho} \right\| 
  \end{align}
  if $\rho \ll \sigma$ (which may be infinite) and $+\infty$ otherwise.
  For $1\leq p < 2$ we define
  \begin{align}\label{eq:pnorm2}
    \|\vec{\rho}\|_{p,\sigma}\ := \inf_{\vec{\omega} \in \cH,\,\|\vec{\omega}\|=1,\, \omega^\prime \gg \rho^\prime} \left\|\spat{\omega}{\sigma}^{\frac12-\frac{1}{p}} \vec{\rho} \right\| \,,
  \end{align}
  where $\omega^\prime(a'):=\langle \vec{\omega} | a'  \vec{\omega} \rangle$ for all $a'\in \pi(\cM)'$.
\end{definition}

For a fixed, faithful $\sigma \in \cP(\cM)$ these quantities constitute norms on $\cG_{\sigma}$ as shown in~\cite[Thm.~1]{araki82}. Generally, for $1 \leq p < 2$ the quantity is only a semi-norm.
We have $\|\vec{\rho}\|_{2,\sigma} \leq \|\vec{\rho}\|$ with equality when $\rho \ll \sigma$.
For $p > 2$ the norms are finite when $\rho \lll \sigma$ but they can be infinite in general, even when $\rho \ll \sigma$.

Moreover, we note that since the support projection of the spatial derivative $\spat{\omega}{\sigma}$ is contained in the support of $\pi(P_\sigma)$, we have that $\| \vec{\rho}\| _{p,\sigma} = \| \pi(P_{\sigma}) \vec{\rho}\| _{p,\sigma}$ for $1\leq p \leq \infty$. Hence in estimating these norms we can safely assume that we are working on the Hilbert space $\cH_\sigma$, since elements in its complement are projected out.

\begin{finite}
  Let us take a closer look at these expressions in our standard representation.
 We find
  \begin{align}\label{eq:thisoneworks}
   \Big\|\spat{\omega}{\sigma}^{\frac{1-2}{2p}} \vec{\rho} \Big\|^2
   = \bigg\| D_{\omega}^{\frac12-\frac{1}{p}} D_{\rho}^{\frac12} \otimes \Big(D_{\sigma}^{\frac{1}{p} - \frac12}\Big)\transp  \vec{\tau} \bigg\|^2
   = \tr\, D_{\omega}^{1-\frac{2}{p}} D_{\rho}^{\frac12} D_{\sigma}^{\frac{2}{p} - 1} D_{\rho}^{\frac12}  ,
   \end{align}
   which has been studied in~\cite[Def.~5]{lennert13}. Taking the supremum or infimum over $\omega$, respectively, leads to well-known $\sigma$-weighted norms,
   \begin{align}\label{eq:weightednorm}
     \| \vec{\rho} \| _{p,\sigma}\ = \Big\| D_\sigma^{\frac{1}{p} - \frac12} D_\rho^{\frac12} \Big\|_p 
     = \Bigg( \tr \Big( D_\rho^{\frac12} D_\sigma^{\frac{2}{p} -1} D_\rho^{\frac12} \Big)^{\frac{P}{2}} \Bigg)^{\frac{1}{p}} = \Bigg( \tr \Big( D_\sigma^{\frac{1}{p} - \frac12} D_\rho D_\sigma^{\frac{1}{p} - \frac12} \Big)^{\frac{P}{2}} \Bigg)^{\frac{1}{p}} \,,
   \end{align}
   where $\| \cdot \|_p$ denotes the Schatten norm of order $p$ (see also~\cite{audenaert13,bertawilde14} for related discussions). While we can determine the optimizer in the finite-dimensional case, this is not so easily done algebraically. However the variational formula around the expression~\eqref{eq:thisoneworks} generalizes to the algebraic setting as seen in~\eqref{eq:pnorm1} and~\eqref{eq:pnorm2}, if interpreted as a norm on vectors, i.e.\ as a norm on the square root of the density matrix $D_\rho^{\frac12}$ instead of $D_\rho$ itself.
\end{finite}

\begin{finite}
 In the commutative case, the optimization can be performed without any problems following~\eqref{eq:classical1} and~\eqref{eq:classical2} and we arrive at the following expressions
    \begin{align}
      \norm{\rho}_{p,\sigma} = \left(\int_X \tau(\mathrm{d}x) \left(\frac{d\rho}{d\tau}(x)\right)^{\frac{p}{2}}\,\left(\frac{d\sigma}{d\tau}(x)\right)^{1-\frac{p}{2}}\right)^{\frac{2}{p}}\,,
    \end{align}
    where we assumed that $\rho \ll \sigma$ if $p\geq 2$.
\end{finite}

The following lemma shows that this definition in fact is independent of the choice of $*$-representation. In particular, we can interpret Definition~\ref{def:pnorms} as a norm on positive normal functionals on the $W^*$-algebra instead of a norm on the vectors implementing these functionals.
\begin{lemma}
\label{lem:repindep}
  Let $\rho, \sigma \in \cP(\cM)$, and let $\pi: \cM \to \cB(\cH)$ and $\widetilde{\pi}: \cM \to \cB(\widetilde{\cH})$ be two $*$-representations with vectors $\vec{\rho} \in \cH$ and $\tilde{\vec{\rho}} \in \widetilde{\cH}$ both implementing $\rho$. Then, we have $\|\vec{\rho}\|_{p,\sigma} = \|\tilde{\vec{\rho}}\|_{p,\sigma}$ for all $1 \leq p \leq \infty$.
\end{lemma}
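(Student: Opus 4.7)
My plan is to establish representation-independence by reducing both representations to a common enlarged one via direct sum, then exploiting the intertwiner covariance of $R^\sigma(\cdot)$. Form $\hat{\pi} := \pi \oplus \widetilde{\pi}$ acting on $\hat{\cH} := \cH \oplus \widetilde{\cH}$ with the canonical isometric intertwiners $\iota : \cH \hookrightarrow \hat{\cH}$ and $\widetilde{\iota} : \widetilde{\cH} \hookrightarrow \hat{\cH}$; then $\iota\vec{\rho}$ and $\widetilde{\iota}\tilde{\vec{\rho}}$ both implement $\rho$ in the single representation $\hat{\pi}$. It then suffices to prove: (a) an isometric intertwiner preserves the norm, $\|\vec{\rho}\|_{p,\sigma} = \|\iota\vec{\rho}\|_{p,\sigma}$ (and similarly for $\widetilde{\iota}$); and (b) within a single representation, any two vectors implementing the same functional give the same $p$-norm. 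Applying (a) to both sides and (b) inside $\hat{\pi}$ yields the claim.

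The algebraic input for both sub-claims is the following direct consequence of~\eqref{eq:defR}: for any bounded $V : \cH \to \cH'$ intertwining $\pi$ with another $*$-representation $\pi'$, one has
\begin{align*}
R^\sigma(V\vec{\rho}) \;=\; V \cdot R^\sigma(\vec{\rho}), \qquad\text{hence}\qquad R^\sigma(V\vec{\rho})R^\sigma(V\vec{\rho})^\dag \;=\; V\, R^\sigma(\vec{\rho})R^\sigma(\vec{\rho})^\dag\, V^\dag.
\end{align*}
Specializing to $V = \iota$ and evaluating the defining quadratic form of $\spatn{\hat{\omega}}{\sigma}$ at $\iota\vec{\rho}$ gives, for any $\hat{\vec{\omega}} \in \hat{\cH}$,
\begin{align*}
\langle\iota\vec{\rho}\,|\,\spatn{\hat{\omega}}{\sigma}\,\iota\vec{\rho}\rangle \;=\; \langle\iota^\dag\hat{\vec{\omega}}\,|\,R^\sigma(\vec{\rho})R^\sigma(\vec{\rho})^\dag\,\iota^\dag\hat{\vec{\omega}}\rangle \;=\; \langle\vec{\rho}\,|\,\spat{\omega'}{\sigma}\,\vec{\rho}\rangle,
\end{align*}
where $\vec{\omega}' := \iota^\dag\hat{\vec{\omega}} \in \cH$ satisfies $\|\vec{\omega}'\| \leq \|\hat{\vec{\omega}}\|$; conversely, any $\vec{\omega} \in \cH$ lifts as $\iota\vec{\omega} \in \hat{\cH}$ with matching norm and form value. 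Taking the supremum (respectively infimum) over unit vectors thus settles~(a) at the exponent $s := \tfrac12 - \tfrac1p = \tfrac12$, i.e.\ $p = \infty$.

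For finite $p$ the ``lifting'' inequality (the norm computed in $\cH$ dominating that in $\hat{\cH}$) extends cleanly, because when $\hat{\vec{\omega}} = \iota\vec{\omega}$ the cross-terms in $R^\sigma$ vanish and $\spatn{\iota\omega}{\sigma}$ acts on the invariant subspace $\iota(\cH)$ as $\iota\spat{\omega}{\sigma}\iota^\dag$, so functional calculus yields $\|\spatn{\iota\omega}{\sigma}^{s}\iota\vec{\rho}\| = \|\spat{\omega}{\sigma}^{s}\vec{\rho}\|$ identically in~$s$. The harder ``compression'' direction, in which a general $\hat{\vec{\omega}}\in\hat{\cH}$ has components in both summands so that $\spatn{\hat{\omega}}{\sigma}$ mixes them non-trivially, I would handle by applying the complex interpolation Lemma~\ref{lem:main_interpolation} to the holomorphic vector-valued function $z \mapsto \spatn{\hat{\omega}}{\sigma}^{z/2}\iota\vec{\rho}$ on the strip $S_1$: its boundary norm at $\Re z = 0$ is $\|\iota\vec{\rho}\|=\|\vec{\rho}\|$, while its boundary norm at $\Re z = 1$ is $\|\spatn{\hat{\omega}}{\sigma}^{1/2}\iota\vec{\rho}\| = \|\spat{\omega'}{\sigma}^{1/2}\vec{\rho}\|$ by the computation above. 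Interpolating at $z = 2s$ and then matching the resulting log-convex estimate against the analogous one on the $\cH$-side yields the required domination of suprema/infima.

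Claim~(b) is handled by the same argument with $\iota$ replaced by a partial isometry $u' \in \hat{\pi}(\cM)'$ satisfying $u'\iota\vec{\rho} = \widetilde{\iota}\tilde{\vec{\rho}}$; such a $u'$ exists because the two vectors implement the same vector state in the common representation $\hat{\pi}$. The main technical obstacle is precisely the compression step: bridging the gap between the representation-independent first-power identity (at $s = \tfrac12$) and the fractional powers $s = \tfrac12 - \tfrac1p$ appearing in Definition~\ref{def:pnorms}, which is exactly the role of Lemma~\ref{lem:main_interpolation}. A secondary subtlety for the infimum case $1 \leq p < 2$ is the support condition $\omega' \gg \rho'$ on commutant functionals; this is preserved by the intertwiner because it characterizes the cyclic subspace of $\vec{\rho}$ inside the commutant, which $\iota$ transports faithfully.
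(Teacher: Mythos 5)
Your direct-sum reduction and the covariance $R^{\sigma}(V\vec{\rho}) = V R^{\sigma}(\vec{\rho})$ for intertwiners are sound, and they do give the identity at the exponent $s=\tfrac12$ (the case $p=\infty$, equivalently the first power of the spatial derivative). The genuine gap is your ``compression'' step for fractional powers. For a fixed unit vector $\hat{\vec{\omega}}\in\hat{\cH}$, interpolating $z \mapsto \spatn{\hat{\omega}}{\sigma}^{z/2}\,\iota\vec{\rho}$ between $\Re z=0$ and $\Re z=1$ only yields $\|\spatn{\hat{\omega}}{\sigma}^{s}\iota\vec{\rho}\| \leq \|\vec{\rho}\|^{1-2s}\,\|\spatn{\omega'}{\sigma}^{1/2}\vec{\rho}\|^{2s}$ with $\omega'=\iota^{\dag}\hat{\vec{\omega}}$; taking suprema, this bounds $\|\tilde{\vec{\rho}}\|_{p,\sigma}$ only by $\|\vec{\rho}\|_{2,\sigma}^{1-\theta}\,\|\vec{\rho}\|_{\infty,\sigma}^{\theta}$ with $\theta=1-\tfrac{2}{p}$, which by Proposition~\ref{lem:simpleinterpolate} dominates $\|\vec{\rho}\|_{p,\sigma}$ but is in general strictly larger (already in finite dimensions). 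The phrase ``matching the resulting log-convex estimate against the analogous one on the $\cH$-side'' cannot repair this: both norms lying below a common interpolated upper bound gives no comparison between them. What is needed is a per-$\hat{\vec{\omega}}$ comparison at the \emph{same} exponent, namely $\iota^{\dag}\spatn{\hat{\omega}}{\sigma}^{2s}\iota \leq \bigl(\iota^{\dag}\spatn{\hat{\omega}}{\sigma}\iota\bigr)^{2s} = \spatn{\iota^{\dag}\hat{\omega}}{\sigma}^{2s}$ for $2s\in[0,1]$, and the reversed inequality via operator convexity of $t\mapsto t^{r}$, $r\in[-1,0)$, for $1\leq p<2$. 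That is precisely the operator Jensen inequality for contractions (extended to the unbounded spatial derivative in Appendix~\ref{app:hansen}) on which the paper's proof rests; interpolation between the endpoints $p=2$ and $p=\infty$ cannot substitute for it, because the boundary data it uses lives at the wrong exponents.

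The same defect propagates into your step (b): the commutant partial isometry $u'$ with $u'\iota\vec{\rho}=\widetilde{\iota}\tilde{\vec{\rho}}$ exists and satisfies the covariance, but it is again only a contraction, so the identical fractional-power compression problem reappears there. Note also that the detour through $\pi\oplus\widetilde{\pi}$ is harmless but unnecessary: the paper defines the intertwining (partial) isometry $V:\pi(a)\vec{\rho}\mapsto\widetilde{\pi}(a)\tilde{\vec{\rho}}$ directly, deduces $V^{\dag}\spatn{\tilde{\vec{\omega}}}{\sigma}V=\spatn{V^{\dag}\!\tilde{\vec{\omega}}}{\sigma}$ from the same covariance you invoke, and concludes in one line with operator Jensen and the operator concavity/convexity of fractional powers, with the symmetry of the roles of $\vec{\rho}$ and $\tilde{\vec{\rho}}$ upgrading the inequality to equality. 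If you replace your interpolation step by that inequality, your argument (including the $1\leq p<2$ case, where the commutant support condition must additionally be tracked) goes through; as written, the proof has a gap at its central step.
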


\begin{proof}
  The fact that both $*$-representations allow for a vector implementing $\rho$ means that the mapping  $V: \cH \to \widetilde{\cH}$ defined by
  \begin{align}
    V:  V \pi(a) \vec{\rho} \mapsto \widetilde{\pi}(a) \tilde{\vec{\rho}}, \quad a \in \cM \,,
  \end{align}
  is an isometry satisfying $V\pi(x) = \widetilde{\pi}(x)V$. From~\eqref{eq:defR} follows that $R^{\sigma}(\tilde{\vec{\rho}} ) = R^{\sigma}(V \vec{\rho} ) = V R^{\sigma}(\vec{\rho} )$ and hence we find
  $V^{\dag} \spatn{\tilde{\vec{\omega}}}{\sigma} V = \spatn{V^{\dag}\!\tilde{\vec{\omega}}}{\sigma}$
for any $\tilde{\vec{\omega}} \in \widetilde{\cH}$. For $p \geq 2$ we have
\begin{align}
  \big\| \spatn{\tilde{\vec{\omega}}}{\sigma}^{\frac12-\frac{1}{p}} \tilde{\vec{\rho}} \big\|^2
  &= \big\langle \vec{\rho} \big| V^{\dag} \spatn{\tilde{\vec{\omega}}}{\sigma}^{1-\frac{2}{p}} V \vec{\rho} \big\rangle  \\
  &\leq \bigg\langle \vec{\rho} \bigg| \left( V^{\dag} \spatn{\tilde{\vec{\omega}}}{\sigma} V \right)^{1-\frac{2}{p}} \vec{\rho} \bigg\rangle  =\big\| \spatn{V^{\dag}\!\tilde{\vec{\omega}}}{\sigma}^{\frac12-\frac{1}{p}} \vec{\rho} \big\|^2 \,,
\end{align}
where the sole inequality follows from Jensen's inequality~\cite{hansen03} (see Appendix~\ref{app:hansen} for an extension to unbounded operators in our specific setting) and the operator concavity of $t \mapsto t^{r}$ for $r \in [0,1]$. Taking the suprema over $\tilde{\vec{\omega}}$ yields the inequality $\|\tilde{\vec{\rho}}\|_{p,\sigma} \leq \|\vec{\rho}\|_{p,\sigma}$ and equality follows because $\vec{\rho}$ and $\tilde{\vec{\rho}}$ are interchangeable in the above argument. For $1 \leq p < 2$ a similar argument using the operator convexity of $t \mapsto t^r$ for $r \in [-1, 0)$ yields the desired result.
\end{proof}

This allows us to introduce the notation $\| \rho \|_{p,\sigma}$ for any $\rho \in \cP(\cM)$, which refers to the norm of an arbitrary implementation of $\rho$. If $\rho \ll \sigma$ the GNS space $\cH_{\sigma}$ allows for implementations of both $\rho$ and $\sigma$ and is thus a natural choice.

\begin{finite}
 In the finite-dimensional case Lemma~\ref{lem:repindep} simply reaffirms our freedom to choose a purification. Consider a pure state $\rho$ and the trivial $*$-representation $\widetilde{\pi}: x \mapsto x$ of $\cM_n$ into itself where $\rho$ is implemented by the vector satisfying $\tilde{\vec{\rho}} \tilde{\vec{\rho}}^{\dag} = D_{\rho}$. In this representation we have $R^{\sigma}(\tilde{\vec{\rho}})R^{\sigma}(\tilde{\vec{\rho}})^{\dag} = \id_n\, \rho(D_{\sigma}^{-1})$ and $\spatn{\tilde{\vec{\omega}}}{\sigma} = D_{\sigma}^{-1} \omega(\id_n)$. The optimization turns trivial in this case and the norm evaluates to
 \begin{align}
   \| \rho \|_{p,\sigma} = \| \tilde{\vec{\rho}} \|_{p,\sigma} = \Big\| D_{\sigma}^{\frac{1}{p} - \frac12} \tilde{\vec{\rho}} \Big\| = \rho\Big( D_{\sigma}^{\frac{2}{p} - 1} \Big)^{\frac12},
 \end{align}
 which is in agreement with the expression for $\|\vec{\rho}\|_{p,\sigma}$ in~\eqref{eq:weightednorm} specialized for pure states.
\end{finite}


\subsection{Norm duality}

Araki and Masuda~\cite[Thm.~1]{araki82} show that, for any $\sigma \in \cP(\cM)$ and two H\"older conjugates $p, q \geq 1$ with $p^{-1} + q^{-1} = 1$, the corresponding $L_p$- and $L_q$-norms are dual on $\cG_{\sigma}$, namely
\begin{align}\label{eq:norm_duality}
\| \vec{\rho}\| _{p,\sigma} = \sup \{ |\!\langle \vec{\rho} | \vec{\omega} \rangle\!|  : \vec{\omega} \in \cG_{\sigma},\ \| \vec{\omega} \|_{q,\sigma} \leq 1 \}\,.
\end{align}
This constitutes a H\"older inequality for the inner product: 
\begin{align}\label{eq:hoelder}
  |\!\langle \vec{\rho} | \vec{\omega} \rangle\!| \leq 
  \|\vec{\rho}\| _{p,\sigma} \, \| \vec{\omega} \|_{q,\sigma} \,.
\end{align}
These norm duality statements continues to hold on $\cH_\sigma$, with very minor changes to the original proof. 


\subsection{Norm interpolation and convexity}

Building on Araki and Masuda's techniques, we derive the following inequality relating the $L_p$-norm of $\rho \in \cP(\cM)$ for different values of $p$ (recall that the norm only depends on the state, not on the exact vector implementing the state).

\begin{proposition}\label{lem:simpleinterpolate}
  Let $\rho, \sigma \in \cP(\cM)$ and either $p_0, p_1 \geq 2$ or $1 \leq p_0, p_1\leq 2$ be given. Then, we have
   \begin{align}\label{eq:simpleinterpolate}
     \|{\rho}\|_{p_\theta,\sigma} \leq \|{\rho}\|_{p_0,\sigma}^{1-\theta} \|{\rho}\|_{p_1,\sigma}^{\theta}\,, \qquad \textnormal{for} \quad \frac{1}{p_\theta} = \frac{1-\theta}{p_0} + \frac{\theta}{p_1} \,.
   \end{align}
\end{proposition}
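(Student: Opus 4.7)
The proof splits by the cases in Definition~\ref{def:pnorms}. In each case the plan is to apply Lemma~\ref{lem:main_interpolation} to a suitable vector-valued analytic family, exploiting that purely imaginary powers of a spatial derivative are isometries commuting with real powers via the functional calculus.

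For $p_0,p_1\geq 2$ the norm is a supremum over unit vectors, so the argument passes pointwise in the test vector. Fixing $\vec\omega\in\cH$ with $\|\vec\omega\|=1$, I apply Lemma~\ref{lem:main_interpolation} with $V=\id$, $h\equiv 0$, and the real affine
$$g(z)=(1-z)\left(\tfrac12-\tfrac{1}{p_0}\right)+z\left(\tfrac12-\tfrac{1}{p_1}\right)$$
to $f(z)=\spat{\omega}{\sigma}^{g(z)}\vec\rho$. Writing $g(x+it)=g(x)+itg_1$ with $g_1:=g(1)-g(0)\in\Rl$, the boundary suprema collapse to $C_x=\|\spat{\omega}{\sigma}^{g(x)}\vec\rho\|$ for $x\in\{0,1\}$, and the lemma gives
$$\|\spat{\omega}{\sigma}^{g(\theta)}\vec\rho\|\leq\|\spat{\omega}{\sigma}^{g(0)}\vec\rho\|^{1-\theta}\|\spat{\omega}{\sigma}^{g(1)}\vec\rho\|^{\theta}.$$
Taking $\sup$ over unit $\vec\omega$ on both sides, using $\sup(fg)\leq(\sup f)(\sup g)$ for non-negative $f,g$, yields the claim.

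For $1\leq p_0,p_1<2$ the norm is an infimum, and a single test vector does not suffice because the minimisers at different $p_i$ generically differ. Assuming both endpoint norms finite, I would pick $\varepsilon$-near-optimisers $\vec\omega_0,\vec\omega_1$ (unit vectors with $\omega_i'\gg\rho'$) and apply the lemma to a family built from \emph{two} spatial derivatives,
$$F(z)=\spat{\omega_1}{\sigma}^{-s_1 z}\,\spat{\omega_0}{\sigma}^{-s_0(1-z)}\vec\rho, \qquad s_i:=\tfrac{1}{p_i}-\tfrac12\in(0,\tfrac12],$$
so that $F(0)=\spat{\omega_0}{\sigma}^{-s_0}\vec\rho$ and $F(1)=\spat{\omega_1}{\sigma}^{-s_1}\vec\rho$ realise the near-optima at the endpoints. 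Three-lines then bounds $\|F(\theta)\|$ by $\|\rho\|_{p_0,\sigma}^{1-\theta}\|\rho\|_{p_1,\sigma}^{\theta}$ up to $\varepsilon$, and closing the argument requires identifying $F(\theta)$ with $\spat{\omega_\theta}{\sigma}^{-s_\theta}\vec\rho$ for an admissible unit test vector $\vec\omega_\theta$ in the infimum defining $\|\rho\|_{p_\theta,\sigma}$.

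The main obstacle is this inf case. In the commutative setting (cf.\ the remarks around~\eqref{eq:classical2}) the required $\vec\omega_\theta$ is simply the normalised pointwise geometric mean $\vec\omega_0^{(1-\theta)s_0/s_\theta}\vec\omega_1^{\theta s_1/s_\theta}$, whose admissibility follows from Hölder's inequality. In a general von Neumann algebra the two spatial derivatives need not commute, so the product $\spat{\omega_1}{\sigma}^{-s_1\theta}\spat{\omega_0}{\sigma}^{-s_0(1-\theta)}$ is not itself a power of any single $\spat{\omega_\theta}{\sigma}$, and a Connes-cocycle type construction is needed to produce the test vector. One must also check that the boundary suprema $C_x=\sup_t\|F(x+it)\|$ are not enlarged by the non-commuting imaginary-power factors $\spat{\omega_0}{\sigma}^{s_0 it}\spat{\omega_1}{\sigma}^{-s_1 it}$ acting on $\vec\rho$; reconciling this modular-theoretic bookkeeping with the $z=\theta$ identification is the technical heart of the argument.
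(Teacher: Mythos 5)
Your first case ($p_0,p_1\geq 2$) is correct and is essentially the paper's own argument: fix a unit vector $\vec{\omega}$, apply Lemma~\ref{lem:main_interpolation} with $V=\id$, $h\equiv 0$ and an affine exponent $g$, use that purely imaginary powers of $\spat{\omega}{\sigma}$ are partial isometries to bound the boundary values by $\big\|\spat{\omega}{\sigma}^{\frac12-\frac{1}{p_x}}\vec{\rho}\big\|$, and take the supremum over $\vec{\omega}$ only at the very end.

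For $1\leq p_0,p_1\leq 2$, however, your proposal has a genuine gap, which you yourself flag: after the three-lines bound on $\|F(\theta)\|$ you would still need to identify $F(\theta)$ with $\spat{\omega_\theta}{\sigma}^{-s_\theta}\vec{\rho}$ for some admissible unit vector $\vec{\omega}_\theta$ entering the infimum in Definition~\ref{def:pnorms}, and since $\spat{\omega_0}{\sigma}$ and $\spat{\omega_1}{\sigma}$ do not commute there is no reason such an $\vec{\omega}_\theta$ exists; without this identification the bound on $\|F(\theta)\|$ gives no upper bound on $\|\rho\|_{p_\theta,\sigma}$, so the case remains unproved. (Your worry about the boundary lines is, by contrast, not the issue: the extra imaginary-power factor of the other derivative is a contraction, so $C_x\leq\big\|\spat{\omega_x}{\sigma}^{-s_x}\vec{\rho}\big\|$ anyway.) The paper avoids the problem entirely by dualizing rather than interpolating the infimum: by the norm duality \eqref{eq:norm_duality}, $\|\vec{\rho}\|_{p_\theta,\sigma}$ is a supremum of $|\langle\vec{\rho}|\vec{\omega}\rangle|$ over $\vec{\omega}$ constrained through the conjugate exponents $q_0,q_1\geq 2$; the H\"older inequality \eqref{eq:hoelder} applied at both endpoint pairs gives $|\langle\vec{\rho}|\vec{\omega}\rangle|\leq\|\vec{\rho}\|_{p_0,\sigma}^{1-\theta}\|\vec{\rho}\|_{p_1,\sigma}^{\theta}\,\|\vec{\omega}\|_{q_0,\sigma}^{1-\theta}\|\vec{\omega}\|_{q_1,\sigma}^{\theta}$, and the $\vec{\omega}$-dependent factors are controlled by the constraint together with the already-established case of exponents $\geq 2$. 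No Connes-cocycle construction or non-commutative geometric mean of test vectors is needed; replacing the second half of your argument by this duality reduction closes the proof.
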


\begin{proof}
We use the natural implementation $\vec{\rho}$ of $\rho$ in $\cH_\rho$. If the right hand-side is infinite the claim holds trivially and hence we assume that $\|{\rho}\|_{p_0,\sigma}<\infty$ as well as $\|{\rho}\|_{p_1,\sigma}<\infty$. Furthermore, we assume without loss of generality that $p_1\geq p_0$. 

Let us first consider the case $p_0, p_1 \geq 2$. The strategy is to apply Lemma~\ref{lem:main_interpolation} with the trivial choice $h(z) = 0$ (which reduces the first spatial derivative to a projector which we can safely ignore) and $V$ the identity map on $\cH$. Hence we set
\begin{align}
  f(z) = \spat{\omega}{\sigma}^{g(z)}\vec{\rho}
\end{align}
for $\omega\in\cH_\rho$ with $g(z)=\frac{1}{2}-\big(\frac{z}{p_1}+\frac{1-z}{p_0}\big)$. We confirm that $\vec{\rho}$ is in the domain of the operator $\spat{\omega}{\sigma}^{g(x)}$ for $x \in \{0,1\}$ by assumption, and estimate
\begin{align}
  \norm{f(x+it)} = \left\|\spat{\omega}{\sigma}^{it\left(\frac{1}{p_1}-\frac{1}{p_0}\right)}\spat{\omega}{\sigma}^{\frac{1}{2}-\frac{1}{p_x}}\vec{\rho}\right\|\leq\left\|\spat{\omega}{\sigma}^{\frac{1}{2}-\frac{1}{p_x}}\vec{\rho}\right\|\,,
\end{align}
since the imaginary power of $\spat{\omega}{\sigma}$ is a partial isometry. We find
\begin{align}
  \norm{\spat{\omega}{\sigma}^{\frac{1}{2}-\frac{1}{p_\theta}}\vec{\rho}} = \norm{f(\theta)} \leq \left\|\spat{\omega}{\sigma}^{\frac{1}{2}-\frac{1}{p_0}}\vec{\rho}\right\|^{1-\theta}\left\|\spat{\omega}{\sigma}^{\frac{1}{2}-\frac{1}{p_1}}\vec{\rho}\right\|^\theta\,. 
\end{align}
Taking the suprema over $\vec{\omega}\in\cH_\rho$ with $\|\vec{\omega}\|=1$ we arrive at the assertion.

For $1 \leq p_0,p_1 \leq 2$ we simply use the norm duality in~\eqref{eq:norm_duality} (note that we can assume $\rho \in \cH_\sigma$) to establish
     \begin{align}
       \| \vec{\rho} \|_{p_{\theta},\sigma} &= \sup \big\{ |\!\langle \vec{\rho} | \vec{\omega} \rangle\!| : \vec{\omega} \in \cH_{\rho},\ \| \vec{\omega} \|_{q_{\theta},\sigma} \leq 1 \big\} \\
       &\leq  \sup \big\{ |\!\langle \vec{\rho} | \vec{\omega} \rangle\!|^{1-\theta} |\!\langle \vec{\rho} | \vec{\omega} \rangle\!|^{\theta} : \vec{\omega} \in \cH_{\rho},\ \|\vec{\omega}\|_{q_0,\sigma}^{1-\theta} \|\vec{\omega}\|_{q_1,\sigma}^{\theta} \leq 1 \big\} \\
       &\leq \sup \big\{ \|\vec{\rho}\|_{p_0,\sigma}^{1-\theta} \|\vec{\omega}\|_{q_0,\sigma}^{1-\theta} \|\vec{\rho}\|_{p_1,\sigma}^{\theta} \|\vec{\omega}\|_{q_1,\sigma}^{\theta} : \vec{\omega} \in \cH_{\rho},\ \|\vec{\omega}\|_{q_0,\sigma}^{1-\theta} \|\vec{\omega}\|_{q_1,\sigma}^{\theta} \leq 1 \big\} \\
       &= \|\vec{\rho}\|_{p_0,\sigma}^{1-\theta}  \|\vec{\rho}\|_{p_1,\sigma}^{\theta} \,.
     \end{align}
  Alternatively the statement can also be shown by a variation of the above argument using interpolation theory.
\end{proof}

The following simple corollary is noteworthy and turns the previous result into a convexity statement.
\begin{corollary}\label{cor:convex}
The map $p \mapsto \log \|\rho\|_{p,\sigma}^p$ is convex on $[1,2]$ and $[2,\infty)$ for any $\rho,\sigma \in \cP(\cM)$.
\end{corollary}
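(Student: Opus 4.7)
The plan is to derive the corollary directly from Proposition~\ref{lem:simpleinterpolate} by taking logarithms and performing a change of parameterization. Proposition~\ref{lem:simpleinterpolate} asserts convexity of $p \mapsto \log\|\rho\|_{p,\sigma}$ with respect to the variable $1/p$ (on each of the intervals $[1,2]$ and $[2,\infty)$), whereas the corollary demands convexity of $p \mapsto p\log\|\rho\|_{p,\sigma}$ with respect to $p$ itself. These two statements turn out to be equivalent thanks to a small but pleasant arithmetic identity, so the proof amounts to unwinding definitions.

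Concretely, I fix $p_0, p_1$ lying in the same one of the two allowed intervals with $p_0 \leq p_1$, and take $p = (1-\lambda) p_0 + \lambda p_1$ for some $\lambda \in [0,1]$. I then solve $\frac{1}{p} = \frac{1-\theta}{p_0} + \frac{\theta}{p_1}$ for $\theta \in [0,1]$, obtaining $\theta = \frac{p_1(p-p_0)}{p(p_1-p_0)}$. A short computation confirms the key identities
\begin{align}
p\,\theta \;=\; \lambda\, p_1, \qquad p(1-\theta) \;=\; (1-\lambda)\, p_0.
\end{align}
Taking the logarithm of the inequality in Proposition~\ref{lem:simpleinterpolate} gives $\log\|\rho\|_{p,\sigma} \leq (1-\theta)\log\|\rho\|_{p_0,\sigma} + \theta\log\|\rho\|_{p_1,\sigma}$, and multiplying through by $p>0$ and applying the identities yields
\begin{align}
p\log\|\rho\|_{p,\sigma} \;\leq\; (1-\lambda)\, p_0 \log\|\rho\|_{p_0,\sigma} + \lambda\, p_1 \log\|\rho\|_{p_1,\sigma},
\end{align}
which is precisely convexity of $p \mapsto \log\|\rho\|_{p,\sigma}^p$ on the chosen interval.

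There is no real obstacle here; the only mild subtlety concerns degenerate values of the norm. If one of $\|\rho\|_{p_i,\sigma}$ is $+\infty$ and the corresponding $\lambda$-weight is nonzero, the right-hand side of the convexity inequality is $+\infty$ and the claim is trivial. If $\rho = 0$ the statement is vacuous, and otherwise $\|\rho\|_{p,\sigma}$ is strictly positive so $\log\|\rho\|_{p,\sigma} \in \Rl \cup \{+\infty\}$ and the manipulation above is well-defined.
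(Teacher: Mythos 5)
Your proposal is correct and follows essentially the same route as the paper: take the logarithm of the inequality in Proposition~\ref{lem:simpleinterpolate}, multiply by $p_\theta>0$, and check the arithmetic identities showing that the resulting weights form the convex combination giving $p_\theta$ (you simply run the reparameterization from $\lambda$ to $\theta$ rather than the reverse). The extra remarks about infinite or degenerate norms are harmless bookkeeping that the paper leaves implicit.
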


\begin{proof}
  Taking the logarithm of~\eqref{eq:simpleinterpolate} and multiplying with $p_{\theta}$, we find
  \begin{align}
   \log \|\rho\|_{p_{\theta},\sigma}^{p_{\theta}} \leq \frac{p_\theta}{p_0}(1-\theta) \log \|\rho\|_{p_{0},\sigma}^{p_{0}} + \frac{p_\theta}{p_1}\theta \log \|\rho\|_{p_1,\sigma}^{p_{1}} \,.
  \end{align}
  It remains to verify that $\frac{p_\theta}{p_0}(1-\theta) + \frac{p_\theta}{p_1}\theta = 1$ and $\frac{p_\theta}{p_0}(1-\theta) p_0 + \frac{p_\theta}{p_1}\theta p_1 = p_{\theta}$.
\end{proof}


\subsection{Interpolation theory of linear operators}

Here we give a version of the Riesz-Thorin theorem for Araki-Masuda $L_p$-spaces.

\begin{theorem}\label{thm:riesz-thorin}
  Let $\cM \subset \cB(\cH)$ and $\cN \subset \cB(\cK)$ be two von Neumann algebras, and $\sigma \in \cP(\cM)$, $\tau \in \cP(\cN)$ two positive functionals. For $T:\cH\to\cK$ bounded we have
\begin{align}
\|T\|_{p_\theta,\sigma\to q_\theta,\tau}\leq\|T\|^{1-\theta}_{p_0,\sigma\to q_0,\tau}\,\|T\|^\theta_{p_1,\sigma\to q_1,\tau}\,, \ \textrm{ where }\  \|T\|_{p,\sigma\to q,\tau}:=\sup_{\vec{\rho}\in\cH}\frac{\|T\vec{\rho}\|_{q,\tau}}{\|\vec{\rho}\|_{p,\sigma}}
\end{align}
for $\frac{1}{p_\theta}=\frac{1-\theta}{p_0}+\frac{\theta}{p_1}$ with $p_0,p_1\geq2$ as well as $\frac{1}{q_\theta}=\frac{1-\theta}{q_0}+\frac{\theta}{q_1}$ with $q_0,q_1\geq2$.
\end{theorem}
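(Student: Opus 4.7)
The plan is to apply Lemma~\ref{lem:main_interpolation} with $V=T$ to an analytic function that \emph{simultaneously} interpolates the spatial-derivative powers on both sides of $T$. Note that a one-sided application (say with $g(z)=0$, $h(z)=\tfrac12-\tfrac{1}{q(z)}$) combined with the boundary bound $\|f(x+it)\|\leq\|T\|_{p_x,\sigma\to q_x,\tau}\|\vec{\xi}\|_{p_x,\sigma}$ only yields the weaker bound involving $\|\vec{\xi}\|_{p_0,\sigma}^{1-\theta}\|\vec{\xi}\|_{p_1,\sigma}^{\theta}$, which by Proposition~\ref{lem:simpleinterpolate} is an upper (not lower) bound on $\|\vec{\xi}\|_{p_\theta,\sigma}$, so a genuine two-sided setup is needed.

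Fix $\vec{\xi}\in\cH$ with $\|\vec{\xi}\|_{p_\theta,\sigma}<\infty$. By~\eqref{eq:pnorm1} for $q_\theta\geq 2$, it suffices to bound $\|\spat{\nu}{\tau}^{\frac12-\frac{1}{q_\theta}}T\vec{\xi}\|$ uniformly in unit $\vec{\nu}\in\cK$ by the claimed right-hand side times $\|\vec{\xi}\|_{p_\theta,\sigma}$. For $\varepsilon>0$, pick a near-optimizer $\vec{\mu}\in\cH$, $\|\vec{\mu}\|=1$, with $\|\spat{\mu}{\sigma}^{\frac12-\frac{1}{p_\theta}}\vec{\xi}\|\geq(1-\varepsilon)\|\vec{\xi}\|_{p_\theta,\sigma}$, and set $\vec{\vphi}:=\spat{\mu}{\sigma}^{\frac12-\frac{1}{p_\theta}}\vec{\xi}$. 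With the affine functions $h(z)=\tfrac12-\tfrac{1-z}{q_0}-\tfrac{z}{q_1}$ and $g(z)=\tfrac{1-z}{p_0}+\tfrac{z}{p_1}-\tfrac12$, apply Lemma~\ref{lem:main_interpolation} to
$$f(z)=\spat{\nu}{\tau}^{h(z)}\,T\,\spat{\mu}{\sigma}^{g(z)}\vec{\vphi}.$$
Since $g(\theta)=\tfrac{1}{p_\theta}-\tfrac12$ is the negative of the exponent baked into $\vec{\vphi}$, the inner spatial-derivative powers cancel on $\vec{\xi}$ at $z=\theta$, yielding $f(\theta)=\spat{\nu}{\tau}^{\frac12-\frac{1}{q_\theta}}T\vec{\xi}$, exactly the quantity of interest.

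For the boundary estimate at $z=x+it$ with $x\in\{0,1\}$, the imaginary part of $h$ is absorbed as a partial isometry on the output side (functional calculus), giving $\|f(x+it)\|=\|\spat{\nu}{\tau}^{h(x)}\,T\,\spat{\mu}{\sigma}^{g(x)+ig_1t}\vec{\vphi}\|$, which by~\eqref{eq:pnorm1} and the definition of the operator norm is bounded by $\|T\|_{p_x,\sigma\to q_x,\tau}\cdot\|\spat{\mu}{\sigma}^{g(x)+ig_1t}\vec{\vphi}\|_{p_x,\sigma}$. The main obstacle is to control this last $L_{p_x}$-norm by $\|\vec{\vphi}\|$ uniformly in $t$. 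Since $g(x)=\tfrac{1}{p_x}-\tfrac12$ precisely inverts the exponent in~\eqref{eq:pnorm1}, testing the $p_x$-sup with $\vec{\mu}$ itself and using that $\spat{\mu}{\sigma}^{ig_1t}$ is an isometry on the closure of its range produces the value $\|\vec{\vphi}\|$; the difficulty is to show no other test vector $\vec{\mu}'$ does strictly better. This is the place where the near-optimality of $\vec{\mu}$ for $\vec{\xi}$ in the $p_\theta$-norm must be leveraged, and I expect the correct argument to proceed via an operator-convexity estimate of the same flavor as the Jensen step in the proof of Lemma~\ref{lem:repindep}; alternatively one may dualize via the H\"older inequality~\eqref{eq:hoelder} to transfer the bound to the dual exponents $p_x'\leq 2$, where the $\inf$-form~\eqref{eq:pnorm2} offers the opposite variational flexibility.

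Granting the boundary bound, Lemma~\ref{lem:main_interpolation} gives $\|f(\theta)\|\leq C_0^{1-\theta}C_1^\theta$ with $C_x\leq\|T\|_{p_x,\sigma\to q_x,\tau}\|\vec{\vphi}\|\leq\|T\|_{p_x,\sigma\to q_x,\tau}\|\vec{\xi}\|_{p_\theta,\sigma}$. Letting $\varepsilon\to 0$, taking the supremum over unit $\vec{\nu}\in\cK$ (which, by~\eqref{eq:pnorm1}, recovers $\|T\vec{\xi}\|_{q_\theta,\tau}$ on the left), and finally dividing by $\|\vec{\xi}\|_{p_\theta,\sigma}$ and supremizing over $\vec{\xi}\neq 0$ yields the claimed inequality for $\|T\|_{p_\theta,\sigma\to q_\theta,\tau}$.
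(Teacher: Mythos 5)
Your high-level strategy (apply Lemma~\ref{lem:main_interpolation} with powers of spatial derivatives on both sides of $T$ and read off the claim at $z=\theta$) is the same as the paper's, but the way you build the inner analytic family leaves a genuine gap, which you acknowledge yourself: the boundary estimate is never proved. Concretely, ignoring the harmless imaginary shift, you need $\sup_{\|\vec{\omega}\|=1}\big\|\spat{\omega}{\sigma}^{\frac12-\frac{1}{p_x}}\spat{\mu}{\sigma}^{\frac{1}{p_x}-\frac{1}{p_\theta}}\vec{\xi}\big\|\leq\big\|\spat{\mu}{\sigma}^{\frac12-\frac{1}{p_\theta}}\vec{\xi}\big\|$ for $x\in\{0,1\}$, i.e.\ that the single vector $\vec{\mu}$, chosen to (nearly) optimize the $p_\theta$-supremum for $\vec{\xi}$, is simultaneously optimal for the $p_0$- and $p_1$-suprema of the shifted vectors. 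Near-optimality at the exponent $p_\theta$ gives no control at the endpoint exponents; the Jensen argument of Lemma~\ref{lem:repindep} compares two implementations of the same functional (not two different weight vectors $\vec{\omega}$ versus $\vec{\mu}$), and dualizing to the conjugate exponents $\leq 2$ does not produce such a mixed three-operator inequality either. There are also secondary defects: since $g(\theta)=\frac{1}{p_\theta}-\frac12\leq0$ and $g(x)=\frac{1}{p_x}-\frac12\leq0$, at $z=\theta$ you only recover $\spat{\nu}{\tau}^{\frac12-\frac{1}{q_\theta}}T Q\vec{\xi}$ with $Q$ the support projection of $\spat{\mu}{\sigma}$ (and $T$ need not commute with $Q$), and membership of $\vec{\vphi}$ in the domains of the negative endpoint powers is itself not guaranteed.

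The paper closes exactly this hole with Lemma~\ref{lem:representation_Lp}: for a dense set of vectors with $\|\vec{\rho}\|_{p_\theta,\sigma}\leq1$ one can write $\vec{\rho}=u\spat{\omega}{\sigma}^{\frac{1}{p_\theta}}\vec{\sigma}$ with $u\in\cM'$ a contraction and $\|\vec{\omega}\|=1$, and the same lemma yields $\big\|u\spat{\omega}{\sigma}^{\frac{1}{p_x}+it\left(\frac{1}{p_1}-\frac{1}{p_0}\right)}\vec{\sigma}\big\|_{p_x,\sigma}\leq\|\vec{\omega}\|^{\frac{2}{p_x}}=1$, uniformly in $t$, for the whole family $z\mapsto u\spat{\omega}{\sigma}^{\frac{z}{p_1}+\frac{1-z}{p_0}}\vec{\sigma}$. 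The structural point is that the interpolating family is anchored at the reference vector $\vec{\sigma}$ with the exponent $\frac{1}{p}$ (not at an arbitrary $\vec{\xi}$ with exponent $\frac12-\frac1p$), so the endpoint $L_{p_x}$-bounds come for free from the representation, only positive powers appear, and $f(\theta)$ is exactly $\spat{\chi}{\tau}^{\frac12-\frac{1}{q_\theta}}T\vec{\rho}$. Replacing your construction of the inner family by this representation turns your outline into the paper's proof; as written, the central estimate is missing.
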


\begin{finite}
  For finite-dimensional systems, a similar statement has been proven by Beigi~\cite[Thm.~4]{beigi13} for $\sigma$-weighted norms. We note, however, that while Beigi's result is formulated for the density operators, the Araki-Masuda $L_p$-spaces are Banach spaces for the square roots of the density operators, or equivalently, they define $L_p$-norms on the purifying vectors.
\end{finite}

\begin{proof}
  If the right hand-side is infinite the claim holds trivially and hence we assume in the following that it is finite. Our strategy is to prove the upper bound
  \begin{align}
    \|\spat{\chi}{\tau}^{\frac{1}{2}-\frac{1}{q_\theta}}T\vec{\rho}\| \leq \|T\|^{1-\theta}_{p_0,\sigma\to q_0,\tau}\,\|T\|^\theta_{p_1,\sigma\to q_1,\tau}\,,
  \end{align}
  for dense set of vectors $\rho$ with definite $L_p$-norm $\norm{\vec{\rho}}_{p_\theta,\sigma} \leq 1$ and an arbitrary $\vec{\chi} \in \cK$ with $\norm{\vec{\chi}}\leq 1$. The assertion would then follow by taking the supremum over such $\vec{\rho}$ and $\vec{\chi}$. Hence we first invoke Lemma~\ref{lem:representation_Lp} and choose $\vec{\rho}$ to be of the form
  \begin{align}\label{eq:subst}
    \vec{\rho} = u\spat{\omega}{\sigma}^{\frac{1}{p_\theta}}\vec{\sigma}\,,
  \end{align}
  for $\vec{\omega}\in\cH$ with $\|\vec{\omega}\|=1$ and $u\in\cM^\prime \subset \cB(\cH)$ with $\|u\|\leq1$. We aim to apply Lemma~\ref{lem:main_interpolation} and hence set
  \begin{align}
    f(z) = \spat{\chi}{\tau}^{\frac{1}{2}-\left(\frac{z}{q_1}+\frac{1-z}{q_0}\right)}T u\spat{\omega}{\sigma}^{\frac{z}{p_1}+\frac{1-z}{p_0}}\vec{\sigma}\,,
  \end{align}
  which corresponds to the choice $g(z)=\frac{z}{p_1}+\frac{1-z}{p_0}$, and $h(z)=\frac{1}{2}-\left(\frac{z}{q_1}+\frac{1-z}{q_0}\right)$. Applying Lemma~\ref{lem:representation_Lp} gives
  \begin{align}\label{eq:bound_from_rep_Lpspaces}
    \left\|u\spat{\omega}{\sigma}^{\frac{1}{p_x}+it\left(\frac{1}{p_1}-\frac{1}{p_0}\right)}\right\|_{p_x,\sigma} \leq \|\vec{\omega}\|^{\frac{2}{p_x}}=1
  \end{align}
  which implies $\vec{\sigma} \in \dom(\spat{\omega}{\sigma}^{g(x)})$ for $x\in \{0,1\}$. Moreover, we estimate
  \begin{align}
    \norm{f(x+it)} &\leq  \left\|\spat{\chi}{\tau}^{\frac{1}{2}-\frac{1}{q_x}}T u \spat{\omega}{\sigma}^{\frac{1}{p_x}+it\left(\frac{1}{p_1}-\frac{1}{p_0}\right)}\vec{\sigma}\right\|\\ 
    &\leq\left\|T u \spat{\omega}{\sigma}^{\frac{1}{p_x}+it\left(\frac{1}{p_1}-\frac{1}{p_0}\right)}\right\|_{q_x,\tau}\\
    &\leq \left\|T\right\|_{p_x,\sigma\to q_x,\tau}\left\|u \spat{\omega}{\sigma}^{\frac{1}{p_x}+it\left(\frac{1}{p_1}-\frac{1}{p_0}\right)}\right\|_{p_x,\sigma}\\
    &\leq \left\|T\right\|_{p_x,\sigma\to q_x,\tau} ,
  \end{align}
  where we used in the first step that the imaginary power of $\spat{\chi}{\tau}$ is a partial isometry, in the second and third step the definitions of the norms $\norm{.}_{q_x,\tau}$ and $\left\|T\right\|_{p_x,\sigma\to q_x,\tau}$, respectively, and the estimate from~\eqref{eq:bound_from_rep_Lpspaces} in the last step. The requirements of Lemma~\ref{lem:main_interpolation} are satisfied and we get the estimate
  \begin{align}
    \left\|\spat{\chi}{\tau}^{\frac{1}{2}-\frac{1}{q_\theta}}Tu\spat{\omega}{\sigma}^{\frac{1}{p_\theta}}\vec{\sigma}\right\| = \norm{f(\theta)} \leq \|T\|^{1-\theta}_{p_0,\sigma\to q_0,\tau}\,\|T\|^\theta_{p_1,\sigma\to q_1,\tau}\,,
  \end{align}
  from which the assertion follows by substituting~\eqref{eq:subst}.
\end{proof}


\section{Non-Commutative R\'enyi divergence}\label{sec:divergences}

\subsection{Definition}

We will use the non-commutative $L_p$-norms to define new relative entropic functionals on states of $\cM$, which turn out to be an algebraic generalization of the sandwiched R\'enyi divergences~\cite{lennert13,wilde13} (see also~\cite[Sec.~3.3]{jaksic10}).

\begin{definition}
  Let $\sigma \in \cP(\cM)$, $\rho \in \cS(\cM)$ and $\alpha \in [\frac12,1) \cup (1, \infty)$. Then we define the \emph{Araki-Masuda divergence} of order $\alpha$ as
  \begin{align}
     D_{\alpha}(\rho\|\sigma) := \frac{1}{\alpha-1} \log Q_{\alpha}(\rho\|\sigma), \quad
     Q_{\alpha}(\rho\|\sigma) := \| \rho \|_{2\alpha,\sigma}^{2\alpha} \,.
  \end{align} 
  The quantity $D_{\infty}(\rho\|\sigma)$ is defined as the corresponding limit.
\end{definition}

Recall that Lemma~\ref{lem:repindep} establishes that we are free to chose any $*$-representation on $\vec{\rho}$ implementing $\rho$.

\begin{finite}
With~\eqref{eq:weightednorm} it is easily seen that in the finite-dimensional case the Araki-Masuda divergences correspond to the sandwiched R\'enyi divergences:
\begin{align}
D_{\alpha}(\rho\|\sigma)=\frac{1}{\alpha-1} \log\tr \Big( D_\sigma^{\frac{1-\alpha}{2\alpha}} D_\rho D_\sigma^{\frac{1-\alpha}{2\alpha}} \Big)^\alpha\,.
\end{align}
\end{finite}

As a first property of this divergence we show that it is continuous and monotone as a function of $\alpha$.

\begin{lemma}\label{lem:alpha_mono}
  Let $\sigma \in\cP(\cM)$, $\rho \in \cS(\cM)$. The function $\alpha \mapsto D_{\alpha}(\rho\|\sigma)$ is continuous and monotonically increasing on $[\frac12, 1) \cup (1,\infty]$.
\end{lemma}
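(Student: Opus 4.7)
My plan is to reduce everything to a convexity property of the single scalar function $g(\alpha) := \log \|\rho\|_{2\alpha,\sigma}^{2\alpha}$, so that $D_{\alpha}(\rho\|\sigma) = g(\alpha)/(\alpha-1)$, and then deploy the standard ``chord-slope'' characterization of convex functions. For $\rho$ a state with $\rho \ll \sigma$, one has $\|\rho\|_{2,\sigma}^2 = \|\vec{\rho}\|^2 = \rho(\id) = 1$, so $g(1) = 0$; the complementary case $\rho \not\ll \sigma$ forces $D_\alpha = +\infty$ for $\alpha > 1$ and is handled separately. Under the affine substitution $p = 2\alpha$, Corollary~\ref{cor:convex} translates into the statement that $g$ is convex on each of $[\tfrac{1}{2},1]$ and $[1,\infty)$. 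The standard chord-slope lemma---for $g$ convex on an interval containing $c$ with $g(c)=0$, the ratio $\alpha \mapsto g(\alpha)/(\alpha-c)$ is monotonically non-decreasing on the interval minus $\{c\}$---applied piecewise with $c = 1$ immediately yields monotonicity of $D_{\alpha}$ on $[\tfrac{1}{2},1)$ and on $(1,\infty)$ \emph{separately}.

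The main obstacle is the cross-monotonicity $D_{\alpha_0} \leq D_{\alpha_1}$ for $\alpha_0 < 1 < \alpha_1$, since Proposition~\ref{lem:simpleinterpolate} is restricted to $p_0, p_1$ on the same side of $p = 2$. I would resolve it by returning to Lemma~\ref{lem:main_interpolation} directly, with $V = \id$, $h \equiv 0$, and the affine choice $g(z) := \tfrac{1}{2} - (1-z)/p_0 - z/p_1$ for $p_0 = 2\alpha_0 < 2 < 2\alpha_1 = p_1$. Picking $\theta_\star \in (0,1)$ so that $p(\theta_\star) = 2$ the intermediate exponent vanishes, and for a \emph{faithful} unit vector $\vec{\omega}$ the support of $\spat{\omega}{\sigma}$ coincides with $\pi(P_\sigma)$, so that $\|f(\theta_\star)\| = \|\pi(P_\sigma)\vec{\rho}\| = 1$. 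The lemma then yields
\begin{align*}
1 \leq \big\|\spat{\omega}{\sigma}^{\tfrac{1}{2}-1/p_0}\vec{\rho}\big\|^{1-\theta_\star}\big\|\spat{\omega}{\sigma}^{\tfrac{1}{2}-1/p_1}\vec{\rho}\big\|^{\theta_\star}.
\end{align*}
Bounding the $p_1$-factor uniformly by the supremum $\|\rho\|_{p_1,\sigma}$ from Definition~\ref{def:pnorms}, and passing $\vec{\omega}$ along a faithful sequence that approaches the infimum defining $\|\rho\|_{p_0,\sigma}$ (a mild perturbation argument is needed to ensure any near-optimal $\vec{\omega}$ can be chosen faithful without spoiling the limit), gives $1 \leq \|\rho\|_{p_0,\sigma}^{1-\theta_\star}\|\rho\|_{p_1,\sigma}^{\theta_\star}$. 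Solving $1/p(\theta_\star) = \tfrac{1}{2}$ produces $\theta_\star = p_1(2-p_0)/(2(p_1-p_0))$; taking logarithms and rearranging turns the above into $(p_1-2)\,p_0\log\|\rho\|_{p_0,\sigma} + (2-p_0)\,p_1\log\|\rho\|_{p_1,\sigma} \geq 0$, and dividing by the negative quantity $(p_0-2)(p_1-2)$ flips this into $D_{\alpha_0}(\rho\|\sigma) \leq D_{\alpha_1}(\rho\|\sigma)$.

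For continuity: on the open intervals $(\tfrac{1}{2},1)$ and $(1,\infty)$ it follows from continuity of the convex function $g$ on the interior of its domain of convexity, hence for $D_\alpha = g(\alpha)/(\alpha-1)$ as well. Continuity at $\alpha = \infty$ is automatic since $D_\infty$ is defined as the limit. Right-continuity at the boundary $\alpha = \tfrac{1}{2}$ I would obtain by combining the established monotonicity with lower semi-continuity in $p$ of the infimum defining $\|\rho\|_{p,\sigma}$ at $p = 1$, via a Fatou-type argument. The real technical burden throughout is the faithfulness/approximation step in Paragraph~2, needed because $\|f(\theta_\star)\| = 1$ requires a faithful $\vec{\omega}$ whereas the infimum in the $L_{p_0}$-norm for $p_0 < 2$ is \emph{a priori} attained by non-faithful vectors; this is what blocks a purely formal invocation of Proposition~\ref{lem:simpleinterpolate} and forces us back to the more flexible Lemma~\ref{lem:main_interpolation}.
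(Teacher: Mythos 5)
The first half of your argument---convexity of $g(\alpha)=\log\|\rho\|_{2\alpha,\sigma}^{2\alpha}$ from Corollary~\ref{cor:convex}, the anchor $g(1)=0$, and the chord-slope lemma at the root $\alpha=1$, with continuity deduced from convexity---is exactly the paper's proof, and it is fine. Note, though, that this is also \emph{all} the paper does for this lemma: its proof establishes monotonicity on $[\frac12,1)$ and on $(1,\infty]$ separately, and the ordering across $\alpha=1$ is only recovered later (from Theorem~\ref{thm:alt} together with Proposition~\ref{pr:petzprop}, i.e.\ both one-sided limits at $\alpha=1$ equal $D(\rho\|\sigma)$); your observation that piecewise convexity alone does not give the cross-interval comparison is correct.

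Your additional interpolation argument for $D_{\alpha_0}\le D_{\alpha_1}$ with $\alpha_0<1<\alpha_1$ is structurally sound (the choice of $g(z)$, the value $\theta_\star=\frac{p_1(2-p_0)}{2(p_1-p_0)}$, and the final sign bookkeeping all check out), but two steps are genuinely problematic as written. First, the claim that for faithful $\vec{\omega}$ the support of $\spat{\omega}{\sigma}$ ``coincides with $\pi(P_\sigma)$'' is false: in the paper's conventions $\spat{\omega}{\sigma}=D_\omega\otimes\big(D_\sigma^{-1}\big)\transp$ in finite dimensions, so the $\sigma$-dependence of the support sits in the commutant, not in $\pi(\cM)$. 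What you actually need is only that the support projection $\spat{\omega}{\sigma}^{0}$ fixes $\vec{\rho}$, which holds for the canonical implementation of $\rho$ when $\rho\ll\sigma$ and the support of $\omega$ dominates that of $\rho$ (e.g.\ $\rho\lll\omega$); faithfulness is neither the right condition nor always available, since a faithful normal state exists only for $\sigma$-finite $\cM$. Second, the ``mild perturbation'' making a near-optimal $\vec{\omega}$ admissible (it must also respect the constraint $\omega'\gg\rho'$ in~\eqref{eq:pnorm2} and be realizable as a unit vector, possibly after enlarging the representation via Lemma~\ref{lem:repindep}), as well as the Fatou-type argument for continuity at $\alpha=\frac12$, are precisely the nontrivial steps and are only asserted, not proved; the perturbation can be controlled using affinity of $\omega\mapsto\spat{\omega}{\sigma}$ in the commutant functional and operator anti-monotonicity of negative powers, but this needs to be spelled out. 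If you only want the statement at the level the paper proves it, none of this machinery is needed; if you want the global ordering across $\alpha=1$, the cheaper route is the one the paper takes later via the Araki--Lieb--Thirring inequality and the $\alpha\to1$ limits.
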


\begin{proof}
  For ease of notation we define $\phi(t) := \log Q_{1+t}(\rho\|\sigma)$.
  Corollary~\ref{cor:convex} implies that $\phi(t)$ is convex on $[-\frac12, 0]$ and $[0,\infty)$, and thus continuous in these intervals' interiors. This implies continuity of $\alpha \mapsto D_{\alpha}(\rho\|\sigma) = \frac{1}{\alpha-1}\phi(\alpha-1)$ on $[\frac12, 1) \cup (1,\infty]$. 
  
  Moreover, we find that $\phi(0) = 0$ since $\| \rho \|_{2,\sigma} = 1$ for any state $\rho \in \cS(\cM)$. A standard argument reveals that $\frac{1}{t}\phi(t)$ is monotonically increasing for any convex function $\phi(t)$ with $\phi(0) = 0$. More precisely, for $-\frac12 \leq \alpha < \beta < 0$, we have
  \begin{align}
  \phi(\beta) \leq \frac{\beta}{\alpha} \phi(\alpha) + \left( 1 - \frac{\beta}{\alpha} \right) \phi(0) = \frac{\beta}{\alpha} \phi(\alpha) \,,
  \end{align}
  which yields $\frac{1}{\beta} \phi(\beta) \geq \frac{1}{\alpha}\phi(\alpha)$, and analogous reasoning applies for $0 < \alpha < \beta$.
\end{proof}


\subsection{Some limiting cases}

The extrema $\alpha = \frac12$ and $\alpha = \infty$ correspond to well-known quantities. 
The limit $\alpha \to 1$ is more delicate and will be covered in the next sub-section.

\begin{lemma}
  Let $\rho \in \cS(\cM)$, $\sigma \in \cP(\cM)$. We have $D_{\frac12}(\rho\|\sigma) = -\log F(\rho,\sigma)$ where
  \begin{align}\label{eq:fidelity}
    F(\rho,\sigma) 
    :=& \sup \big\{ |\langle \vec{\rho} | U \vec{\sigma} \rangle|^2 :  U\!: \cH_{\sigma} \to \cH_{\rho} \textnormal{ with }
      \|U\| \leq 1, \\ 
      \nonumber &\qquad \qquad \qquad \qquad U \pi_{\sigma}(x) = \pi_{\rho}(x) U\ \textnormal{ for all } x \in \cM  \big\}
  \end{align}
  denotes Uhlmann's fidelity~\cite{Uhlmann76} in the form of Alberti~\cite{alberti83}, and
  \begin{align}\label{eq:dinfty}
    D_{\infty}(\rho\|\sigma) &= \log \inf \{ C > 0 : \rho(x) \leq C \sigma(x) \textnormal{ for all } x \in \cM_+ \}  \,,
  \end{align}
  which has also been studied in quantum information~\cite{datta08,jain02}.
\end{lemma}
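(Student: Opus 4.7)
The plan is to handle the two identities separately but with a common tool: expressing the $L_p$-norm in terms of the intertwiner $R^\sigma$.

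\smallskip

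\noindent\textbf{The $\alpha=\tfrac12$ case.} Since $D_{\frac12}(\rho\|\sigma) = -2\log\|\rho\|_{1,\sigma}$, it suffices to prove $\|\rho\|_{1,\sigma}^2 = F(\rho,\sigma)$. I would apply the norm duality \eqref{eq:norm_duality} with H\"older exponents $p=1$, $q=\infty$ to obtain
\begin{align}
\|\rho\|_{1,\sigma} = \sup\big\{ |\!\langle \vec{\rho}|\vec{\omega}\rangle\!| : \vec{\omega}\in\cH_\sigma,\ \|\vec{\omega}\|_{\infty,\sigma} \leq 1\big\}.
\end{align}
I would then compute $\|\vec{\omega}\|_{\infty,\sigma}$ directly from \eqref{eq:pnorm1}: the identity $\|\spat{\chi}{\sigma}^{\frac12}\vec{\omega}\|^2 = \langle\vec{\chi}|R^\sigma(\vec{\omega})R^\sigma(\vec{\omega})^{\dag}\vec{\chi}\rangle$ shows upon taking $\sup_{\|\vec{\chi}\|=1}$ that $\|\vec{\omega}\|_{\infty,\sigma} = \|R^\sigma(\vec{\omega})\|$.

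\smallskip

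Choosing the representation $\pi = \pi_\rho$ on $\cH = \cH_\rho$ and invoking Lemma~\ref{lem:repindep}, for any $\vec{\omega} \in \cH_\rho$ with $\|\vec{\omega}\|_{\infty,\sigma} \le 1$ the operator $U := R^\sigma(\vec{\omega}) : \cH_\sigma \to \cH_\rho$ is a contractive intertwiner of $\pi_\sigma$ and $\pi_\rho$ satisfying $U\vec{\sigma} = \vec{\omega}$. Conversely, every such contractive intertwiner $U$ yields $\vec{\omega}:=U\vec{\sigma}$ with $\omega(x) = \langle\vec{\sigma}|\pi_\sigma(x)U^{\dag}U\vec{\sigma}\rangle \leq \sigma(x)$ for $x \in \cM_+$ (as $U^{\dag}U \in \pi_\sigma(\cM)^\prime$ is a contraction), hence $\omega \lll \sigma$, and a density argument on $\pi_\sigma(\cM)\vec{\sigma}$ shows $R^\sigma(\vec{\omega}) = U$. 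This bijection identifies the duality supremum with the Uhlmann--Alberti supremum \eqref{eq:fidelity}, giving $\|\rho\|_{1,\sigma}^2 = F(\rho,\sigma)$.

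\smallskip

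\noindent\textbf{The $\alpha=\infty$ case.} Here $D_\infty(\rho\|\sigma) = 2\log \lim_{p\to\infty}\|\rho\|_{p,\sigma}$ by definition. First I would verify $\lim_{p\to\infty}\|\rho\|_{p,\sigma} = \|\rho\|_{\infty,\sigma}$: applying Proposition~\ref{lem:simpleinterpolate} with $p_0=2$ and using $\|\rho\|_{2,\sigma}=1$ yields $\|\rho\|_{p,\sigma} \le \|\rho\|_{\infty,\sigma}^{1-2/p}$, so $\|\rho\|_{p,\sigma}$ is non-decreasing in $p$ and bounded above by $\|\rho\|_{\infty,\sigma}$; monotone/dominated convergence on the spectral integral $\int t^{1-2/p}\,d\mu_{\vec{\omega}}(t)$ (where $\mu_{\vec{\omega}}$ is the spectral measure of $\spat{\omega}{\sigma}$ for $\vec{\rho}$) then gives the matching lower bound after taking the supremum over $\vec{\omega}$. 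Next, using the quadratic-form definition of the spatial derivative,
\begin{align}
\|\rho\|_{\infty,\sigma}^2 = \sup_{\|\vec{\omega}\|=1}\langle\vec{\omega}|R^\sigma(\vec{\rho})R^\sigma(\vec{\rho})^{\dag}\vec{\omega}\rangle = \|R^\sigma(\vec{\rho})\|^2.
\end{align}
Finally, \eqref{eq:itsbounded} shows that on the dense subspace $\pi_\sigma(\cM)\vec{\sigma} \subset \cH_\sigma$ one has $\|R^\sigma(\vec{\rho})\pi_\sigma(a)\vec{\sigma}\|^2/\|\pi_\sigma(a)\vec{\sigma}\|^2 = \rho(a^{\dag}a)/\sigma(a^{\dag}a)$, so $\|R^\sigma(\vec{\rho})\|^2$ equals the smallest $C$ for which $\rho(a^{\dag}a) \le C\sigma(a^{\dag}a)$ holds for all $a\in\cM$. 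Since $\cM_+ = \{a^{\dag}a : a\in\cM\}$, this matches the infimum in \eqref{eq:dinfty}, giving $D_\infty(\rho\|\sigma) = \log\|R^\sigma(\vec{\rho})\|^2$.

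\smallskip

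\noindent\textbf{Main obstacle.} The substantive step is the bijective parametrization $\vec{\omega} \leftrightarrow U = R^\sigma(\vec{\omega})$ between unit vectors of the $L_\infty$-ball and contractive intertwiners $\cH_\sigma \to \cH_\rho$, which requires tracking domain/support conditions and invoking representation-independence. The delicate points for $D_\infty$, namely the limit $p \to \infty$ and the identification $\|\rho\|_{\infty,\sigma} = \|R^\sigma(\vec{\rho})\|$, are handled respectively by the interpolation bound in Proposition~\ref{lem:simpleinterpolate} and by direct computation of the defining quadratic form.
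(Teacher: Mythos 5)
Your proposal is correct, and for the $\alpha=\infty$ identity as well as for the direction $\|\rho\|_{1,\sigma}\le\sqrt{F(\rho,\sigma)}$ it follows essentially the paper's route: identify $\|\cdot\|_{\infty,\sigma}=\|R^{\sigma}(\cdot)\|$ from the defining quadratic form of the spatial derivative, relate this to the best constant in $\rho\le C\sigma$ via \eqref{eq:itsbounded}, and use norm duality together with the intertwining property of $R^{\sigma}(\vec{\psi})$ for the fidelity upper bound. Where you genuinely differ is the lower bound $\sqrt{F(\rho,\sigma)}\le\|\rho\|_{1,\sigma}$: the paper fixes a contractive intertwiner $U$ and an arbitrary $\vec{\omega}\in\cH_{\rho}$, splits $\langle\vec{\rho}|U\vec{\sigma}\rangle$ by Cauchy--Schwarz against $\spat{\omega}{\sigma}^{\pm\frac12}$, computes $\|\spat{\omega}{\sigma}^{\frac12}U\vec{\sigma}\|^{2}=\omega(UU^{\dag})\le1$, and then appeals to the infimum (primal) form \eqref{eq:pnorm2} of the $1$-norm; you instead close the correspondence in the reverse direction, checking that every admissible $U$ yields $\vec{\omega}=U\vec{\sigma}$ with $U^{\dag}U\in\pi_{\sigma}(\cM)^{\prime}$, hence $\omega\le\sigma$, $\omega\lll\sigma$, $R^{\sigma}(\vec{\omega})=U$ and $\|\vec{\omega}\|_{\infty,\sigma}=\|U\|\le1$, so that H\"older/duality \eqref{eq:norm_duality} gives the bound directly. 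This keeps the whole argument on the dual ($\infty$-norm) side and never touches the infimum definition of the $1$-norm, which is a clean alternative; your explicit treatment of the limit $p\to\infty$ (interpolation upper bound from Proposition~\ref{lem:simpleinterpolate} plus Fatou on the spectral integral) is also more careful than the paper, which tacitly identifies $D_{\infty}$ with the $p=\infty$ norm. One small slip: the parenthetical claim that $\|\rho\|_{p,\sigma}$ is non-decreasing in $p$ is false in general (classically with $\sigma=M\rho$, $M>1$, one gets $\|\rho\|_{p,\sigma}=M^{\frac{2}{p}-1}$, which decreases in $p$); this is harmless, since your upper bound and the Fatou lower bound already pin down the limit without monotonicity.
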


These relations have essentially already been established in~\cite{araki82}, although in our setup we prefer to drop the assumption that $\sigma$ is faithful.

\begin{proof}
  To show the identity~\eqref{eq:dinfty}, first note that for any $C$ such that $\rho(x) \leq C\, \sigma(x)$ for all $x \in \cM_+$, we have
 \begin{align}
   \| \vec{\rho} \|_{\infty,\sigma}^2 &= \sup_{\vec{\omega} \in \cH,\atop \|\vec{\omega}\|=1} \| \spat{\omega}{\sigma}^{\frac12} \vec{\rho} \|^2 
   = \sup_{\vec{\omega} \in \cH,\atop \|\vec{\omega}\|=1} \big\langle \vec{\omega} \big| R^{\sigma}(\vec{\rho}) R^{\sigma}(\vec{\rho})^{\dag} \vec{\omega} \big\rangle
   = \| R^{\sigma}(\vec{\rho}) \|^2  \,,
 \end{align}
 where the latter norm is induced by the vector norms on $\cH$ and $\cH_{\sigma}$.
 Hence the inequality $\| \vec{\rho} \|_{\infty,\sigma}^2 \leq  C$ follows from~\eqref{eq:itsbounded} and holds for all such $C$. On the other hand, for every $x \in \cM_+$, we have
 \begin{align}
   \rho(x) &= \langle \vec{\rho} | \pi(x) \vec{\rho} \rangle 
   = \big\langle R^{\sigma}(\vec{\rho}) \vec{\sigma} \big| \pi(x) R^{\sigma}(\vec{\rho}) \vec{\sigma} \big\rangle \\
   &= \big\langle \pi_{\sigma}(x)^{\frac12} \vec{\sigma} \big| R^{\sigma}(\vec{\rho})^{\dag} R^{\sigma}(\vec{\rho}) \pi_{\sigma}(x)^{\frac12} \vec{\sigma} \rangle \leq \big\| R^{\sigma}(\vec{\rho}) \big\|^2 \sigma(x) ,
 \end{align}
 which implies the other direction.

 For the identity~\eqref{eq:fidelity}, we note that we have by $L_p$-space duality, \eqref{eq:norm_duality} that
 \begin{align}
   \norm{\vec{\rho}}_{1,\sigma} = \sup\{|\Scp{\vec{\rho}}{\vec{\psi}}|\,:\,\norm{\vec{\psi}}_{\infty,\sigma} \leq 1\}\,.
 \end{align}
 Using now the $p=\infty$ case just considered, we see that $\vec{\psi} \in \cH_\rho$ implies $\norm{R^\sigma(\vec{\psi})}\leq 1$. By definition, we have $R^\sigma(\vec{\psi})\pi_\sigma(x) = \pi_\rho(x) R^\sigma(\vec{\psi})$ for $x \in \cM$ and hence the relation $\norm{\vec{\rho}}_{1,\sigma}\leq\sqrt{F(\rho,\sigma)}$ follows since
 \begin{align}
   \Scp{\vec{\rho}}{\vec{\psi}} = \Scp{\vec{\rho}}{R^\sigma(\vec{\psi})\vec{\sigma}}\,.
 \end{align}
  In order to prove `$\geq$' note that for any $\vec{\omega} \in \cH_{\rho}$, we have
  \begin{align}
    |\langle \vec{\rho} | U \vec{\sigma} \rangle| = \big|\big\langle \spat{\omega}{\sigma}^{-\frac12} \vec{\rho} \big| \spat{\omega}{\sigma}^{\frac12} U \vec{\sigma} \big\rangle\big| \leq \big\| \spat{\omega}{\sigma}^{-\frac12} \vec{\rho}\big\| \, \big\| \spat{\omega}{\sigma}^{\frac12} U \vec{\sigma} \big\|
  \end{align}
  by the Cauchy-Schwartz inequality. The second norm simplifies to 
  \begin{align}
    \big\| \spat{\omega}{\sigma}^{\frac12} U \vec{\sigma} \big\| = \big\langle \vec{\omega} \big| R^{\sigma}(U\vec{\sigma}) R^{\sigma}(U \vec{\sigma})^{\dag} \vec{\omega} \big\rangle =
    \omega( U U^{\dag} ) \leq 1 \,,
  \end{align}
  which yields $|\langle \vec{\rho} | U \vec{\sigma} \rangle| \leq \| \spat{\omega}{\sigma}^{-\frac12} \vec{\rho}\|$. 
  Since this holds for all such $U$ and all $\vec{\omega} \in \cH_{\rho}$ we are done.
 \end{proof}

Evidently $D_{\infty}(\rho\|\sigma)$ is finite if and only if $\rho \lll \sigma$. By monotonicity it follows that all $D_{\alpha}(\rho\|\sigma)$ are finite if $\rho \lll \sigma$.


\subsection{Comparison with Umegaki's and Petz' divergences}

The first entropic functional that was generalized to the operator algebraic setting (again in a work of Araki~\cite{araki76}) was the relative entropy as defined by Umegaki~\cite{umegaki62}. It is also useful to contrast the definition via Araki-Masuda $L_p$-spaces with a generalization of R\'enyi divergence proposed by Petz~\cite{petz85} (see also~\cite[Chap.~7]{ohya93}). 

The following definitions are in terms of an implementing vector $\vec{\rho} \in \cH$ of a state $\rho \in \cS(\cM)$ on a von Neumann algebra $\cM \subset \cB(\cH)$; however, the quantities are independent of the specific choice of $*$-representation~\cite[p.~80]{ohya93}. In the following we thus choose the GNS Hilbert space~$\cG_{\rho}$.

\begin{definition}
Let $\sigma \in \cP(\cM)$ and $\rho \in \cS(\cM)$. Umegaki's relative entropy is defined as~\cite{araki76},
\begin{align}
D(\rho\|\sigma):=\langle\vec{\rho}|\log\spat{\rho}{\sigma}\vec{\rho}\rangle\,, \qquad \textrm{with } \vec{\rho} \in \cG_{\rho} \textrm{ implementing } \rho, 
\end{align}
if $\rho \ll \sigma$ (which may be infinite) and $+\infty$ otherwise. Moreover, for $\alpha \in (0,1)$ we define~\cite{petz85},
\begin{align}\label{eq:petzdef}
  \widebar{Q}_{\alpha}(\rho\|\sigma):=\langle \vec{\rho} | \spat{\rho}{\sigma}^{\alpha -1} \vec{\rho} \rangle\,,  \qquad \textrm{with } \vec{\rho} \in \cG_{\rho} \textrm{ implementing } \rho \,.
\end{align}
For $\alpha \in (1,2)$ we use the definition in~\eqref{eq:petzdef} when $\rho \ll \sigma$ (which may be infinite) and set $\widebar{Q}_{\alpha}(\rho\|\sigma) = +\infty$ otherwise.
Finally, Petz' R\'enyi divergences are given by
\begin{align}
\widebar{D}_{\alpha}(\rho\|\sigma) := \frac{1}{\alpha-1} \log \widebar{Q}_{\alpha}(\rho\|\sigma)\,.
\end{align}
\end{definition}

\begin{finite}
In the finite-dimensional case Umegaki's relative entropy~\cite{umegaki62} is given as
\begin{align}
D(\rho\|\sigma)=\tr\, D_\rho(\log D_\rho-\log D_\sigma)\, ,
\end{align}
and the Petz divergences~\cite{petz86} can be written as
\begin{align}
\widebar{D}_{\alpha}(\rho\|\sigma)=\frac{1}{\alpha-1} \log\tr \Big(D_\rho^\alpha D_\sigma^{1-\alpha}\Big)\,.
\end{align}
\end{finite}

We collect some useful properties of Petz' version of R\'enyi divergences into the following proposition. Their proof is standard and deferred to Appendix~\ref{app:misc_lemmas}. 

\begin{proposition}\label{pr:petzprop}
   Let $\sigma \in \cP(\cM)$ and $\rho \in \cS(\cM)$. Then, we have:
   \begin{enumerate}
     \item The function $\alpha \mapsto \widebar{D}_{\alpha}(\rho\|\sigma)$ is monotonically increasing on $[0, 2]$.
     \item If $\rho \lll \sigma$ then $\widebar{D}_{\alpha}(\rho\|\sigma)$ is finite for all $\alpha \in [0, 2]$.
     \item We have $\lim_{\alpha \nearrow 1} \widebar{D}_{\alpha}(\rho\|\sigma) = D(\rho\|\sigma)$.
     Moreover, if $\widebar{D}_{\alpha}(\rho\|\sigma)$ is finite for any $\alpha \in (1,2)$, we also have
     \begin{align}
       \lim_{\alpha \searrow 1} \widebar{D}_{\alpha}(\rho\|\sigma) = D(\rho\|\sigma) \,.
     \end{align}
   \end{enumerate}
\end{proposition}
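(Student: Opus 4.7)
Fix a GNS implementation $\vec{\rho} \in \cG_\rho$ of $\rho$ and let $\mu$ denote the spectral measure of the positive self-adjoint operator $\spat{\rho}{\sigma}$ in the state $\vec{\rho}$, restricted to its support. Assuming $\rho \ll \sigma$ (otherwise the statements for $\alpha > 1$ hold trivially), $\mu$ is a probability measure on $(0,\infty)$ with
\begin{align*}
\widebar{Q}_{1+s}(\rho\|\sigma) = \int_0^\infty t^s\,d\mu(t), \qquad s \in [-1, 1].
\end{align*}
The proof then reduces to classical statements about log-moments of a probability measure.

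For part (1), H\"older's inequality yields convexity of $\phi(s) := \log\int t^s\,d\mu(t)$ wherever it is finite, with $\phi(0) = \log 1 = 0$. The secant-slope argument already used in the proof of Lemma~\ref{lem:alpha_mono} then shows that $s \mapsto \phi(s)/s = \widebar{D}_{1+s}(\rho\|\sigma)$ is monotonically non-decreasing on each of $[-1, 0)$ and $(0, 1]$; continuity at $\alpha = 1$ supplied by part (3) bridges them into all of $[0, 2]$.

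For part (2), the assumption $\rho \lll \sigma$ forces $R^\sigma(\vec{\rho})$ to be bounded, so $\spat{\rho}{\sigma}$ has compact spectrum in some $[0, C]$ and thus $\widebar{Q}_2(\rho\|\sigma) \leq C$ is immediate. The delicate endpoint is $\alpha = 0$, where a direct computation using the partial isometry structure of $R^\sigma(\vec{\sigma})$ on the GNS space produces
\begin{align*}
\widebar{Q}_0(\rho\|\sigma) = \|\spat{\rho}{\sigma}^{-1/2}\vec{\rho}\|^2 = \sigma(P_\rho) \leq \sigma(\id) < \infty,
\end{align*}
the last bound using normality of $\sigma \in \cP(\cM)$. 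With both endpoints finite, H\"older on $\mu$ delivers $\widebar{Q}_\alpha \leq \widebar{Q}_0^{1-\alpha/2}\widebar{Q}_2^{\alpha/2}$ for all $\alpha \in [0, 2]$, propagating finiteness throughout.

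For part (3), the one-sided derivative theorem for the convex function $\phi$ (with $\phi(0) = 0$) gives $\lim_{s \nearrow 0}\phi(s)/s = \phi'_-(0)$ unconditionally, and $\lim_{s \searrow 0}\phi(s)/s = \phi'_+(0)$ whenever $\phi$ is finite on some right-neighborhood of $0$, which is exactly the stated hypothesis. A monotone convergence argument---dominating the positive and negative parts of $\log t$ by nearby moments of $t$---identifies both one-sided derivatives with
\begin{align*}
\phi'(0) = \int \log t\,d\mu(t) = \langle\vec{\rho}|\log\spat{\rho}{\sigma}\,\vec{\rho}\rangle = D(\rho\|\sigma).
\end{align*}
The main technical obstacle is the finiteness verification at the left endpoint: in finite dimensions the identity $\widebar{Q}_0 = \tr(P_\rho D_\sigma)$ is a one-line trace computation, but algebraically it requires careful accounting of how the partial isometry $R^\sigma(\vec{\rho})$ interacts with the support projection of the (unbounded) inverse $\spat{\rho}{\sigma}^{-1/2}$.
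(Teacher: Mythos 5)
Your overall route (reduce everything to the scalar spectral measure $\mu$ of $\spat{\rho}{\sigma}$ in the vector state $\vec{\rho}$, get convexity of $s\mapsto\log\int t^s\,\mathrm{d}\mu$, and take one-sided derivatives at $s=0$) is essentially the paper's strategy for parts (1) and (3): the paper obtains the same convexity via its interpolation lemma rather than scalar H\"older, and handles $\alpha\nearrow 1$ by citation and $\alpha\searrow1$ by L'Hospital plus dominated convergence, so those parts are fine modulo the usual care with infinite moments. The genuine problems are in part (2). First, the claim that $\rho\lll\sigma$ forces $\spat{\rho}{\sigma}$ to have spectrum in $[0,C]$ is false: $\rho\leq C\sigma$ bounds the map $R^{\sigma}(\vec{\rho})$, not the spatial derivative. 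Already for $\rho=\sigma$ the operator $\spat{\sigma}{\sigma}$ is the modular operator, which is unbounded in general (in finite dimensions its eigenvalues are ratios $\lambda_i/\lambda_j$ of eigenvalues of $D_\sigma$, not controlled by $C$). What is true, and is exactly the paper's argument, is the moment bound
\begin{align}
\widebar{Q}_2(\rho\|\sigma)=\langle\vec{\rho}|\spat{\rho}{\sigma}\vec{\rho}\rangle
=\big\|R^{\sigma}(\vec{\rho})^{\dagger}\vec{\rho}\big\|^2\leq\big\|R^{\sigma}(\vec{\rho})\big\|^2\leq C\,,
\end{align}
i.e.\ $\widebar{D}_2(\rho\|\sigma)\leq D_\infty(\rho\|\sigma)<\infty$; your conclusion at $\alpha=2$ survives, but only after replacing the bounded-spectrum step by this bound.

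Second, your treatment of the left endpoint rests on the identity $\widebar{Q}_0(\rho\|\sigma)=\|\spat{\rho}{\sigma}^{-1/2}\vec{\rho}\|^2=\sigma(P_\rho)$, which you assert and then yourself flag as the ``main technical obstacle'' without proving it. In the von Neumann algebra setting this is not a one-line computation: it requires the modular-theoretic fact relating $\spat{\rho}{\sigma}^{-1}$ to the spatial derivative with the roles of $\cM$ and its commutant exchanged, and a careful handling of support projections. As it stands, your part (2) for $\alpha\in[0,1)$ (and hence the H\"older propagation $\widebar{Q}_\alpha\leq\widebar{Q}_0^{1-\alpha/2}\widebar{Q}_2^{\alpha/2}$) has a hole. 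The paper avoids this endpoint entirely: it proves finiteness at $\alpha=2$ as above and combines it with the monotonicity of part (1); note also that the only place Proposition~\ref{pr:petzprop}(2) is actually invoked later is to guarantee finiteness of $\widebar{D}_\alpha$ for some $\alpha\in(1,2)$, which needs nothing about $\alpha=0$. If you want to keep your stronger claim about the lower endpoint, you must either supply the commutant/modular argument for $\widebar{Q}_0\leq\sigma(\id)$ or restrict part (2) to the range where the $\alpha=2$ bound plus convexity suffices.
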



\subsubsection{An algebraic Araki-Lieb-Thirring inequality}

The existence of two different divergences generalizing the commutative case obviously begs the question on their relation to each other. In finite dimensions, an ordering is implied by the Araki-Lieb-Thirring (ALT)~\cite{lieb76,araki90} inequality.

\begin{finite}
  In the finite-dimensional case the ALT inequality~\cite{lieb76,araki90} for density matrices implies that
  \begin{align}
     \| \rho \|_{p,\sigma}^p 
     = \tr \Big( D_\rho^{\frac12} D_\sigma^{\frac{2}{p} - 1} D_\rho^{\frac12} \Big)^{\frac{P}{2}} 
     \leq \tr\, D_{\rho}^{\frac{p}{2}} D_{\sigma}^{1- \frac{p}{2}} = \| \spat{\rho}{\sigma}^{\frac{p}{4}} \vec{\sigma} \|^2, \qquad \textnormal{for} \quad p \geq 2 \,.
  \end{align}
  The inequality holds in the opposite direction for $1 \leq p \leq 2$.
\end{finite}

This motivates the following extension of the ALT inequality to the setting of $W^*$-algebras and non-commutative vector valued $L_p$-spaces. We note that Kosaki already established a version of the ALT inequality for von Neumann algebras~\cite{kosaki92}, albeit only for those possessing a semifinite trace. In contrast, our ALT inequality holds for any von Neumann algebra and is formulated in terms of positive functionals.

\begin{theorem}\label{thm:alt}
  Let $\cM$ be a $W^*$-algebra. For $\rho \in \cS(\cM)$ and $\sigma \in \cP(\cM)$ we have that
  \begin{align}
    D_{\alpha}(\rho\|\sigma) \leq \widebar{D}_{\alpha}(\rho\|\sigma)\,.
  \end{align}
    In fact, if $\cM \subset \cB(\cH)$ is a von Neumann algebra such that $\cH$ supports implementations $\vec{\rho}, \vec{\sigma} \in \cH$ of $\rho$ and $\sigma$, respectively, then we have for $p \geq 2$ that
\begin{align}
\| \vec{\rho} \|_{p,\sigma}^p \leq \| \spat{\rho}{\sigma}^{\frac{p}{4}} \vec{\sigma} \|^2\,.
\end{align}
The inequality holds in the opposite direction for $1 \leq p \leq 2$.
\end{theorem}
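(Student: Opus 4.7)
The plan is to apply the interpolation Lemma~\ref{lem:main_interpolation} to a carefully chosen vector-valued function on the strip $S_1$, whose value at an interior point $\theta\in(0,1)$ reproduces $\spat{\omega}{\sigma}^{1/2-1/p}\vec{\rho}$ (the quantity whose supremum over $\vec{\omega}$ defines $\|\vec{\rho}\|_{p,\sigma}$), while its boundary values on $\mathrm{Re}(z)=0$ and $\mathrm{Re}(z)=1$ are controlled by $\|\spat{\rho}{\sigma}^{p/4}\vec{\sigma}\|$ and $1$, respectively. By Lemma~\ref{lem:repindep} we work in a standard representation where $\vec{\rho}$ and $\vec{\sigma}$ lie in the natural positive cone, so that $\vec{\rho}=\spat{\rho}{\sigma}^{1/2}\vec{\sigma}$, and we may assume $\|\spat{\rho}{\sigma}^{p/4}\vec{\sigma}\|<\infty$ since otherwise the inequality is trivial.

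For $p\geq 2$, fix a unit vector $\vec{\omega}\in\cH$ and set
$$f(z) := \spat{\omega}{\sigma}^{z/2}\,\spat{\rho}{\sigma}^{p(1-z)/4}\,\vec{\sigma},$$
which is of the form needed for Lemma~\ref{lem:main_interpolation} with $V=\id$, $\vec{\vphi}=\vec{\sigma}$ and affine $h(z)=z/2$, $g(z)=p(1-z)/4$. At $\theta=1-2/p\in[0,1)$ one has $f(\theta)=\spat{\omega}{\sigma}^{1/2-1/p}\vec{\rho}$. On the boundary $\mathrm{Re}(z)=0$, factor $\spat{\rho}{\sigma}^{p(1-it)/4}=\spat{\rho}{\sigma}^{-ipt/4}\spat{\rho}{\sigma}^{p/4}$: since imaginary powers of spatial derivatives are partial isometries, we obtain $C_0\leq\|\spat{\rho}{\sigma}^{p/4}\vec{\sigma}\|$. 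On the boundary $\mathrm{Re}(z)=1$, factor $\spat{\omega}{\sigma}^{(1+it)/2}=\spat{\omega}{\sigma}^{1/2}\spat{\omega}{\sigma}^{it/2}$ and use the defining quadratic form of the spatial derivative to obtain $\|f(1+it)\|^2\leq\langle\vec{\omega}|R^\sigma(v_t)R^\sigma(v_t)^\dag\vec{\omega}\rangle$ with $v_t:=\spat{\rho}{\sigma}^{-ipt/4}\vec{\sigma}$. Granted the bound $R^\sigma(v_t)R^\sigma(v_t)^\dag\leq\id$ (see below), we obtain $C_1\leq 1$, and Lemma~\ref{lem:main_interpolation} yields
$$\|\spat{\omega}{\sigma}^{1/2-1/p}\vec{\rho}\|\,\leq\, C_0^{1-\theta}\,C_1^{\theta}\,\leq\,\|\spat{\rho}{\sigma}^{p/4}\vec{\sigma}\|^{2/p},$$
and taking the supremum over $\vec{\omega}$ and raising to the $p$-th power gives the claim.

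\emph{The main obstacle is the bound on $C_1$}, i.e.\ showing $R^\sigma(v_t)R^\sigma(v_t)^\dag\leq\id$ for $v_t=\spat{\rho}{\sigma}^{-ipt/4}\vec{\sigma}$. In finite dimensions a direct computation reduces $\|f(1+it)\|^2$ to $\omega(P_\rho)\leq 1$. In the algebraic setting this rests on the modular identification $v_t=u_t\vec{\sigma}$ for a partial isometry $u_t\in\cM$ (a Connes-cocycle-type statement from Tomita--Takesaki theory), which forces $R^\sigma(v_t)\in\pi(\cM)'$ to be a partial isometry and thus $R^\sigma(v_t)R^\sigma(v_t)^\dag$ to be a projection; the non-faithful case is handled by restriction to the support of $\sigma$.

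For $1\leq p\leq 2$, the reversed inequality follows by an analogous three-lines argument based on the infimum representation~\eqref{eq:pnorm2} of $\|\vec{\rho}\|_{p,\sigma}$, or alternatively by duality: applying the duality relation~\eqref{eq:norm_duality} with the test vector $\vec{\chi}:=\spat{\rho}{\sigma}^{(p-1)/2}\vec{\sigma}$ one finds $\langle\vec{\rho}|\vec{\chi}\rangle=\|\spat{\rho}{\sigma}^{p/4}\vec{\sigma}\|^2$, and the already-established $p\geq 2$ ALT bound (with conjugate exponent $q=p/(p-1)\geq 2$) controls $\|\vec{\chi}\|_{q,\sigma}$ in terms of $\|\spat{\rho}{\sigma}^{p/4}\vec{\sigma}\|$. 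The divergence inequality $D_\alpha(\rho\|\sigma)\leq\widebar{D}_\alpha(\rho\|\sigma)$ then follows by taking logarithms, the sign of $(\alpha-1)^{-1}$ being consistent with the direction of the norm inequality on the two ranges of $\alpha$.
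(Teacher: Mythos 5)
For $p \geq 2$ your argument is, up to the reflection $z \mapsto 1-z$ of the strip, exactly the paper's: the same function $\spat{\omega}{\sigma}^{h(z)}\spat{\rho}{\sigma}^{g(z)}\vec{\sigma}$ fed into Lemma~\ref{lem:main_interpolation}, with one boundary bounded by $\|\spat{\rho}{\sigma}^{p/4}\vec{\sigma}\|$ via imaginary powers. The difference is the other boundary, which you yourself flag as the main obstacle: you bound $\|\spat{\omega}{\sigma}^{1/2}v_t\|$ with $v_t=\spat{\rho}{\sigma}^{-ipt/4}\vec{\sigma}$ through the quadratic form and an unproven Connes-cocycle claim $R^{\sigma}(v_t)R^{\sigma}(v_t)^{\dag}\leq\id$. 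Two issues there: the form identity $\|\spat{\omega}{\sigma}^{1/2}v_t\|^2=\langle\vec{\omega}|R^{\sigma}(v_t)R^{\sigma}(v_t)^{\dag}\vec{\omega}\rangle$ is only available for $v_t\in\cH_{\sigma}^{*}$, and the functional implemented by $v_t$ need not be dominated by $\sigma$, so this membership has to be argued separately; and the cocycle identity itself (with the contraction property and the non-faithful supports) is not among the tools the paper provides. The paper's proof avoids all of this by invoking Lemma~\ref{lem:modular_identity} (Araki--Masuda's Lemma~C.2), which gives $\|\spat{\omega}{\sigma}^{1/2-it'}\spat{\rho}{\sigma}^{it'}\vec{\sigma}\|=\|\spat{\rho}{\omega}^{it'}\vec{\omega}\|\leq 1$ directly; your $p\geq2$ half is repairable by simply substituting that lemma for the cocycle sketch.

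The $1\leq p\leq 2$ half has a genuine gap. Your duality shortcut needs $\|\vec{\chi}\|_{q,\sigma}\leq\|\spat{\rho}{\sigma}^{p/4}\vec{\sigma}\|^{2/q}$ for $\vec{\chi}=\spat{\rho}{\sigma}^{(p-1)/2}\vec{\sigma}$, but the already-established $p\geq2$ inequality applied to $\vec{\chi}$ only yields $\|\vec{\chi}\|_{q,\sigma}^{q}\leq\|\spat{\chi}{\sigma}^{q/4}\vec{\sigma}\|^{2}$, where $\chi$ is the functional implemented by $\vec{\chi}$ (in finite dimensions $D_{\chi}=D_{\rho}^{(p-1)/2}D_{\sigma}^{2-p}D_{\rho}^{(p-1)/2}$). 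Converting this into a bound by $\|\spat{\rho}{\sigma}^{p/4}\vec{\sigma}\|$ requires the comparison $\widebar{Q}_{q/2}(\chi\|\sigma)\leq\widebar{Q}_{p/2}(\rho\|\sigma)$, which is itself a nontrivial ALT-type statement (the two sides agree only in the commuting case) and is not established anywhere in your argument or in the paper; so the duality route does not close as stated. Your alternative suggestion of an "analogous three-lines argument based on the infimum representation" is only asserted, and it is not a routine mirror of the $p\geq2$ case: since the $1\leq p<2$ norm is an infimum subject to the constraint $\omega'\gg\rho'$, one cannot simply pass the interpolation bound through the optimization. The paper's actual proof (Appendix~B) runs a scalar Hadamard three-lines argument on $f(z)=\Scp{\spat{\omega}{\sigma}^{1-\bar{z}/p}\vec{\sigma}}{\spat{\rho}{\sigma}^{z/2}\vec{\sigma}}$ and needs several additional ingredients you do not supply: Lemma~\ref{lem:modular_identity} on the $\Re(z)=0$ line, norm duality together with Lemma~\ref{lem:representation_Lp} on the $\Re(z)=1$ line, the modular-group invariance argument showing that $\spat{\rho}{\sigma}^{it/2}\vec{\rho}$ still implements $\rho$ (so Lemma~\ref{lem:repindep} applies), and a final density argument over unit vectors $\vec{\omega}$ with $\omega\lll\sigma$. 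As written, the reverse inequality for $1\leq p\leq2$, and hence the divergence inequality $D_{\alpha}\leq\widebar{D}_{\alpha}$ for $\alpha\in[\tfrac12,1)$, is not proved.
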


We prove the second statement for $p \geq 2$ and postpone the case $1 \leq p \leq 2$ to Appendix~\ref{app:misc_lemmas}, due to its more complicated nature. The first statement follows from the independence of the $L_p$-norm concerning the vector representative of the state (Lemma~\ref{lem:repindep}) as well as from the fact that $\vec{\rho} = \spat{\rho}{\sigma}^\half\vec{\sigma}$ if there exists an element $\vec{\sigma} \in \cH$ implementing $\sigma$.

\begin{proof}
 If the right hand-side is infinite the claim holds trivially and hence we assume that $\| \spat{\rho}{\sigma}^{\frac{p}{4}} \vec{\sigma} \|^2<\infty$, from which it follows that $\vec{\sigma}\in \dom\big(\spat{\rho}{\sigma}^{\frac{p}{4}}\big)$. We will prove 
 \begin{align}\label{eq:more_general}
   \big\| \spat{\omega}{\sigma}^{\frac{1}{2} - \frac{1}{p}} \vec{\rho} \big\| \leq  \big\| \spat{\rho}{\sigma}^{\frac{p}{4}} \vec{\sigma} \big\|^{\frac{2}{p}}\,,
 \end{align}
 for $\vec{\omega}\in\cH$ with $\|\vec{\omega} \| = 1$, from which the assertion follows by taking the supremum over all such vectors. Our strategy is again based on Lemma~\ref{lem:main_interpolation} applied to the function
 \begin{align}
   f(z) = \spat{\omega}{\sigma}^{h(z)}\spat{\rho}{\sigma}^{g(z)}\vec{\sigma}
 \end{align}
 with $g(z)=\frac{zp}{4}$, and $h(z)=\frac{1-z}{2}$. It follows by assumption that $\vec{\sigma}\in \dom\big(\spat{\rho}{\sigma}^{g(x)}\big)$ for $x \in \{0,1\}$. Moreover, we can estimate
 \begin{align}
   \norm{f(x+it)} = \left\|\spat{\omega}{\sigma}^{h(x+it)}\spat{\rho}{\sigma}^{g(x+it)}\vec{\sigma}\right\|
\leq\left\|\spat{\omega}{\sigma}^{\frac{1-x}{2}}\spat{\rho}{\sigma}^{\frac{(x+it)p}{4}}\vec{\sigma}\right\|\,,
 \end{align}
 again using that the imaginary power of a spatial derivative is a partial isometry. For $x=0$ we find 
 \begin{align}
\left\|\spat{\omega}{\sigma}^{\frac{1}{2}}\spat{\rho}{\sigma}^{\frac{itp}{4}}\vec{\sigma}\right\|&\leq\left\|\spat{\omega}{\sigma}^{\frac{1}{2}-\frac{itp}{4}}\spat{\rho}{\sigma}^{\frac{itp}{4}}\vec{\sigma}\right\|=\left\|\spat{\rho}{\omega}^{\frac{itp}{4}}\vec{\omega}\right\|\leq 1\,,
 \end{align}
 where we used Lemma~\ref{lem:modular_identity} in the second step as well as $\|\vec{\omega}\|=1$ in the third step. For $x=1$ we get
 \begin{align}
   \norm{f(1+it)} \leq \left\|P_\sigma\spat{\rho}\sigma^{\frac{(1+it)p}{4}}\vec{\sigma}\right\|&\leq\left\|\spat{\rho}{\sigma}^{\frac{p}{4}}\vec{\sigma}\right\|\,.
 \end{align}
 The requirements of Lemma~\ref{lem:main_interpolation} are satisfied and we get
 \begin{align}
   \left\|\spat{\omega}{\sigma}^{\frac{1-\theta}{2}}\spat{\rho}{\sigma}^{\frac{\theta p}{4}}\vec{\sigma}\rangle\right| = \norm{f(\theta)} \leq  \left\|\spat{\rho}{\sigma}^{\frac{p}{4}}\vec{\sigma}\right\|^\theta\,, 
 \end{align}
 for $0\leq\theta\leq1$. Choosing $\theta=\frac{2}{p}$, using that $\spat{\rho}{\sigma}^{\frac{1}{2}}\vec{\sigma}=\vec{\rho}$, and taking the suprema over all $\omega \in \cH_{\rho}$ with $\|\vec{\omega} \| = 1$ implies the claim~\eqref{eq:more_general}.
\end{proof}


\subsubsection{Umegaki's relative entropy as the limit $\alpha \to 1$}

We find that the Araki-Masuda divergence always converges to Umegaki's relative entropy when $\alpha$ approaches $1$  from below and that if $\rho\lll\sigma$ the function $\alpha \to D_{\alpha}(\rho\|\sigma)$ can be continuously extended to the whole range $\alpha \in \big[\frac12,\infty\big]$.

\begin{theorem}
Let $\cM$ be a $W^*$-algebra, $\sigma \in \cP(\cM)$ and $\rho \in \cS(\cM)$. Then, we have
\begin{align}\label{eq:limit1}
\lim_{\alpha\nearrow 1}D_{\alpha}(\rho\|\sigma) = D(\rho\|\sigma)\,.
\end{align}
If furthermore $\rho\lll\sigma$ then we also have
\begin{align}\label{eq:limit2}
\lim_{\alpha \searrow 1}D_{\alpha}(\rho\|\sigma)=D(\rho\|\sigma)\,.
\end{align}
\end{theorem}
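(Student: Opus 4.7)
My plan is to sandwich $D_\alpha(\rho\|\sigma)$ between an upper bound coming from the Araki--Lieb--Thirring inequality just established and a lower bound obtained by testing the variational formula against a single, well-chosen vector; both sides will be shown to converge to $D(\rho\|\sigma)$ as $\alpha\to 1$. For the upper bound, Theorem~\ref{thm:alt} gives $D_\alpha(\rho\|\sigma)\leq\widebar{D}_\alpha(\rho\|\sigma)$ throughout the relevant range, and Proposition~\ref{pr:petzprop}(3) yields $\widebar{D}_\alpha(\rho\|\sigma)\to D(\rho\|\sigma)$ as $\alpha\nearrow 1$ unconditionally, handling~\eqref{eq:limit1}. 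For~\eqref{eq:limit2}, the hypothesis $\rho\lll\sigma$ combined with Proposition~\ref{pr:petzprop}(2) supplies the finiteness condition needed to invoke the right-hand limit in Proposition~\ref{pr:petzprop}(3).

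For the matching lower bound, I would evaluate both variational expressions in Definition~\ref{def:pnorms} at the particular choice $\vec{\omega}=\vec{\rho}$; this is admissible in both the supremum regime ($\alpha\geq 1$) and the infimum regime ($\alpha\leq 1$, where the side condition $\omega'\gg\rho'$ is trivial), and $\|\vec{\rho}\|=1$ since $\rho$ is a state. Tracking the sign of the prefactor $\tfrac{1}{\alpha-1}$ on the two sides of $1$ yields, uniformly for $\alpha\neq 1$,
\begin{equation*}
D_\alpha(\rho\|\sigma)\;\geq\;\frac{\alpha}{\alpha-1}\,\log\bigl\langle\vec{\rho}\,\big|\,\spat{\rho}{\sigma}^{(\alpha-1)/\alpha}\,\vec{\rho}\bigr\rangle\;=\;\frac{1}{\beta}\log\int\lambda^\beta\,d\mu(\lambda),\qquad \beta:=\tfrac{\alpha-1}{\alpha},
\end{equation*}
where $\mu$ denotes the spectral probability measure of $\spat{\rho}{\sigma}$ associated with the unit vector $\vec{\rho}$. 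Recognizing the right-hand side as the rescaled cumulant generating function of $\log\lambda$ at the point $\beta$, a standard computation gives the limit $\int\log\lambda\,d\mu(\lambda)=\langle\vec{\rho}|\log\spat{\rho}{\sigma}\vec{\rho}\rangle=D(\rho\|\sigma)$ as $\beta\to 0$.

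Combining the two bounds and using the monotonicity of $\alpha\mapsto D_\alpha(\rho\|\sigma)$ (Lemma~\ref{lem:alpha_mono}), which ensures that the one-sided limits exist in $[0,\infty]$, immediately produces both~\eqref{eq:limit1} and~\eqref{eq:limit2}. The main technical obstacle is justifying the spectral limit $\beta^{-1}\log\int\lambda^\beta\,d\mu\to\int\log\lambda\,d\mu$ with sufficient uniformity to allow $D(\rho\|\sigma)=+\infty$ --- a situation that can occur only in~\eqref{eq:limit1}; the standard device is to replace $\beta^{-1}\log(1+\beta X)$ by $X$ at leading order and invoke monotone convergence, exploiting that $\beta^{-1}(\lambda^\beta-1)$ is monotone in $\beta$ on each of $(0,1]$ and $[1,\infty)$. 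For~\eqref{eq:limit2} the hypothesis $\rho\lll\sigma$ renders $\spat{\rho}{\sigma}$ bounded, so the $\beta\searrow 0$ limit becomes entirely elementary.
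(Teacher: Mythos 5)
Your proposal is essentially the paper's own proof: the upper bound via the Araki--Lieb--Thirring inequality (Theorem~\ref{thm:alt}) combined with Proposition~\ref{pr:petzprop}(2)--(3), and the lower bound by plugging the admissible choice $\vec{\omega}=\vec{\rho}$ into the variational formula (tracking the sign of $\tfrac{1}{\alpha-1}$ on either side of $1$) and then computing the spectral limit $\beta^{-1}\log\int t^{\beta}\,\mu(\mathrm{d}t)\to\int\log t\,\mu(\mathrm{d}t)$; the paper does this limit via de L'Hospital and dominated convergence, while your monotone-difference-quotient device is a harmless variant. One supporting claim in your last sentence is, however, false: $\rho\lll\sigma$ does \emph{not} make $\spat{\rho}{\sigma}$ bounded --- already $\rho\lll\rho$ holds trivially, yet the modular operator $\spatn{\rho}{\rho}$ of a state is generically unbounded (in the finite-rank picture $\spat{\rho}{\sigma}=D_{\rho}\otimes(D_{\sigma}^{-1})\transp$, and small eigenvalues of $D_\sigma$ are not controlled by the domination constant). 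So the $\beta\searrow 0$ limit in~\eqref{eq:limit2} is not ``entirely elementary'' for the reason you give; what $\rho\lll\sigma$ actually buys is integrability of the tail, namely
\begin{align}
\int_{(1,\infty)} t\,\mu(\mathrm{d}t)\;\leq\;\langle\vec{\rho}|\spat{\rho}{\sigma}\vec{\rho}\rangle\;=\;\big\|R^{\sigma}(\vec{\rho})^{\dag}\vec{\rho}\big\|^{2}\;\leq\;\big\|R^{\sigma}(\vec{\rho})\big\|^{2}<\infty
\end{align}
(this is exactly Proposition~\ref{pr:petzprop}(2)), which dominates $t^{\beta}$ for $\beta\in(0,1)$ on $(1,\infty)$ and lets the dominated-convergence (or your monotone) argument go through. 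With that correction your argument coincides with the paper's.
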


Given the above theorem, it makes sense to define the Araki-Masuda divergence of order one as the limit $D_1(\rho\|\sigma) :=\lim_{\alpha\nearrow 1}D_{\alpha}(\rho\|\sigma)$.

\begin{proof}
We start with the limit from below in~\eqref{eq:limit1} and first show the direction `$\geq$'. We may assume that $D_{\alpha}(\rho\|\sigma)<\infty$ for $\alpha\in[\frac12,1)$ as otherwise the statement is trivial.
We estimate
\begin{align}
D_1(\rho\|\sigma)&=\lim_{\alpha \nearrow 1}\frac{1}{\alpha-1}\log\inf_{\vec{\omega} \in \cH,\,\|\vec{\omega}\|=1}\left\|\spat{\omega}{\sigma}^{\frac12-\frac{1}{2\alpha}} \vec{\rho}\right\|^{2\alpha}\label{eq:geq1}\\
&\geq\lim_{\alpha \nearrow 1}\frac{\alpha}{\alpha-1}\log\langle\vec{\rho}|\spat{\rho}{\sigma}^{1-\frac{1}{\alpha}}\vec{\rho}\rangle\\
&=\lim_{\alpha \nearrow 1}\frac{\alpha}{\alpha-1}\log\int_0^{\infty}t^{1-\frac{1}{\alpha}}\langle\vec{\rho}|P(\mathrm{d}t)\vec{\rho}\rangle\,,
\end{align}
with the measure $P(\mathrm{d}t)$ from the spectral decomposition of $\spat{\rho}{\sigma}$. For $\beta:=1-\frac{1}{\alpha}$ and $\mu(\mathrm{d}t):=\langle\vec{\rho}|P(\mathrm{d}t)\vec{\rho}\rangle$ we calculate with de L'Hospital's rule and the dominated convergence theorem that the left limit evaluates to
\begin{align}\label{eq:beta_limit}
\lim_{\beta\nearrow 0}\frac{\log\left(\int_0^{\infty}t^\beta\mu(\mathrm{d}t)\right)}{\beta} =
\lim_{\beta\nearrow 0}\ \frac{\mathrm{d}}{\mathrm{d}\beta} \int_0^{\infty} t^\beta \mu(\mathrm{d}t) =
\int_0^{\infty}\log(t)\mu(\mathrm{d}t)\,,
\end{align}
from which it follows that
\begin{align}\label{eq:geq2}
D_1(\rho\|\sigma)\geq\langle\vec{\rho}|\log\spat{\rho}{\sigma}\vec{\rho}\rangle=D(\rho\|\sigma)\,.
\end{align}

To show the direction `$\leq$' of~\eqref{eq:limit1} we invoke the ALT inequality in Theorem~\ref{thm:alt} which states that
$D_\alpha(\rho\|\sigma)\leq\widebar{D}_{\alpha}(\rho\|\sigma)$.
The claim then follows from Proposition~\ref{pr:petzprop}, which establishes that
$\lim_{\alpha\nearrow 1}\widebar{D}_{\alpha}(\rho\|\sigma)=D(\rho\|\sigma)$.

Let us now proceed to the limit from above in~\eqref{eq:limit2}.
The direction `$\geq$' follows analogously to~\eqref{eq:geq1}--\eqref{eq:geq2}, where by assumption we have that $D_{\alpha}(\rho\|\sigma)$ is finite for all $\alpha > 1$ and calculate the limit $\beta \searrow 0$. 

To get the direction `$\leq$' we again invoke the ALT inequality in Theorem~\ref{thm:alt} and are left to show
$\lim_{\alpha \searrow 1}\widebar{D}_{\alpha}(\rho\|\sigma)=D(\rho\|\sigma)$ which follows from Proposition~\ref{pr:petzprop}, assertions (3), the requirements of which are satisfied by our assumption $\rho \lll \sigma$, c.f. assertion (2) of Proposition~\ref{pr:petzprop}.
\end{proof}


\subsection{The data-processing inequality}

We consider two $W^*$-algebras $\cM$, $\cN$ and completely positive and unital maps $\cE:\cN\to\cM$. Moreover, we assume that these maps are normal, that is, they have a pre-dual $\cE_*$ mapping the set of normal functionals on $\cM$ into the set of normal functionals on $\cN$. We call $\cE$ a quantum channel from $\cN$ to $\cM$. Assuming $\cM \subset \cB(\cH)$ for some Hilbert space $\cH$, there exists by Stinespring's theorem~\cite{stinespring54} a Hilbert space $\cK$, a $*$-representation $\pi$ of $\cN$ on $\cK$, and an isometry $T:\cH\to\cK$ such that $\cE(a)=T^{\dagger} \pi(a) T$ for all $a\in\cN$. We call the triple $\big(\cK,\pi,T\big)$ a Stinespring dilation of $\cE$.

\begin{theorem}\label{thm:data_processing}
Let $\cM$, $\cN$ be two $W^*$-algebras, $\rho\in\cS(\cM)$, $\sigma\in\cP(\cM)$, and $\cE : \cN \to \cM$ be a quantum channel from $\cN$ to $\cM$. Then, we have
\begin{align}
D_\alpha(\rho\|\sigma)\geq D_\alpha(\cE_*(\rho)\|\cE_*(\sigma))\quad\text{for all $\alpha\in \Big[\frac{1}{2},1\Big )\cup(1,\infty]$.}
\end{align}
In fact, assuming $\cM \subset \cB(\cH)$, for any Stinespring dilation $\big(\cK,\pi,T\big)$ of $\cE$ we have for $p\geq 2$ that
\begin{align}\label{eq:dpi-vectors}
\left\|T\vec{\rho}\right\|_{p,\cE_*(\sigma)}\leq\|\vec{\rho}\|_{p,\sigma}\quad\text{for all $\rho\in\cH$.}
\end{align}
The inequality holds in the opposite direction for $1\leq p\leq 2$.
\end{theorem}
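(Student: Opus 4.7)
The plan is to establish the vector inequality~\eqref{eq:dpi-vectors} for $p\geq 2$ by interpolating between the endpoint cases $p=2$ and $p=\infty$ using the Riesz-Thorin theorem (Theorem~\ref{thm:riesz-thorin}), to derive the reversed inequality for $1\leq p\leq 2$ via H\"older duality, and to pass from the vector statement to the divergence statement by representation-independence (Lemma~\ref{lem:repindep}). The key initial observation is that $T$ is an isometry whose Stinespring data ensures that $T\vec{\rho}$ implements $\cE_*(\rho)$ in the $*$-representation $\pi$ of $\cN$ on $\cK$, since $\langle T\vec{\rho}|\pi(a)T\vec{\rho}\rangle = \langle\vec{\rho}|\cE(a)\vec{\rho}\rangle = \rho(\cE(a))$ for all $a\in\cN$.

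I would begin by checking the two endpoint operator bounds for $T$ viewed as a map between Araki-Masuda spaces. At $p=2$ one may assume $\rho\ll\sigma$ (otherwise the right-hand side is $+\infty$); then $\|\vec{\rho}\|_{2,\sigma}=\|\vec{\rho}\|$ by the remark following Definition~\ref{def:pnorms}, and $\|T\vec{\rho}\|_{2,\cE_*(\sigma)}\leq\|T\vec{\rho}\|=\|\vec{\rho}\|$ since $T$ is an isometry, so $\|T\|_{2,\sigma\to 2,\cE_*(\sigma)}\leq 1$. At $p=\infty$ I would invoke the characterization~\eqref{eq:dinfty}: $\|\vec{\rho}\|_{\infty,\sigma}^{2}$ is the least $C\geq 0$ with $\rho(x)\leq C\sigma(x)$ on $\cM_+$. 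Because $\cE$ is positive, $\cE(y)\in\cM_+$ whenever $y\in\cN_+$, hence $\cE_*(\rho)(y)=\rho(\cE(y))\leq C\,\sigma(\cE(y))=C\,\cE_*(\sigma)(y)$, which forces $\|T\vec{\rho}\|_{\infty,\cE_*(\sigma)}^{2}\leq C$ and thus $\|T\|_{\infty,\sigma\to\infty,\cE_*(\sigma)}\leq 1$.

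Next I would apply Theorem~\ref{thm:riesz-thorin} with $p_0=q_0=2$, $p_1=q_1=\infty$, and $\tau=\cE_*(\sigma)$ to interpolate the two endpoint bounds, yielding $\|T\|_{p,\sigma\to p,\cE_*(\sigma)}\leq 1$ for every $p\in[2,\infty]$ and thus~\eqref{eq:dpi-vectors} for $p\geq 2$. For the reversed inequality on $1\leq p\leq 2$ I would use norm duality~\eqref{eq:norm_duality}: letting $q\in[2,\infty]$ be the H\"older conjugate, and taking any $\vec{\omega}\in\cH_\sigma$ with $\|\vec{\omega}\|_{q,\sigma}\leq 1$, the isometry identity $\langle\vec{\rho}|\vec{\omega}\rangle=\langle T\vec{\rho}|T\vec{\omega}\rangle$ combined with H\"older's inequality~\eqref{eq:hoelder} on $\cK$ and the $q\geq 2$ bound applied to $\vec{\omega}$ gives
\begin{align}
|\langle\vec{\rho}|\vec{\omega}\rangle|\ =\ |\langle T\vec{\rho}|T\vec{\omega}\rangle|\ \leq\ \|T\vec{\rho}\|_{p,\cE_*(\sigma)}\,\|T\vec{\omega}\|_{q,\cE_*(\sigma)}\ \leq\ \|T\vec{\rho}\|_{p,\cE_*(\sigma)}\,\|\vec{\omega}\|_{q,\sigma}\ \leq\ \|T\vec{\rho}\|_{p,\cE_*(\sigma)}.
\end{align}
Taking the supremum over admissible $\vec{\omega}$ then yields $\|\vec{\rho}\|_{p,\sigma}\leq\|T\vec{\rho}\|_{p,\cE_*(\sigma)}$.

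Finally, to deduce the divergence statement I would invoke Lemma~\ref{lem:repindep} to rewrite $\|T\vec{\rho}\|_{p,\cE_*(\sigma)}=\|\cE_*(\rho)\|_{p,\cE_*(\sigma)}$, set $p=2\alpha$, and use that the sign of $1/(\alpha-1)$ flips between $\alpha<1$ and $\alpha>1$ in exactly the way needed so that the two opposite directions of the $L_p$-bound both translate into $D_\alpha(\rho\|\sigma)\geq D_\alpha(\cE_*(\rho)\|\cE_*(\sigma))$ on $\alpha\in[\tfrac12,1)\cup(1,\infty)$; the $\alpha=\infty$ case then follows either from the $p=\infty$ endpoint directly or from monotonicity in $\alpha$ (Lemma~\ref{lem:alpha_mono}). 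The principal technical obstacle -- complex interpolation on the Araki-Masuda $L_p$-spaces -- has already been absorbed into Theorem~\ref{thm:riesz-thorin}, so what remains is really just the careful handling of the two endpoints and of the duality step at low $p$.
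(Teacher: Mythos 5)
Your proposal is correct and follows essentially the same route as the paper: verify the endpoint bounds $\|T\|_{2,\sigma\to 2,\cE_*(\sigma)}\leq 1$ (isometry) and $\|T\|_{\infty,\sigma\to\infty,\cE_*(\sigma)}\leq 1$ (positivity of $\cE$ via~\eqref{eq:dinfty}), interpolate with Theorem~\ref{thm:riesz-thorin} to get all $p\geq 2$, obtain $1\leq p\leq 2$ by the same duality argument with $\langle\vec{\rho}|\vec{\omega}\rangle=\langle T\vec{\rho}|T\vec{\omega}\rangle$, and pass to the divergences via Lemma~\ref{lem:repindep} and the sign of $\frac{1}{\alpha-1}$. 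The only difference is the cosmetic relabeling of the interpolation endpoints ($p_0=2$, $p_1=\infty$ versus the paper's $p_0=\infty$, $p_1=2$).
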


The first assertion of Theorem~\ref{thm:data_processing} follows immediately from the second by noting that $T\vec{\rho}$ is an implementing vector of $\cE_*(\rho)$ for all $\rho\in\cS(\cM)$ with $\vec{\rho}$ implementing~$\rho$. In the limit $\alpha\to1$ we get Umegaki's relative entropy and the data-processing inequality holds as well~\cite{lieb73,uhlmann77}. The case $\alpha=\half$ corresponds to the data-processing inequality of the fidelity and is due to Alberti~\cite{alberti83}. 

\begin{finite}
  In the finite-dimensional case Beigi~\cite{beigi13} used a similar argument to prove the data-processing inequality from Riesz-Thorin for $\alpha > 1$. Data-processing in the complete range $\alpha \in [\frac12,1) \cup (1, \infty)$ was shown by Frank and Lieb~\cite{frank13}.
   Recently it was noticed that Beigi's proof continues to hold for (not necessarily completely) positive unital maps~\cite{hermes15} (recall that we state our results for maps defined in the Heisenberg picture). While our proof also works for $\alpha \in \big[\frac12,1\big)$, we rely on the Stinespring dilation in our setup and are thus restricted to completely positive unital maps.
\end{finite}

\begin{proof}[Proof of Theorem~\ref{thm:data_processing}]
First note that the claim in~\eqref{eq:dpi-vectors} can be equivalently formulated as
\begin{align}
   \|T\|_{p,\sigma\to p,\mathcal{E}_*(\sigma)} &\leq 1 \quad \textrm{for $p \geq 2$} \,.
\end{align}
In the limiting case $p\to\infty$ this statement follows trivially from the expression~\eqref{eq:dinfty} for $\| \cdot \|_{\infty,\sigma}$ and the positivity property of the quantum channel. The case $p=2$ follows (with equality) from $\|\cdot\|_{2,\sigma}=\|\cdot\|$ and the fact that $T$ is an isometry. The general case for $p\geq2$ follows by applying Theorem~\ref{thm:riesz-thorin} with the parameter choices $p_0=q_0=\infty$, $p_1=q_1=2$, $\theta=\frac{2}{p}$, $\sigma=\sigma$, and $\tau=\cE_*(\sigma)$, yielding
\begin{align}
\|T\|_{p,\sigma\to p,\tau}\leq\|T\|^{1-\frac{2}{p}}_{\infty,\sigma\to \infty,\cE_*(\sigma)}\,\|T\|_{2,\sigma\to 2,\cE_*(\sigma)}^{\frac{2}{p}} \leq 1 \,.
\end{align}

For the case $1\leq p\leq2$, we simply use the norm duality in~\eqref{eq:norm_duality} to establish
\begin{align}
\|\vec{\rho}\|_{p,\sigma}=\sup\left\{\left|\langle\vec{\psi}\middle|\vec{\rho}\rangle\right|\,:\,\|\vec{\psi}\|_{q,\sigma}\leq1\right\}&=\sup\left\{\left|\langle T\vec{\psi}\middle|T\vec{\rho}\rangle\right|\,:\,\|\vec{\psi}\|_{q,\sigma}\leq1\right\}\\
&\leq\left\|T\vec{\psi}\right\|_{q,\cN_*(\sigma)}\left\|T\vec{\rho}\right\|_{p,\cN_*(\sigma)}\\
&\leq\left\|T\vec{\rho}\right\|_{p,\cN_*(\sigma)}\,,
\end{align}
for $\frac{1}{q}+\frac{1}{p}=1$, and where we used $T^{\dagger}T=\id$ (since $T$ is an isometry) as well as the statement of the lemma for $p\geq2$.
\end{proof}


\section{Application to hypothesis testing}\label{sec:strong_converse}

We have used Araki-Masuda weighted non-commutative vector valued $L_p$-spaces to define an algebraic generalization of the sandwiched R\'enyi divergences. We have shown various properties of these divergences, including a data-processing inequality and monotonicity in the parameter $\alpha = \frac{p}{2}$. We have also shown that the Araki-Masuda divergences are lower bounds on an earlier non-commutative generalization of R\'enyi divergence by Petz in the range $\alpha \in [0, 2]$. The latter quantities attain operational meaning in binary hypothesis testing on von Neumann algebras, as shown by Jak\u{s}i\'c \emph{et al.}~\cite{jaksic10}. Our work elicits the question whether the Araki-Masuda divergences characterize the strong converse exponent in binary hypothesis testing on von Neumann algebras. Mosonyi and Ogawa~\cite{mosonyiogawa13} showed that this is the case in the finite-dimensional setting. 

Let us recapitulate the notation and setup in~\cite{jaksic10} and the result in~\cite{mosonyiogawa13}
for the case of binary hypothesis testing between identical product states. (We refer the reader to these papers for a more comprehensive discussion.) Let $\cM \subset \cB(\cH)$ be a von Neumann algebra and let $\rho, \tau \in \cS(\cM)$ be represented by vectors $\vec{\rho}, \vec{\tau} \in \cH$. A (hypothesis) test $T$ is a positive contraction in $\cM$, i.e.\ $\{T, \id - T \} \subset \cM_+$.
For any test $T$, we define the errors of the first and second kind as $\rho(\id - T)$ and $\tau(T)$, respectively. We use the von Neumann tensor product~\cite[Ch.~4]{takesakibook1}, denoted $\cM \,\widebar{\otimes}\, \cM$, to define a sequence of von Neumann algebras $\cM_n = \cM^{\widebar{\otimes}n} \subset \cB(\cH^{\otimes n})$, and two sequences of states in $\rho_n, \tau_n \in \cS(\cM_n)$ determined by their implementing vectors $\vec{\rho}_n = \vec{\rho}^{\otimes n}$ and $\vec{\tau}_n = \vec{\tau}^{\otimes n}$ in $\cH^{\otimes n}$. 

Now let us consider a sequence of tests $\{ T_n \}$ such that the error of the second kind satisfies $\tau_n(T_n) \leq \exp(-n r)$ for some $r > D(\rho\|\tau)$. The strong converse to Stein's lemma (see, e.g.~\cite{jaksic10}) tells us that in this case the error of the first kind, $\rho_n(\id - T_n)$, converges to $1$ as $n \to \infty$. In fact, in the finite-dimensional case~\cite{ogawa00} it is known that this convergence is exponential in $n$.
 Let us thus define the optimal strong converse exponent as
\begin{align}
 B_e^{*}(r) := \inf_{ \{ T_n \} }  \left\{  \limsup_{n \to \infty} -\frac{1}{n} \log \rho_n(T_n) \big) \,\middle|\, \liminf_{n \to \infty} -\frac{1}{n} \log \tau_n(T_n) \geq r \right\}
\end{align}
Mosoyniy and Ogawa~\cite{mosonyiogawa13}, again in the finite-dimensional case, show that
\begin{align}
  B_e^{*}(r) = \sup_{\alpha > 1} \frac{\alpha-1}{\alpha} \left( r - D_{\alpha}(\rho\|\tau) \right),
\end{align}
yielding an operational interpretation of the sandwiched R\'enyi divergence for $\alpha \geq 1$. We conjecture that this relation also holds in the general algebraic case with the sandwiched R\'enyi divergence replaced by the Araki-Masuda divergence. 

To support our conjecture we derive a bound in one direction. Following the footsteps of~\cite{mosonyiogawa13}, it is easy to show using the multiplicativity of $L_p$-norms under tensor products\footnote{This property essentially follows from the characterization of vectors with definite $L_p$-norm in Lemma~\ref{lem:representation_Lp}.} and the data-processing inequality (Theorem~\ref{thm:data_processing}) that
\begin{align}
  D_{\alpha}(\rho\|\tau) &= \frac{1}{n} D_{\alpha}(\rho_n\|\tau_n) \label{eq:strong1} \\
   &\geq \frac{1}{n} D_{\alpha} \left( \left( \begin{matrix} \rho_n(T_n) & 0 \\ 0 & \rho_n(\id - T_n) \end{matrix} \right) \,\middle\|\, \left( \begin{matrix} \tau_n(T_n) & 0 \\ 0 & \tau_n(\id - T_n) \end{matrix} \right)\right) \label{eq:strong2} \\
   &\geq \frac{1}{n(\alpha - 1)} \log \rho_n(T_n)^{\alpha} \tau_n(T_n)^{1-\alpha} ,
\end{align}
for any test $T_n$ and any $\alpha > 1$. This implies after some manipulations that
\begin{align}
B_e^{*}(r) \geq \sup_{\alpha > 1} \frac{\alpha-1}{\alpha} \left( r - D_{\alpha}(\rho\|\tau) \right)\,.
\end{align}
Since this derivation holds for all R\'enyi divergences that satisfy additivity~\eqref{eq:strong1} and data-processing~\eqref{eq:strong2}, our conjecture would also imply that the Araki-Masuda divergences are minimal amongst all generalizations of R\'enyi divergences satisfying these two properties.


\section*{Acknowledgments}

We thank Anna Jen\u{c}ov\'a for pointing out a mistake in Eq.~\eqref{eq:pnorm2} in a previous version of this manuscript~\cite{jencova17}. MB and MT thank the Department of Physics at Ghent University, and MB and VBS thank the School of Physics at University of Sydney for their hospitality while part of this work was done.
VBS is supported by the EU through the ERC Qute. MT is funded by an ARC Discovery Early Career Researcher Award (DECRA) fellowship and acknowledges support from the ARC Centre of Excellence for Engineered Quantum Systems (EQUS). 


\appendix

\section{Complex powers of unbounded operators}\label{app:thelemma}

The purpose of this appendix is to collect some basics statements concerning matrix elements of complex powers of spatial derivative operators. In particular, we prove some basic analyticity properties needed for our proofs based on interpolation theory. We start with recalling a lemma about complex powers of unbounded operators, which can, e.g., be found in~\cite[Lem.~VI.2.3]{takesakibook2} and is restated here for the reader's convenience. In order to shorten the notation, we now let for the rest of the appendix the strip $S_{\gamma}$ be defined for real $\gamma >0$ as
\begin{align}
  S_{\gamma} = \{z \in \C \,:\,0 \leq \Re(z) \leq \gamma\}\,.
\end{align}

\begin{lemma}\label{lem:complexpower}
  Let $A$ be a densely define self-adjoint positive operator on a Hilbert space $\cH$, and let $\vec{\psi} \in \dom(A^{\gamma})$, the domain of $A^{\gamma}$. Then the $\cH$-valued function $z \in S_\gamma \mapsto A^z\vec{\psi}$ is continuous and bounded on the closure of $S_\gamma$ and holomorphic in the interior of $S_\gamma$.  
\end{lemma}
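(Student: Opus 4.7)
The plan is to reduce everything to scalar integrals against the spectral measure of $A$, then get uniform boundedness and continuity on $\widebar{S_\gamma}$ by dominated convergence, and finally obtain holomorphy in the open strip through Morera's theorem combined with the fact that weak holomorphy implies strong holomorphy for Hilbert-space-valued functions. By the spectral theorem, write $A = \int_{(0,\infty)} t\,dE(t)$; we may assume $\vec{\psi} \perp \ker A$, since $A^z$ vanishes on $\ker A$ for $\Re(z) > 0$ under the usual convention $0^z = 0$. Let $\mu$ denote the finite Borel measure on $(0,\infty)$ defined by $d\mu(t) := d\|E(t)\vec{\psi}\|^2$. The Borel functional calculus then gives $\|A^z\vec{\psi}\|^2 = \int_0^\infty t^{2\Re(z)}\,d\mu(t)$.

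The decisive elementary inequality is $t^{2\Re(z)} \le 1 + t^{2\gamma}$ for $t > 0$ and $0 \le \Re(z) \le \gamma$, obtained by splitting into the cases $t \le 1$ and $t > 1$. Integrating against $\mu$ gives
\begin{align}
\sup_{z \in \widebar{S_\gamma}} \|A^z\vec{\psi}\|^2 \,\le\, \|\vec{\psi}\|^2 + \|A^\gamma\vec{\psi}\|^2 \,<\, \infty,
\end{align}
which is finite by the hypothesis $\vec{\psi} \in \dom(A^\gamma)$. Continuity on $\widebar{S_\gamma}$ then follows from dominated convergence: for $z_n \to z$ in $\widebar{S_\gamma}$ the integrands $|t^{z_n} - t^z|^2$ tend to $0$ pointwise on $(0,\infty)$ and are dominated uniformly by $4(1 + t^{2\gamma}) \in L^1(\mu)$, so $\|A^{z_n}\vec{\psi} - A^z\vec{\psi}\| \to 0$.

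For holomorphy in the interior I would argue weak-first. Fix $\vec{\phi} \in \dom(A^\gamma)$ and set $F(z) := \langle\vec{\phi}, A^z\vec{\psi}\rangle = \int_0^\infty t^z\,d\nu(t)$, where $\nu$ is the complex spectral measure $\nu(B) := \langle\vec{\phi}, E(B)\vec{\psi}\rangle$. Testing $\nu$ against bounded measurable functions and applying Cauchy-Schwarz on the spectral projections gives $\int_0^\infty t^\alpha\,d|\nu|(t) \le \|\vec{\phi}\|\,\|A^\alpha\vec{\psi}\|$ for each $\alpha \in [0,\gamma]$. Combined with the bound $|t^z| \le t^a + t^b$ for $z$ in a triangle with $\Re z \in [a,b] \subset (0,\gamma)$, this justifies Fubini's theorem, and for any closed triangle $\Delta$ inside the interior of $S_\gamma$,
\begin{align}
\int_{\partial\Delta} F(z)\,dz \,=\, \int_0^\infty \Big(\int_{\partial\Delta} t^z\,dz\Big)d\nu(t) \,=\, 0,
\end{align}
since $z\mapsto t^z$ is entire for each fixed $t>0$. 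By Morera's theorem $F$ is holomorphic in the interior of $S_\gamma$. Because $\dom(A^\gamma)$ is dense in $\cH$ and $\|A^z\vec{\psi}\|$ is uniformly bounded on $\widebar{S_\gamma}$, this extends to all $\vec{\phi} \in \cH$ by Weierstrass' theorem on locally uniform limits of holomorphic functions, so $z \mapsto A^z\vec{\psi}$ is weakly holomorphic and hence, by the weak-implies-strong principle for Hilbert-space valued functions (Dunford), strongly holomorphic in the interior of $S_\gamma$.

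The only real obstacle is the bookkeeping around the justification of Fubini and the control of the total variation $|\nu|$; both reduce to the elementary spectral-theoretic estimates sketched above together with the interpolation inequality $t^\alpha \le 1 + t^\gamma$ for $\alpha \in [0,\gamma]$.
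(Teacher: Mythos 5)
Your proof is correct, and it is worth noting that the paper itself offers no argument for this lemma: it is simply quoted from Takesaki (Lemma~VI.2.3 of the cited book), so your spectral-calculus derivation is a genuinely self-contained alternative rather than a variant of the paper's route. The chain boundedness via $t^{2\Re(z)}\le 1+t^{2\gamma}$, continuity on the closed strip by dominated convergence, vanishing of triangle integrals of $z\mapsto\langle\vec{\phi},A^z\vec{\psi}\rangle$ via Fubini and Morera, and then Dunford's weak-implies-strong holomorphy is sound, and the total-variation estimate $\int t^{\alpha}\,d|\nu|\le\|\vec{\phi}\|\,\|A^{\alpha}\vec{\psi}\|$ is correctly obtained from $\sup_{|g|\le1}|\langle\vec{\phi},g(A)A^{\alpha}\vec{\psi}\rangle|$. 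Two small remarks. First, the detour through $\vec{\phi}\in\dom(A^{\gamma})$ followed by a Weierstrass limit is superfluous: the variation bound uses only $\vec{\psi}\in\dom(A^{\alpha})$ and holds for every $\vec{\phi}\in\cH$, so weak holomorphy is available for all test vectors at once. Second, your reduction to $\vec{\psi}\perp\ker A$ silently fixes a convention on the boundary line $\Re(z)=0$: it is consistent only if $A^{it}$ (and $A^{0}$) are taken to vanish on $\ker A$, i.e.\ if powers are defined relative to the support projection; with the convention $A^{0}=\id$ the map would be discontinuous at the boundary for $\vec{\psi}$ with a kernel component. This is exactly the reading intended in the source (where $A$ is non-singular) and in the paper's remark following the lemma about operators with non-trivial support projection, but it deserves one explicit sentence.
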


Although this lemma is stated for a densely defined operator $A$, it also holds for operators with non-trivial support projection $P_A$, if the domain $\dom(A)$ of $A$ is dense in $P_A\cH$, the support of $A$. This property will be exploited in the proof of the following lemma, which is a slightly generalized form of Lemma~\ref{lem:main_interpolation}. Its proof idea is similar to~\cite[Lem.~A]{araki82}.

\begin{lemma}\label{lem:main_interpolation_app}
  Let $A$ and $B$ be two self-adjoint positive operators on a Hilbert space $\cK$ and $\cH$ such that there domains are dense in their support, a bounded operator $V:\cH \to \cK$ be given. Consider two affine holomorphic functions $g(z) = g_1 z + g_0$, $h(z) = h_1 z + h_0$, with $g_0,g_1,h_0,h_1 \in \Rl$ and a vector $\vec{\vphi}\in \cH$, such that for $x \in \{0,1\}$ the statement $\vec{\vphi} \in \dom(B^{g(x)})$ holds. If the vector-valued function
  \begin{align}
    f: S_1 \to \cK,\quad z \mapsto A^{h(z)}\,V\,B^{g(z)}\vec{\vphi}
  \end{align}
  is uniformly bounded on the boundaries of the strip $S_1$, i.e.\ we have for $x \in \{0,1\}$ that 
  \begin{align}
    C_x := \norm{f(x+it)} < \infty
  \end{align}
  uniform in $t \in \Rl$, then $f(z)$ is holomorphic in the interior of $S_1$ and satisfies
  \begin{align}
    \norm{f(\theta)} \leq C_0^{1-\theta}\,C_1^{\theta}\,, \qquad \textrm{for $0 \leq \theta \leq 1$.}
  \end{align}
\end{lemma}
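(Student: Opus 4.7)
The plan is to reduce to the scalar Hadamard three-lines theorem via pairing with test vectors, handle the unboundedness of $A$ and $B$ by spectral truncation, and pass to the limit. First, I would fix $\vec{\chi} \in \cK$ with $\|\vec{\chi}\| \leq 1$ and consider the scalar function $F_{\vec{\chi}}(z) := \langle \vec{\chi}\,|\,A^{h(z)}\,V\,B^{g(z)}\vec{\vphi} \rangle$; since $\|f(\theta)\| = \sup_{\|\vec{\chi}\|\leq 1} |F_{\vec{\chi}}(\theta)|$, the vector-valued estimate reduces to proving the scalar bound $|F_{\vec{\chi}}(\theta)| \leq C_0^{1-\theta}\,C_1^{\theta}$ uniformly over such $\vec{\chi}$. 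The conceptual obstacle is that $A$ and $B$ are only densely defined on their supports, so a priori $F_{\vec{\chi}}(z)$ has no obvious meaning for generic interior $z$, and even granting its existence one still needs growth control inside $S_1$ to invoke Phragm\'en--Lindel\"of.

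To circumvent this, I would replace $A$ and $B$ by the spectrally truncated operators $A_n := \int_{1/n}^{n} \lambda \,dE_A(\lambda)$ and $B_n := \int_{1/n}^{n} \mu \,dE_B(\mu)$, which are bounded and bounded below on their supports. The corresponding scalar approximants
\begin{align}
F_{\vec{\chi},n}(z) := \langle \vec{\chi}\,|\,A_n^{h(z)}\,V\,B_n^{g(z)}\vec{\vphi} \rangle
\end{align}
are entire and uniformly bounded on the closed strip $S_1$ by a constant depending on $n$, since $\|A_n^{h(z)}\| \leq n^{|h_0|+|h_1|}$ and similarly for $B_n$. The classical Hadamard three-lines theorem therefore applies and yields $|F_{\vec{\chi},n}(\theta)| \leq C_{0,n}^{1-\theta}\,C_{1,n}^{\theta}$, with $C_{x,n} := \sup_{t \in \Rl} |F_{\vec{\chi},n}(x+it)|$.

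The final step is to pass to the limit $n \to \infty$, which is the main technical hurdle. On the boundary, the hypothesis $\vec{\vphi} \in \dom(B^{g(x)})$ together with the uniform bounds $\|f(x+it)\| \leq C_x$ imply, via dominated convergence applied to the spectral resolutions of $A$ and $B$, that $F_{\vec{\chi},n}(x+it) \to F_{\vec{\chi}}(x+it)$ as $n \to \infty$ and $\limsup_n C_{x,n} \leq C_x$. To obtain convergence at interior points one needs $\vec{\vphi} \in \dom(B^{g(\theta)})$ and $VB^{g(\theta)}\vec{\vphi} \in \dom(A^{h(\theta)})$ for $\theta \in (0,1)$; the first membership is a direct consequence of the spectral interpolation inclusion $\dom(B^{g(0)}) \cap \dom(B^{g(1)}) \subset \dom(B^{g(\theta)})$, and the second is extracted from the uniform Hadamard bounds on $F_{\vec{\chi},n}$ by varying $\vec{\chi}$ over $\dom(A^{h(\theta)})$. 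Assembling these pieces yields the scalar estimate $|F_{\vec{\chi}}(\theta)| \leq C_0^{1-\theta}\,C_1^{\theta}$ and hence, by taking the supremum over $\vec{\chi}$, the norm bound of the lemma. The holomorphy of $f$ in the interior of $S_1$ follows from Vitali's theorem applied to the locally uniformly bounded sequence $\{F_{\vec{\chi},n}\}$, which converges pointwise for each $\vec{\chi}$ in a dense subset of $\cK$, giving weak holomorphy of $f$ and hence strong holomorphy by Dunford's theorem.
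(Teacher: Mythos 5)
Your overall strategy (reduce to a scalar three-lines bound by pairing with test vectors, tame the unboundedness by spectral cut-offs, then pass to the limit) founders at exactly the point where the difficulty of the lemma sits: the claims that $F_{\vec{\chi},n}(x+it)\to F_{\vec{\chi}}(x+it)$ for arbitrary $\|\vec{\chi}\|\leq 1$ and, more importantly, that $\limsup_n C_{x,n}\leq C_x$, do not follow from dominated convergence. Writing $E_n$ and $E_n'$ for the spectral truncations of $A$ and $B$, on the boundary one has $A_n^{h(x+it)}VB_n^{g(x+it)}\vec{\vphi}=E_nA^{h(x+it)}E_nV E_n'B^{g(x+it)}\vec{\vphi}$. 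Dominated convergence does give $VE_n'B^{g(x+it)}\vec{\vphi}\to VB^{g(x+it)}\vec{\vphi}$ in norm, but $V$ commutes with neither spectral measure, so the insertion of $E_n'$ perturbs the vector to which the unbounded power of $A$ is applied; since $A^{h(x+it)}$ is closed but not continuous, norm convergence of the arguments gives no control of $\norm{A_n^{h(x+it)}VB_n^{g(x+it)}\vec{\vphi}}$ in terms of the uniform bound $C_x$ on $\norm{f(x+it)}$. Hence $C_{x,n}$ may be much larger than $C_x$, your Hadamard estimate for $F_{\vec{\chi},n}$ carries the wrong constants, and the later steps (interior domain membership, the Vitali/weak-holomorphy argument), which all rest on these uniform-in-$n$ bounds, collapse with it. For the same reason the boundary convergence for arbitrary $\vec{\chi}$ is unjustified; it only works for $\vec{\chi}\in\dom(A^{h(0)})\cap\dom(A^{h(1)})$, where the truncated $A$-power can be thrown onto $\vec{\chi}$.

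The way around this---and it is what the paper's proof does---is to truncate nothing and never apply an $A$-power to a perturbed vector: restrict the test vector $\vec{\psi}$ to the dense set of vectors in $\dom(A^{h(0)})\cap\dom(A^{h(1)})$ and consider $\tilde{f}(z)=\Scp{A^{h(\bar z)}\vec{\psi}}{V B^{g(z)}\vec{\vphi}}$, which is holomorphic in the interior and bounded on the closure of $S_1$ by Lemma~\ref{lem:complexpower} applied to both factors, with no cut-off needed. On the boundary, the hypothesis $\norm{f(x+it)}<\infty$ guarantees $VB^{g(x+it)}\vec{\vphi}\in\dom(A^{h(x+it)})$, so the $A$-power can be moved back across the inner product, giving exactly $|\tilde{f}(x+it)|\leq\norm{\vec{\psi}}\,C_x$; the three-lines theorem then yields $|\tilde{f}(\theta+it)|\leq\norm{\vec{\psi}}\,C_0^{1-\theta}C_1^{\theta}$, and the uniformity in $\vec{\psi}$ delivers both the membership $VB^{g(z)}\vec{\vphi}\in\dom(A^{h(z)})$ in the interior and the weak holomorphy of $f$, hence the claim. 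Your closing idea of extracting domain membership by varying the test vector over a domain is the right instinct, but it has to be run on an untruncated family with the correct boundary constants, not on $F_{\vec{\chi},n}$.
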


\begin{proof}
  We choose an element $\vec{\psi} \in \cH$ such that $\vec{\psi} \in \dom(A^{h(x)})$ for $x \in \{0,1\}$. Moreover, we have $\vec{\vphi} \in \dom(B^{g(x)})$, $x \in \{0,1\}$, from which we can infer by Lemma~\ref{lem:complexpower} that the $\cK$-valued functions $z\in S_1 \mapsto A^{h(z)}\vec{\psi}$ and $z \in S_1 \mapsto V B^{g(z)}\vec{\vphi}$ are holomorphic in the interior of $S_1$ and bounded on its closure. It follows by Hartog's theorem that the function
  \begin{align}
    \tilde{f}: z \in S_1 \mapsto \Scp{A^{h(\bar{z})}\vec{\psi}}{V\,B^{g(z)}\vec{\vphi}}
  \end{align}
  is also holomorphic in the interior and bounded continuous on the closure of $S_1$. We may then apply Hadamard's three-line theorem and arrive at
  \begin{align}
    |\sup_{t\in \Rl}\tilde{f}(\theta + it)| \leq \left(|\sup_{t\in \Rl}\tilde{f}(0+it)|\right)^{1-\theta} \, \left(|\sup_{t\in \Rl}\tilde{f}(1 + it)|\right)^{\theta}
  \end{align}
  for $\theta \in [0,1]$. The assumption $\norm{f(x+it)} < \infty$ implies that the vectors $B^{g(x+it)}\vec{\vphi}$ are actually elements of $\dom(A^{h(x+it)})$, again for $x\in \{0,1\}$, and hence
  \begin{align}
    \tilde{f}(x + it)&= \Scp{A^{h(x+it)}\vec{\psi}}{V B^{g(x+it)}\vec{\varphi}} = \Scp{\vec{\psi}}{A^{h(x+it)} V B^{g(x+it)}\vec{\varphi}} \leq \norm{\vec{\psi}} \,C_x\,.
  \end{align}
  This yields the upper bound
  \begin{align}
    |\tilde{f}(\theta + it)| \leq \norm{\vec{\psi}}\, C_0^{1-\theta}\,C_1^{\theta}\,,
  \end{align}
  for all $t \in \Rl$ and $0 \leq \theta \leq 1$. Due to the uniform bound in terms of the $\norm{\vec{\psi}}$ it then follows that $V\,B^{g(z)}\vec{\psi}$ is in the domain of $A^{h(z)}$ for all values $z \in S_1$ and hence we have
  \begin{align}
    \Scp{\vec{\psi}}{f(z)} = \tilde{f}(z)\,.
  \end{align}
  Hence the function $\Scp{\vec{\psi}}{f(z)}$ is weakly holomorphic and satisfies
  \begin{align}
    |\Scp{\vec{\psi}}{f(\theta + it)}| \leq \norm{\vec{\psi}} \, C_0^{1-\theta}\,C_1^{\theta}\,,
  \end{align}
   for all $t \in \Rl$ and $0 \leq \theta \leq 1$ and a dense set of vector $\vec{\psi}$. The assertion follows.
\end{proof}


\section{Properties of Petz' R\'enyi divergence}\label{app:misc_lemmas}

Here we present some properties of Petz' version of R\'enyi divergences, which proof is quite standard and thus only sketched. 

\begin{proof}[Proof of Proposition~\ref{pr:petzprop}]
  We start with (1), the statement on monotonicity. Applying Lemma~\ref{lem:main_interpolation} as in the proof of Lemma~\ref{lem:simpleinterpolate}, we find in an analogous way that the function $t \mapsto \log \|\spat{\rho}{\sigma}^{\frac{t}{2}}\vec{\rho}\|$ with $\vec{\rho} \in \cG_{\rho}$ is convex on $[-1,1]$ for any $\sigma \in \cP(\cM)$. We then proceed as in the proof of Lemma~\ref{lem:alpha_mono}.

  For (2) we note that by the definition of the map $R^\sigma(\vec{\rho})$ we have
  \begin{align}
    \widebar{D}_2(\rho\|\sigma)&=\log\langle\vec{\rho}|\spat{\rho}{\sigma}\vec{\rho}\rangle
    =\log\left\|R^{\sigma}(\vec{\rho})^{\dagger}(\vec{\rho})\right\|^2\notag\\
    &\leq\log\left\|R^{\sigma}(\vec{\rho})^{\dagger}\right\|^{2}=D_{\infty}(\rho\|\sigma)\,, 
  \end{align}
  and the assertion follows.

  Now consider the statements in (3). The limit from below is stated in~\cite[Sec.~7]{ohya93} (or alternatively see~\cite[Prop.~5.4(3)]{jaksic12}). For the limit from above, let $P(\mathrm{d}t)$ denote the spectral measure of the positive self-adjoint operator $\spat{\rho}{\sigma}$, and set $\mu(\mathrm{d}t) = \Scp{\vec{\rho}}{P(dt) \vec{\rho}}$. Since the Petz R\'enyi divergences are assumed to be finite in the open interval $(1,2)$, we may invoke L'Hospital's rule to calculate the right limit
  \begin{align}
   \lim_{\alpha \searrow 1}\widebar{D}_{\alpha}(\rho\|\sigma) = \lim_{\alpha \searrow 1}\frac{\log\left(\int_0^{\infty}t^{\alpha-1}\mu(\mathrm{d}t)\right)}{\alpha-1} = \int_0^{\infty} \log(t) \mu(\mathrm{d}t) = D(\rho\|\sigma) \,.
  \end{align}
\end{proof}

Next we provide the proof of the ALT inequality for the case $1\leq p \leq 2$, which is unfortunately less elegant than for $p\geq2$.

\begin{proof}[Proof of Theorem~\ref{thm:alt} for the case $1\leq p\leq2$.]
Let $\vec{\omega}\in\cH$ with $\|\vec{\omega} \| = 1$ such that $\omega\lll\sigma$ for $\omega\in\cS(\cM)$ implemented by $\vec{\omega}$. It follows that $\vec{\sigma}\in \dom\left(\spat{\omega}{\sigma}\right)$ and we also (trivially) have $\vec{\sigma}\in \dom\left(\spat{\rho}{\sigma}^{\frac{1}{2}}\right)$. By Hartog's theorem and Lemma~\ref{lem:complexpower} the function
\begin{align}
  f(z):=\Scp{\spat{\omega}{\sigma}^{1-\frac{\bar{z}}{p}}\vec{\sigma}}{\spat{\rho}{\sigma}^{\frac{z}{2}}\vec{\sigma}}
\end{align}
is holomorphic in the interior and bounded on the closure of $S_1$. We may then apply Hadamard's three-line theorem and arrive at
\begin{align}
\left|f\left(\frac{p}{2}\right)\right|\leq\sup_{t\in\Rl}|f(1+it)|^{\frac{p}{2}}\sup_{t\in\Rl}|f(it)|^{1-\frac{p}{2}}\,.
\end{align}
We estimate
\begin{align}
  |f(it)|&=\left|\Scp{\spat{\omega}{\sigma}^{1+\frac{it}{p}}\vec{\sigma}}{\spat{\rho}{\sigma}^{\frac{it}{2}}\vec{\sigma}}\right| = \left|\Scp{\spat{\omega}{\sigma}^{\half+\frac{it}{p}}\vec{\omega}}{\spat{\rho}{\sigma}^{\frac{it}{2}}\vec{\sigma}}\right|\\
         &=\left|\Scp{\spat{\omega}{\sigma}^{it\left(\frac{1}{p}+\frac{1}{2}\right)}\vec{\omega}}{\spat{\omega}{\sigma}^{\half+\frac{it}{2}}\spat{\rho}{\sigma}^{\frac{it}{2}}\vec{\sigma}}\right|\\
         &\leq \norm{\spat{\omega}{\sigma}^{\half+\frac{it}{2}}\spat{\rho}{\sigma}^{\frac{it}{2}}\vec{\sigma}}\,,
\end{align}
where we used multiple times that the imaginary power of the spatial derivative is an isometry. We may then invoke Lemma~\ref{lem:modular_identity} for $z=\frac{it}{2}$ and get that
\begin{align}
  |f(it)| \leq \norm{\spat{\rho}{\omega}^{\frac{it}{2}}\vec{\omega}} \leq 1\,.
\end{align}
We are left to estimate the value of the function on the line $1+it$. This yields
\begin{align}
  |f(1+it)| &= \left|\Scp{\spat{\omega}{\sigma}^{1-\frac{1}{p}-\frac{it}{p}}\vec{\sigma}}{\spat{\rho}{\sigma}^{\frac{it}{2}}\vec{\rho}}\right| \\
  &\leq \norm{\spat{\omega}{\sigma}^{\frac{1}{q}-\frac{it}{p}}\vec{\sigma}}_{q,\sigma} \, \norm{\spat{\rho}{\sigma}^{\frac{it}{2}}\vec{\rho}}_{p,\sigma}
\end{align}
by norm duality on the space $\cH_\sigma$ for $q$ determined by $\frac{1}{p}+\frac{1}{q} = 1$. Invoking Lemma~\ref{lem:representation_Lp} then yields a value of one for the first factor. This is almost what we aimed for, except of the appearance of $\spat{\rho}{\sigma}^{\frac{it}{2}}$ in the second factor. However, let us consider the state on $\cM$ induced by this vector:
\begin{align}
  a \in \cM \mapsto \Scp{\spat{\rho}{\sigma}^{\frac{it}{2}}\vec{\rho}}{\pi_\sigma(a)\spat{\rho}{\sigma}^{\frac{it}{2}}\vec{\rho}} = \Scp{\vec{\rho}}{\spat{\rho}{\sigma}^{\frac{-it}{2}}\pi_\sigma(a)\spat{\rho}{\sigma}^{\frac{it}{2}}\vec{\rho}}\,.
\end{align}
Standard facts from the modular theory of von Neumann algebras (see, e.g., \cite[Thm VIII.1.2 and Thm. IX.3.8]{takesakibook2}) now imply that the one-parameter group, called the modular group of the state $\rho$,
\begin{align}
  a \in \cM \mapsto d^{\rho}_t(a):= \spat{\rho}{\sigma}^{\frac{-it}{2}}\pi_\rho(a)\spat{\rho}{\sigma}^{\frac{it}{2}}
\end{align}
is independent of the state $\sigma$ and leaves the state $\rho$ in variant. That is, we have $\rho(d^\rho_t(a)) = \rho(a)$. It follows that the vector $\spat{\rho}{\sigma}^{\frac{it}{2}}\vec{\rho}$ is in fact implementing the state $\rho$ and by Lemma~\ref{lem:repindep} we arrive at the bound
\begin{align}
\left|\Scp{\vec{\omega}}{\spat{\rho}{\sigma}^{\frac{p}{4}}\vec{\sigma}}\right| = \left|\Scp{\spat{\omega}{\sigma}^{\half}\vec{\sigma}}{\spat{\rho}{\sigma}^{\frac{p}{4}}\vec{\sigma}}\right| \leq \norm{\rho}_{p,\sigma}^{\frac{p}{2}}\,.
\end{align}
The assertion follows by taking the supremum over vectors $\vec{\omega} \in \cH$ with $\norm{\vec{\omega}} =1$ such that $\omega \lll \sigma$, since vectors with this property are dense in $\cH_\sigma$.
\end{proof}


\section{Facts about $L_p$-spaces and spatial derivatives}

Here we collect a few lemmas about Araki-Masuda $L_p$-spaces, which were first proven in~\cite{araki82} and immediately generalize to our setting. We do not repeat the arguments here but point the reader to the relevant sections in Araki and Masuda's paper. We notify the reader that compared to~\cite{araki82}, the roles of $\cM$ and its commutant $\cM^\prime$ have to be switched, i.e.\ our spatial derivative equals the relative modular operator defined with respect to states on the commutant $\cM^\prime$ of $\cM$. We start by stating a lemma constructing a dense set of vectors with bounded Araki-Masuda $L_p$-norm. 

\begin{lemma}\label{lem:representation_Lp}
Let $\vec{\sigma}\in\cH$ be a vector implementing $\sigma\in\cP(\cM)$. Furthermore, let $\vec{\omega}\in\cH$ and let $\cM \subset \cB(\cH)$. Then, we have
\begin{align}
\left\|u\spat{\omega}{\sigma}^{\frac{1}{p}+it}\vec{\sigma}\right\|_{p,\sigma}\leq\|\vec{\omega}\|^{\frac{2}{p}}\,,
\end{align}
for $1\leq p\leq\infty$ and all $u\in\cM^\prime$ with $\|u\|\leq1$. Moreover, for $2 \leq p \leq \infty$ there exist for $\vec{\rho} \in \cH$ with $\norm{\vec{\rho}}_{p,\sigma} < \infty$ a unique vector $\vec{\omega}\in\cH$ as well as a partial isometry $u \in \cM^\prime$ such that 
\begin{align}
  \vec{\rho} = u\spat{\omega}{\sigma}^{\frac{1}{p}}\vec{\sigma}\quad\text{as well as}\quad\norm{\vec{\rho}}_{p,\sigma} = \|\vec{\omega}\|^{\frac{2}{p}}\,.
\end{align}
\end{lemma}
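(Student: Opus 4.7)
Both parts of the lemma follow the strategy of Araki and Masuda; the plan is to reprove the bound via complex interpolation and to obtain the representation by a polar-decomposition argument, being careful to handle the non-faithful case.

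For the first part and $2\le p\le\infty$, I fix a unit vector $\vec{\chi}\in\cH$ and apply Lemma~\ref{lem:main_interpolation} to
\begin{align*}
  f(z) \;=\; \spat{\chi}{\sigma}^{(1-z)/2}\, u\, \spat{\omega}{\sigma}^{z/2+it}\,\vec{\sigma},
\end{align*}
interpolating between $z=1$ (the case $p=2$) and $z=0$ (the case $p=\infty$). At $z=1$, the identity $\spat{\omega}{\sigma}^{1/2}\vec{\sigma}=\vec{\omega}$, isometry of imaginary powers on the support of $\omega$, and $\|u\|\le 1$ reduce the boundary value to $\|u\spat{\omega}{\sigma}^{it}\vec{\omega}\|\le\|\vec{\omega}\|$. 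At $z=0$, the Connes cocycle identity $\spat{\omega}{\sigma}^{it}\vec{\sigma}=\pi(u_t)\vec{\sigma}$ (with $u_t\in\cM$ a contraction, coming from $\spat{\omega}{\sigma}^{it}\spat{\sigma}{\sigma}^{-it}\in\pi(\cM)$ and $\spat{\sigma}{\sigma}^{it}\vec{\sigma}=\vec{\sigma}$), combined with the variational characterization of the $\infty$-norm, shows that the functional implemented by $u\pi(u_t)\vec{\sigma}$ is dominated by $\sigma$; hence the boundary value is at most $1=\|\vec{\omega}\|^{0}$. Lemma~\ref{lem:main_interpolation} applied at $z=\theta=2/p$ and a supremum over $\vec{\chi}$ gives the assertion for $p\ge 2$. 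The range $1\le p<2$ then follows from the norm duality~\eqref{eq:norm_duality}: the H\"older-dual exponent $q$ lies in the range just handled, and the dual pairing converts the $L_p$-estimate into the corresponding $L_q$-estimate.

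For the second part, I would exploit the polar decomposition of the densely defined operator $R^\sigma(\vec{\rho})\colon\cH_\sigma\to\cH$ associated to $\vec{\rho}$ with $\|\vec{\rho}\|_{p,\sigma}<\infty$. This operator intertwines $\pi_\sigma$ and $\pi$, so its partial-isometry factor lies in $\pi(\cM)'$; after raising its absolute value to the power $p/2$ it can be identified with a spatial derivative $\spat{\omega}{\sigma}$ for a unique $\omega\in\cP(\cM)$, and any implementing vector $\vec{\omega}\in\cH$ of $\omega$ then has $\|\vec{\omega}\|^{2/p}=\|\vec{\rho}\|_{p,\sigma}$ as a consequence of the first part applied to the constructed $(u,\vec{\omega})$. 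Applying the polar decomposition to $\vec{\sigma}$ reproduces $\vec{\rho}=u\spat{\omega}{\sigma}^{1/p}\vec{\sigma}$, and uniqueness follows from the standard uniqueness of the polar decomposition. The main obstacle is the non-faithful case: when $\sigma$ is not faithful, $\cH_\sigma$ is a proper subspace of $\cH$, the spatial derivative lives only there, and the polar decomposition of $R^\sigma(\vec{\rho})$ together with the identification of its absolute value with a power of $\spat{\omega}{\sigma}$ must be carried out on $\cH_\sigma$ while still producing a $u$ in the commutant $\pi(\cM)'$ of the full representation rather than only in the commutant of $\pi_\sigma(\cM)$. This control of supports is the step that goes beyond Araki--Masuda's original treatment of faithful $\sigma$.
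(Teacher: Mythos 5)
Your interpolation skeleton for the norm bound is in the spirit of Araki--Masuda, and note that the paper itself does not reprove this lemma at all: it simply cites \cite[Lem.~4.1]{araki82} (with the remark that the roles of $\cM$ and $\cM'$ are switched and that the non-faithful case goes through), so you are attempting strictly more than the paper does. Within your attempt, however, the $z=0$ boundary estimate has a genuine problem as written. You assert $\spat{\omega}{\sigma}^{it}\vec{\sigma}=\pi(u_t)\vec{\sigma}$ with a contraction $u_t\in\cM$ and then conclude that the functional implemented by $u\pi(u_t)\vec{\sigma}$ is dominated by $\sigma$. For a contraction $c\in\cM$ one only gets $\xi(a)\leq\sigma(c^{\dag}ac)$, and $\sigma(c^{\dag}ac)\leq\sigma(a)$ is false in general: in finite dimensions with $c=D_\omega^{it}D_\sigma^{-it}$ one finds $\sigma(c^{\dag}ac)=\tr\big[D_\omega^{it}D_\sigma D_\omega^{-it}a\big]$, which is not bounded by $\tr[D_\sigma a]$. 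The domination argument works precisely when the cocycle is a contraction in the \emph{commutant}, i.e.\ $\spat{\omega}{\sigma}^{it}\vec{\sigma}=u_t'\vec{\sigma}$ with $u_t'\in\cM'$; with the paper's conventions (the spatial derivative is the relative modular operator with respect to functionals on $\cM'$, cf.\ the remark opening the appendix, and the $u$ in the lemma itself lies in $\cM'$) this is the correct membership, so you have the right mechanism but have placed the cocycle in the wrong algebra, and taken literally your justification of $C_0\leq 1$ breaks. (An alternative is to control this boundary via Lemma~\ref{lem:modular_identity}, as the paper does in the analogous step of the proof of Theorem~\ref{thm:alt}.) The reduction of $1\leq p<2$ to duality is also under-justified: after Cauchy--Schwarz you need that $\|u^{\dag}\vec{\psi}\|_{q,\sigma}\leq\|\vec{\psi}\|_{q,\sigma}$ for a contraction $u\in\cM'$, which you do not address.

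The second part is where the sketch genuinely fails. The positive part of the polar decomposition of $R^{\sigma}(\vec{\rho})$, raised to the power $p/2$, does \emph{not} yield the required $\omega$ except for $p\in\{2,\infty\}$ (or in the commutative case): in finite dimensions $R^{\sigma}(\vec{\rho})^{\dag}R^{\sigma}(\vec{\rho})$ corresponds (up to supports) to $D_\sigma^{-1/2}D_\rho D_\sigma^{-1/2}$, whereas the $\omega$ appearing in the decomposition $\vec{\rho}=u\spat{\omega}{\sigma}^{1/p}\vec{\sigma}$ must have density $\big(D_\sigma^{\frac{1}{p}-\frac12}D_\rho D_\sigma^{\frac{1}{p}-\frac12}\big)^{p/2}$, which is not a function of $D_\sigma^{-1/2}D_\rho D_\sigma^{-1/2}$ unless $D_\rho$ and $D_\sigma$ commute. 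Producing this sandwiched object in the algebraic setting---in particular showing that finiteness of $\|\vec{\rho}\|_{p,\sigma}$ yields a \emph{normal} functional $\omega\in\cP(\cM)$ together with the partial isometry $u\in\cM'$---is exactly the nontrivial content of Araki--Masuda's Lemma~4.1 and does not follow from the polar decomposition of $R^{\sigma}(\vec{\rho})$ plus functional calculus. Your closing concern about support control in the non-faithful case is legitimate (and is the one point the paper glosses over when citing \cite{araki82}), but it is secondary to this gap.
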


\begin{proof}
  This follows for $1\leq p< \infty$ as in~\cite[Lem 4.1]{araki82}. The case $p=\infty$ is trivial.
\end{proof}

The following Lemma provides some simple estimates on the value of the $L_p$-norms.

\begin{lemma}\label{lem:lem6.1_araki}
Let $\rho, \sigma \in \cP(\cM)$. For $2\leq p\leq\infty$ we have $\|\rho\|_{p,\sigma}\geq\|\vec{\rho}\|\|\vec{\sigma}\|^{\frac{2}{p}-1}$ and for $1\leq p\leq2$ the inequality reverses.
\end{lemma}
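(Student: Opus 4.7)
My plan is to handle $p\geq 2$ directly via the representation formula for $L_p$-vectors, and to deduce the reversed inequality on $[1,2]$ from it by norm duality. The cases where $\|\vec{\rho}\|_{p,\sigma}=+\infty$ are trivial and can be discarded; in particular for $p>2$ one is left with $\rho\ll\sigma$, and for $p=\infty$ the claim follows directly from $\vec{\rho}=R^\sigma(\vec{\rho})\vec{\sigma}$ together with $\|\vec{\rho}\|_{\infty,\sigma}=\|R^\sigma(\vec{\rho})\|$. By representation-independence (Lemma~\ref{lem:repindep}) we may assume throughout that we work in a Hilbert space in which both $\rho$ and $\sigma$ are implemented, e.g., the GNS space~$\cH_\sigma$.

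For $2\leq p<\infty$, Lemma~\ref{lem:representation_Lp} provides a vector $\vec{\omega}\in\cH_\sigma$ and a partial isometry $u$ in the commutant of $\pi_\sigma(\cM)$ such that $\vec{\rho}=u\,\spat{\omega}{\sigma}^{1/p}\vec{\sigma}$ and $\|\vec{\rho}\|_{p,\sigma}=\|\vec{\omega}\|^{2/p}$. The contraction $\|u\|\leq 1$ gives $\|\vec{\rho}\|^2\leq\langle\vec{\sigma}|\spat{\omega}{\sigma}^{2/p}\vec{\sigma}\rangle$. I would then apply operator concavity of $t\mapsto t^{2/p}$ to the vector functional $X\mapsto\langle\vec{\sigma}|X\vec{\sigma}\rangle$ to obtain $\langle\vec{\sigma}|\spat{\omega}{\sigma}^{2/p}\vec{\sigma}\rangle\leq\|\vec{\sigma}\|^{2(1-2/p)}\langle\vec{\sigma}|\spat{\omega}{\sigma}\vec{\sigma}\rangle^{2/p}$, and conclude by noting that $\langle\vec{\sigma}|\spat{\omega}{\sigma}\vec{\sigma}\rangle=\|R^\sigma(\vec{\sigma})^{\dag}\vec{\omega}\|^2\leq\|\vec{\omega}\|^2$, since $R^\sigma(\vec{\sigma})$ is a partial isometry as $\vec{\sigma}$ implements $\sigma$. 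Chaining these three estimates and substituting $\|\vec{\omega}\|^{4/p}=\|\vec{\rho}\|_{p,\sigma}^2$ yields $\|\vec{\rho}\|^2\leq\|\vec{\sigma}\|^{2(1-2/p)}\|\vec{\rho}\|_{p,\sigma}^2$, which rearranges to the claim.

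For $1\leq p\leq 2$, let $q\geq 2$ denote the H\"older conjugate of $p$. The norm duality~\eqref{eq:norm_duality} writes $\|\vec{\rho}\|_{p,\sigma}$ as the supremum of $|\langle\vec{\rho}|\vec{\omega}\rangle|$ over $\vec{\omega}$ with $\|\vec{\omega}\|_{q,\sigma}\leq 1$. Applying the case just established (at exponent $q$) to such a $\vec{\omega}$ gives $\|\vec{\omega}\|\leq\|\vec{\omega}\|_{q,\sigma}\,\|\vec{\sigma}\|^{1-2/q}\leq\|\vec{\sigma}\|^{2/p-1}$, and Cauchy--Schwarz then bounds $|\langle\vec{\rho}|\vec{\omega}\rangle|\leq\|\vec{\rho}\|\,\|\vec{\sigma}\|^{2/p-1}$. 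Taking the supremum over $\vec{\omega}$ yields the reversed bound.

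The only technical subtlety I anticipate is the Jensen step in the $p\geq 2$ argument, since $\spat{\omega}{\sigma}$ is generally unbounded and $\vec{\sigma}$ need not lie in its domain. This is handled by the extension of the Hansen--Pedersen Jensen inequality to unbounded positive operators already invoked in the proof of Lemma~\ref{lem:repindep} and collected in Appendix~\ref{app:hansen}; every other step is either a direct consequence of an identity already established or an elementary scalar estimate.
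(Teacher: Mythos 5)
Your argument is correct in substance, but it is not a comparison with a written-out proof: the paper disposes of this lemma with a one-line citation to Araki--Masuda's original Lemma~6.1, so what you have produced is a self-contained derivation assembled from ingredients the paper does state explicitly -- the representation $\vec{\rho}=u\spat{\omega}{\sigma}^{1/p}\vec{\sigma}$ with $\|\vec{\rho}\|_{p,\sigma}=\|\vec{\omega}\|^{2/p}$ (Lemma~\ref{lem:representation_Lp}), the fact that $R^{\sigma}(\vec{\sigma})$ is a partial isometry, and norm duality~\eqref{eq:norm_duality} to transfer the bound from $p\geq 2$ to the conjugate range. That is very much in the spirit of Araki--Masuda's own approach, and the chain $\|\vec{\rho}\|^2\leq\langle\vec{\sigma}|\spat{\omega}{\sigma}^{2/p}\vec{\sigma}\rangle\leq\|\vec{\sigma}\|^{2(1-2/p)}\langle\vec{\sigma}|\spat{\omega}{\sigma}\vec{\sigma}\rangle^{2/p}\leq\|\vec{\sigma}\|^{2(1-2/p)}\|\vec{\omega}\|^{4/p}$ does rearrange to the claim; the $p=\infty$ and duality steps are also sound (note $\|\vec{\omega}\|_{\infty,\sigma}\leq 1$ forces $\omega\lll\sigma$, so $R^{\sigma}(\vec{\omega})$ exists, exactly as in the paper's treatment of $D_{\infty}$).

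Two small repairs. First, the ``Jensen'' step does not need the operator Jensen inequality of Appendix~\ref{app:hansen} at all (and invoking it with an unbounded $\spat{\omega}{\sigma}$ and no contraction in sight is awkward): writing $\mu(\mathrm{d}t)=\langle\vec{\sigma}|P(\mathrm{d}t)\vec{\sigma}\rangle$ for the spectral measure of $\spat{\omega}{\sigma}$, the inequality $\int t^{2/p}\mu(\mathrm{d}t)\leq\bigl(\int t\,\mu(\mathrm{d}t)\bigr)^{2/p}\mu([0,\infty))^{1-2/p}$ is just scalar H\"older/Jensen and is valid with values in $[0,\infty]$, which also sidesteps all domain questions. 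Second, your blanket reduction ``work in $\cH_{\sigma}$'' is not available when $\rho\not\ll\sigma$, which matters only for $1\leq p<2$ (for $p\geq 2$ a finite norm already forces $\rho\ll\sigma$): there $\vec{\rho}$ need not lie in $\cH_{\sigma}$, so the duality formula cannot be applied to it directly. This is fixed by the paper's remark that $\|\vec{\rho}\|_{p,\sigma}=\|\pi(P_{\sigma})\vec{\rho}\|_{p,\sigma}$ together with $\|\pi(P_{\sigma})\vec{\rho}\|\leq\|\vec{\rho}\|$ and $\langle\vec{\rho}|\vec{\omega}\rangle=\langle\pi(P_{\sigma})\vec{\rho}|\vec{\omega}\rangle$ for $\vec{\omega}\in\cH_{\sigma}$, after which your duality argument goes through verbatim.
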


\begin{proof}
This follows as in~\cite[Lem.~6.1]{araki82}.
\end{proof}

The final lemma is often of use in estimating the $L_p$-norm of a vector. 

\begin{lemma}\label{lem:modular_identity}
Let $\rho,\sigma \in\cP(\cM)$ and let $\cH$ be a Hilbert space supporting vectors $\vec{\rho}$ and $\vec{\sigma}$ implementing $\rho$ and $\sigma$, respectively. Then, we have for all $\vec{\omega}\in\cH$ that
\begin{align}
\left\|\spat{\omega}{\sigma}^{\frac{1}{2}-z}\spat{\rho}{\sigma}^{z}\vec{\sigma}\right\|=\left\|\spat{\rho}{\omega}^{z}\vec{\omega}\right\|\,,
\end{align}
with $0\leq\Re(z)\leq\frac{1}{2}$, and where $\omega$ is the positive functional on $\cM$ implemented by $\vec{\omega}$.
\end{lemma}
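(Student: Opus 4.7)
My plan is to reduce the identity to its endpoints and an intermediate real-parameter statement, and then to propagate equality throughout the strip using the modular-theoretic structure of the spatial derivatives.

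\textbf{Step 1 (endpoints).} By Lemma~\ref{lem:repindep} I may work in a single Hilbert space $\cH$ supporting implementations of all three functionals. At $z=0$, the LHS equals $\|\spat{\omega}{\sigma}^{1/2}\vec\sigma\|$, which by the defining quadratic form rewrites as $\|R^\sigma(\vec\sigma)^\dagger \vec\omega\|$. Since $R^\sigma(\vec\sigma)$ is a partial isometry whose range contains $\vec\omega$ (after projection by $P_\sigma$), this collapses to $\|\vec\omega\|$, matching the RHS. At $z=1/2$, both sides collapse to $\|\vec\rho\|$ via $\spat{\rho}{\sigma}^{1/2}\vec\sigma=\vec\rho$ and $\spat{\rho}{\omega}^{1/2}\vec\omega=\vec\rho$.

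\textbf{Step 2 (reduction to real $z$).} I next show that both norms depend only on $\Re z$. For the RHS this is immediate from the spectral theorem, since
\[
\bigl\|\spat{\rho}{\omega}^{z}\vec\omega\bigr\|^{2}=\bigl\langle\vec\omega,\spat{\rho}{\omega}^{2\Re z}\vec\omega\bigr\rangle,
\]
with the imaginary-part factor $\spat{\rho}{\omega}^{i\Im z}$ acting as an isometry on the support. For the LHS, expanding gives
\[
\bigl\|\spat{\omega}{\sigma}^{1/2-z}\spat{\rho}{\sigma}^{z}\vec\sigma\bigr\|^{2}=\bigl\langle\spat{\rho}{\sigma}^{z}\vec\sigma,\spat{\omega}{\sigma}^{1-2\Re z}\spat{\rho}{\sigma}^{z}\vec\sigma\bigr\rangle,
\]
and the $\Im z$-dependence should cancel by invoking the Connes cocycle $\spat{\rho}{\sigma}^{it}\spat{\sigma}{\sigma}^{-it}=[D\rho\!:\!D\sigma]_{t}\in\cM$ and commuting this unitary through the sandwiching by $\spat{\omega}{\sigma}^{1-2\Re z}$ using the defining relation between spatial derivatives and the commutant.

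\textbf{Step 3 (real-parameter identity).} Having reduced to the case $z=t/2$ with $t\in[0,1]$, the claim becomes
\[
\bigl\langle\spat{\rho}{\sigma}^{t/2}\vec\sigma,\spat{\omega}{\sigma}^{1-t}\spat{\rho}{\sigma}^{t/2}\vec\sigma\bigr\rangle=\bigl\langle\vec\omega,\spat{\rho}{\omega}^{t}\vec\omega\bigr\rangle,
\]
an equality between two holomorphic functions of $t$ (extending into a complex strip via Lemma~\ref{lem:complexpower}) that already agree at $t=0$ and $t=1$ by Step~1. To conclude, I would invoke the chain rule for spatial derivatives --- the algebraic counterpart of $\spat{\rho}{\sigma}=\spat{\rho}{\omega}\spat{\omega}{\sigma}$ up to appropriate Connes cocycles in $\cM$ --- to identify both sides with the common Mellin-type integral $\int\lambda^{t}\,d\mu(\lambda)$ against the spectral measure of $\spat{\rho}{\omega}$ at $\vec\omega$.

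\textbf{Main obstacle.} The decisive difficulty is the chain-rule manipulation in Step~3. In finite dimensions the identity reduces to cyclicity of the trace, namely $\tr D_{\omega}^{1-t}D_{\rho}^{t}=\tr D_{\omega}^{(1-t)/2}D_{\rho}^{t}D_{\omega}^{(1-t)/2}$; algebraically, the non-commutativity of the spatial derivatives forces one to carry along a unitary Connes-cocycle correction and to propagate it through the spectral calculus on unbounded operators. Making this rigorous --- including the required domain arguments and the holomorphicity of the cocycle in the parameter --- is where the bulk of the Tomita-Takesaki machinery enters, and is the real content of the lemma.
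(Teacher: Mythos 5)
The core of your argument is not established: everything hinges on Step~2's claim that the $\Im z$-dependence of the left-hand side ``should cancel by invoking the Connes cocycle'' and on Step~3's ``chain rule'' $\spat{\rho}{\sigma}=\spat{\rho}{\omega}\spat{\omega}{\sigma}$ up to cocycles, and neither of these is a valid operator identity in the form you use it. The cocycle relations hold only for purely imaginary powers, e.g.\ $\spat{\rho}{\sigma}^{it}\spat{\sigma}{\sigma}^{-it}=u_t\in\cM$, and the unitary $u_t$ does \emph{not} simply commute through the sandwich $\spat{\omega}{\sigma}^{1-2\Re z}$: powers of a spatial derivative are affiliated with neither $\cM$ nor $\cM'$, and conjugation by $\spat{\omega}{\sigma}^{it}$ implements a modular flow rather than the identity, while for real powers there is no conjugation identity at all (domain issues aside). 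Likewise the factorization $\spat{\rho}{\sigma}=\spat{\rho}{\omega}\spat{\omega}{\sigma}$ fails already for non-commuting density matrices in finite dimensions; what survives is only the cocycle chain rule for the unitaries $\spat{\cdot}{\cdot}^{it}$, and upgrading that to the closed strip $0\leq\Re z\leq\frac12$ by analytic continuation with the right domain and boundary (KMS-type) estimates is precisely the content of the lemma. Your own ``Main obstacle'' paragraph concedes this, so the proposal is a plan that defers exactly the step that constitutes the proof. Note also that agreement of two bounded holomorphic functions at the two endpoints $z=0,\frac12$ (Step~1) determines nothing, so the endpoint computation buys no leverage; and even those endpoints need care, since at $z=0$ both sides are norms of projections of $\vec{\omega}$ (onto the range of $R^{\sigma}(\vec{\sigma})R^{\sigma}(\vec{\sigma})^{\dag}$, resp.\ the support of $\spat{\rho}{\omega}$), not $\|\vec{\omega}\|$ in general.

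For comparison, the paper does not reprove the identity at all: it observes that the statement follows as in Lemma~C.2 of Araki--Masuda~\cite{araki82} (with the roles of $\cM$ and $\cM'$ interchanged relative to their conventions), whose proof is exactly the analytic-continuation argument on matrix elements of products of imaginary powers, using the cocycle identities and boundedness on the strip, that your Steps~2--3 would need to reconstruct. Your finite-dimensional picture, $\tr\,D_{\omega}^{1-2\Re z}D_{\rho}^{2\Re z}$ on both sides via cyclicity of the trace, is the right intuition, but to make the submission complete you must either carry out that continuation argument in full (with the domain statements supplied by Lemma~\ref{lem:complexpower}) or cite the Araki--Masuda lemma directly, as the paper does.
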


\begin{proof}
This follows as in~\cite[Lem.~C.2]{araki82}.
\end{proof}


\section{Operator Jensen for unbounded operators}
\label{app:hansen}

\begin{lemma}
  Let $A$ be a closed positive operator on a Hilbert space $\cH$, $\psi \in \cH$, and $V$ a contraction on $\cH$ such that $V\psi \in D(A)$. Then, for $0<t<1$, we have that $D((V^* A V)^{t/2}) \subset D((V^* A^{t} V)^{\half})$ and
  \begin{align}
    \Scp{\psi}{V^* A^{t} V \psi} \leq \Scp{\psi}{(V^* A V)^{t} \psi} \,.
  \end{align}
\end{lemma}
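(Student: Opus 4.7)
The plan is to reduce the claim to the bounded operator case via the Löwner integral representation
\begin{align*}
\lambda^t = \frac{\sin \pi t}{\pi}\int_0^\infty \mu^{t-1}\frac{\lambda}{\lambda+\mu}\,d\mu, \qquad 0<t<1,\ \lambda\geq 0.
\end{align*}
By the spectral theorem and Fubini this lifts, for any positive self-adjoint operator $B$ on $\cH$ and any $\phi\in\cH$, to the identity
\begin{align*}
\Scp{\phi}{B^t\phi} \;=\; \frac{\sin\pi t}{\pi}\int_0^\infty \mu^{t-1}\Scp{\phi}{B(B+\mu)^{-1}\phi}\,d\mu,
\end{align*}
where both sides lie in $[0,\infty]$ and are always equal, and the integrand is bounded since $B(B+\mu)^{-1}=\id-\mu(B+\mu)^{-1}$. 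I would apply this once with $B=A$ and $\phi=V\psi$, and once with $B=V^*AV$ and $\phi=\psi$, and then compare the two integrands pointwise in $\mu$.

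The core pointwise step is to establish the bounded operator inequality
\begin{align*}
V^*A(A+\mu)^{-1}V \;\leq\; (V^*AV)(V^*AV+\mu)^{-1}, \qquad \mu>0.
\end{align*}
When $A$ is bounded this is the classical Hansen--Jensen inequality applied to the operator concave function $f_\mu(\lambda)=\lambda/(\lambda+\mu)$ satisfying $f_\mu(0)=0$, with $V$ a contraction. To reach the unbounded case I would truncate by the spectral projections $A_n:=A\,E_A([0,n])$, apply the bounded case to each $A_n$, and pass to the limit. On the left the convergence is immediate: $A_n(A_n+\mu)^{-1}\to A(A+\mu)^{-1}$ strongly by the spectral theorem and dominated convergence. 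On the right I would invoke Kato's monotone convergence theorem for closed quadratic forms with the forms $q_n(\xi):=\|A_n^{\half}V\xi\|^2$, which are monotonically increasing with pointwise supremum equal to the closed form $\xi\mapsto\|A^{\half}V\xi\|^2$ defining $V^*AV$; this yields strong resolvent convergence $V^*A_nV\to V^*AV$, hence strong convergence of $(V^*A_nV)(V^*A_nV+\mu)^{-1}$, and the inequality persists in the limit.

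With the pointwise bound in hand, integrating against $(\sin\pi t/\pi)\mu^{t-1}d\mu$ and using the integral representation on both sides yields
\begin{align*}
\|A^{t/2}V\psi\|^2 \;=\; \Scp{V\psi}{A^tV\psi} \;\leq\; \Scp{\psi}{(V^*AV)^t\psi},
\end{align*}
which is the asserted inequality. Moreover, for any $\psi\in\dom((V^*AV)^{t/2})$ the right-hand side is finite, forcing $V\psi\in\dom(A^{t/2})$, which is precisely the statement $\psi\in\dom((V^*A^tV)^{\half})$; this establishes the domain inclusion in full generality. The hypothesis $V\psi\in\dom(A)\subset\dom(A^{t/2})$ then guarantees that both sides are in fact finite numbers for the specific $\psi$ in the lemma. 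The main obstacle will be the limiting argument in the second paragraph, specifically the identification of $\sup_n q_n$ with the closed form of $V^*AV$ and the subsequent strong resolvent convergence; this is the one place where the unboundedness of $A$ has real bite, and where Kato's monotone form convergence theorem must be invoked rather than a naive functional-calculus limit.
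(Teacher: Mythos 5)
Your argument is correct, and it reaches the conclusion by a genuinely different route than the proof in the paper. Both arguments truncate $A$ by its spectral projections and reduce to the bounded Hansen--Pedersen operator Jensen inequality for a contraction $V$ and an operator concave function vanishing at $0$; the difference is which function is used and how the truncation is removed on the right-hand side. The paper applies the bounded inequality directly to $\lambda\mapsto\lambda^t$, writes the truncated operator as $V^*A^{\half}E_\gamma A^{\half}V$, and compares its $t/2$-power with $(V^*AV)^{t/2}$ via Lemma D of Araki and Masuda (a monotonicity statement for powers of $B^*B$ under $\|B\xi\|\le\|C\xi\|$), before letting $\gamma\to\infty$ on the left. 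You instead apply the bounded inequality to the resolvent functions $f_\mu(\lambda)=\lambda(\lambda+\mu)^{-1}$, remove the truncation through Kato's monotone convergence theorem for the increasing closed forms $q_n(\xi)=\|A_n^{\half}V\xi\|^2\nearrow\|A^{\half}V\xi\|^2$, and resum with the L\"owner representation and Tonelli. What your route buys: it bypasses Araki--Masuda's Lemma D, and it delivers the inequality as a form inequality $\|A^{t/2}V\psi\|^2\le\|(V^*AV)^{t/2}\psi\|^2$ with values in $[0,\infty]$ for \emph{every} $\psi\in\cH$, from which the domain inclusion is immediate; the cost is the integral-representation bookkeeping and the form-convergence theorem, which is indeed the one nontrivial limiting step, exactly as you flag. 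One point to make explicit there: if $\{\xi:V\xi\in\dom(A^{\half})\}$ fails to be dense, the limit form is not densely defined and you need the generalized (Davies/Simon) version of monotone form convergence, in which $(V^*A_nV+\mu)^{-1}$ converges strongly to $(V^*AV+\mu)^{-1}$ composed with the projection $P$ onto the closure of the form domain; the complementary piece only adds the nonnegative term $\|\psi-P\psi\|^2$ to the right-hand side, so the pointwise inequality, and after integration the conclusion, are unaffected --- this is the same convention that is already implicit in writing $(V^*AV)^{t/2}$ in the statement of the lemma.
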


\begin{proof}
  We first note that if $V\psi \in D(A)$ then also $V\psi \in D(A^t)$ for $0<t<1$.
  Let $E_\gamma = \int_{-\gamma}^\gamma P(\mathrm{d}t)$ be a spectral projection of $A = \int_{-\infty}^\infty t P(\mathrm{d}t)$ such that $E_{\gamma}AE_{\gamma} = AE_{\gamma}$ is bounded. Hence it holds that
  \begin{align}
    \label{eq:jensenopeq1}    
    \Scp{V\psi}{(AE_{\gamma})^{t} V \psi} \leq \Scp{\psi}{(V^* A^\half E_{\gamma} A^\half V)^{t} \psi} = \norm{(V^* A^\half E_{\gamma} A^\half V)^{t/2} \psi}
  \end{align}
  Consider now the two operators $V^* A^\half E_{\gamma} A^\half V$ and $V^* A V$. Since $E_{\gamma} \leq \id$, we have
  \begin{align}
    \norm{E_{\gamma} A^\half V \xi} \leq \norm{A^\half V \xi}\,,
  \end{align}
  for all $\xi \in D(A^\half V)$. Moreover, we naturally have $D(A^\half V) \subset D(E_{\gamma} A^\half V)$. We then apply~\cite[Lemma D]{araki82} to get
  \begin{align}
    \label{eq:jensenopeq2}
    \norm{(V^* A^\half E_{\gamma} A^\half V)^{t/2} \psi} \leq \norm{(V^* A^\half A^\half V)^{t/2} \psi} = \norm{(V^* A V)^{t/2} \psi}\,.
  \end{align}
  Since spectral projections commute with applying functions to an operator, we also have that
  \begin{align}
    \label{eq:jensenopeq3}
    \lim_{\gamma \to 1} \Scp{V\psi}{(AE_\gamma)^{t} V \psi} = \lim_{\gamma \to 1} \Scp{E_{\gamma}V\psi}{(A)^{t} E_{\gamma} V \psi} = \Scp{V\psi}{(A)^{t} V \psi}\,.
  \end{align}
   Inserting now Eqs.~\eqref{eq:jensenopeq2} and~\eqref{eq:jensenopeq3} in Eq.~\eqref{eq:jensenopeq1} proves the assertion.
\end{proof}


\bibliographystyle{alphaarxiv}
\bibliography{library}

\end{document}